
\documentclass[10pt, twocolumn, journal]{IEEEtran}%
\ifCLASSINFOpdf
\else
\fi
\hyphenation{op-tical net-works semi-conduc-tor}

\usepackage{amsmath,graphicx}
\newcommand{\mb}[1]{\mathbf{#1}}
\newcommand{\tr}{\mathop{\mathrm{tr}}}

\usepackage{braket}
\usepackage{mathtools}
\usepackage{amsmath}
\usepackage{amssymb}
\usepackage{amsfonts}
\usepackage{lipsum}
\usepackage{mathtools}
\usepackage{lscape}
\usepackage{multirow}
\usepackage{here}
\usepackage{bm}
\usepackage{cuted}
\usepackage{adjustbox}
\usepackage{cite}

\usepackage{makeidx}

\usepackage{setspace}

\usepackage{color}

\usepackage{here}
\usepackage{subfigure}
\usepackage{amsthm}
\newtheorem{definition}{Definition}
\newtheorem{theorem}{Theorem}

\newtheorem{corollary}{Corollary}
\newtheorem{lemma}{Lemma}

\theoremstyle{remark}

\usepackage{hhline}

\usepackage{xcolor,colortbl}

\definecolor{Gray}{gray}{0.9}
\definecolor{LightCyan}{rgb}{0.88,1,1}
\definecolor{LightMagenta}{rgb}{1, 0.88, 1}
\definecolor{LightBlue}{rgb}{0.67, 0.85, 0.9}

\newcolumntype{A}{>{\columncolor{LightCyan}}c}
\newcolumntype{B}{>{\columncolor{LightMagenta}}c}
\newcolumntype{C}{>{\columncolor{LightBlue}}c}
\newcolumntype{D}{>{\columncolor{LightCyan}}r}
\newcolumntype{E}{>{\columncolor{LightMagenta}}l}
\newcolumntype{F}{>{\columncolor{Gray}}c}

\usepackage{array}     
\usepackage{longtable}
\usepackage{colortab}
\usepackage{colortbl}

\usepackage{booktabs}%

\makeatletter
\newcommand{\pushright}[1]{\ifmeasuring@#1\else\omit\hfill$\displaystyle#1$\fi\ignorespaces}
\newcommand{\pushleft}[1]{\ifmeasuring@#1\else\omit$\displaystyle#1$\hfill\fi\ignorespaces}

\makeatletter
\newcommand*\bigcdot{\mathpalette\bigcdot@{.5}}
\newcommand*\bigcdot@[2]{\mathbin{\vcenter{\hbox{\scalebox{#2}{$\m@th#1\bullet$}}}}}
\makeatother

\makeatletter
\newcommand*{\itemequation}[3][]{%
  \item
  \begingroup
    \refstepcounter{equation}%
    \ifx\\#1\\%
    \else  
      \label{#1}%
    \fi
    \sbox0{#2}%
    \sbox2{$\displaystyle#3\m@th$}%
    \sbox4{\@eqnnum}%
    \dimen@=.5\dimexpr\linewidth-\wd2\relax
    \ifcase
        \ifdim\wd0>\dimen@
          \z@
        \else
          \ifdim\wd4>\dimen@
            \z@
          \else 
            \@ne
          \fi 
        \fi
      \@latex@warning{Equation is too large}%
    \fi
    \noindent   
    \rlap{\copy0}%
    \rlap{\hbox to \linewidth{\copy2\hfill}}%
    \hbox to \linewidth{\hfill\copy4}%
    \hspace{0pt}%
  \endgroup
  \ignorespaces 
}
\makeatother

\makeatletter
\newcommand*{\nonumitemequation}[3][]{%
  \item[]
  \begingroup
    \refstepcounter{equation}%
    \ifx\\#1\\%
    \else  
      \label{#1}%
    \fi
    \sbox0{#2}%
    \sbox2{$\displaystyle#3\m@th$}%
    \sbox4{\@eqnnum}%
    \dimen@=.5\dimexpr\linewidth-\wd2\relax
    \ifcase
        \ifdim\wd0>\dimen@
          \z@
        \else
          \ifdim\wd4>\dimen@
            \z@
          \else 
            \@ne
          \fi 
        \fi
      \@latex@warning{Equation is too large}%
    \fi
    \noindent   
    \rlap{\copy0}%
    \rlap{\hbox to \linewidth{\copy2\hfill}}%
    \hbox to \linewidth{\hfill\copy4}%
    \hspace{0pt}%
  \endgroup
  \ignorespaces 
}
\makeatother

\title{Graph Signal Sampling Under Stochastic Priors}

\author{Junya~Hara,~\IEEEmembership{Student Member,~IEEE,}
        Yuichi~Tanaka,~\IEEEmembership{Senior Member,~IEEE,}
        and Yonina~C.~Eldar,~\IEEEmembership{Fellow,~IEEE}%
\thanks{Preliminary results of this work was presented in \cite{hara_generalized_2020}.}
\thanks{JH and YT are supported in part by JST PRESTO (JPMJPR1656) and JSPS KAKENHI (19K22864). YE has received funding from European Research Council (ERC) under the European Union's Horizon 2020 research and innovation program (grants no. $278025{,}677909{,}681514$).}}
\markboth{IEEE Transactions on signal processing,~Vol.~XX, No.~XX, XX~2021}%
{Hara \MakeLowercase{\textit{et al.}}: Graph Signal Sampling Under Stochastic Priors}

\begin{document}

\maketitle
\IEEEpeerreviewmaketitle

\begin{abstract}

We propose a generalized sampling framework for stochastic graph signals. Stochastic graph signals are characterized by
graph wide sense stationarity (GWSS) which is an extension of wide sense stationarity (WSS) for standard time-domain signals.
In this paper, graph signals are assumed to satisfy the GWSS conditions and we study their sampling as well as recovery procedures.
In generalized sampling, a correction filter is inserted between sampling and reconstruction operators to compensate for non-ideal measurements. 
We propose a design method for the correction filters to reduce the mean-squared error (MSE) between original and reconstructed graph signals.
We derive the correction filters for two cases: The reconstruction filter is arbitrarily chosen or predefined.
The proposed framework allows for arbitrary sampling methods, i.e., sampling in the vertex or graph frequency domain.
We also show that the graph spectral response of the resulting correction filter parallels that for generalized sampling for WSS signals if sampling is performed in the graph frequency domain.
Furthermore, we reveal the theoretical connection between the proposed  and existing correction filters.
The effectiveness of our approach is validated via experiments by comparing its MSE with existing approaches.

\end{abstract}

\begin{IEEEkeywords}
Generalized sampling theory, stochastic prior, graph wide sense stationarity, graph Wiener filter.
\end{IEEEkeywords}

\section{Introduction}\label{sec:intro}
Graph signal processing (GSP) is a developing field in signal processing \cite{shuman_emerging_2013,ortega_graph_2018}.
Applications of GSP are extensive, including learning of graphs\cite{dong_learning_2016,kalofolias_how_nodate,egilmez_graph_2018,yamada_time-varying_2020,dong_learning_2019}, restoration of graph signals\cite{ono_total_2015,onuki_graph_2016,nagahama_graph_2021,chen_sampling_2020}, image/point cloud processing\cite{cheung_graph_2018,wang_dynamic_2019}, and graph neural networks\cite{bronstein_geometric_2017}.
Recent interest in GSP is to extend classical signal processing theory to the graph setting \cite{agaskar_spectral_2013,tanaka_generalized_2020,sandryhaila_big_2014,narang_perfect_2012,shuman_multiscale_2016}.
One of the main differences between standard signal processing and GSP is that GSP systems are not shift-invariant (SI) in general.
This leads to the challenge that GSP systems may have different definitions in the vertex and spectral (graph frequency) domains: They do not coincide in general.
Such examples include sampling \cite{tanaka_sampling_2020,anis_efficient_2016,chen_discrete_2015,marques_sampling_2016,puy_random_2018,tsitsvero_signals_2016,sakiyama_eigendecomposition-free_2019,chamon_greedy_2018,tanaka_spectral_2018}, translation \cite{shuman_vertex-frequency_2016,sandryhaila_discrete_2014,girault_stationary_2015,dees_unitary_2019,gavili_shift_2017}, and filtering \cite{shi_graph_2019}.

In this paper, we focus on graph signal sampling.
Sampling theory for graph signals has been widely studied in GSP \cite{pesenson_sampling_2008,tanaka_sampling_2020,anis_efficient_2016,chen_discrete_2015,marques_sampling_2016,puy_random_2018,wang_-optimal_2018,tsitsvero_signals_2016,sakiyama_eigendecomposition-free_2019,chamon_greedy_2018}. 
There are many promising applications of graph signal sampling, such as sensor placement \cite{sakiyama_eigendecomposition-free_2019}, filter bank designs\cite{hammond_wavelets_2011,narang_compact_2013,tanaka_m_2014,sakiyama_two-channel_2019}, traffic monitoring \cite{chen_monitoring_2016}, and semi-supervised learning\cite{gadde_active_2014,anis_sampling_2019}.
Most works focus on building parallels of the Shannon-Nyquist theorem \cite{jerri_shannon_1977,unser_sampling50_2000} and its generalizations \cite{eldar_sampling_2015,eldar_beyond_2009} to the graph setting.
Therefore, sampling of bandlimited graph signals has been widely studied  \cite{anis_efficient_2016,chen_discrete_2015,tsitsvero_signals_2016,chamon_greedy_2018}.
Other graph signal subspaces are also useful for practical applications: For example, piecewise smooth graph signals and periodic graph spectrum signals have been considered \cite{li_scalable_2019,chen_multiresolution_2018,varma_passive_2019,chen_representations_2016}.

Since sampling of deterministic signals has been well studied in the literature of sampling in Hilbert space\cite{eldar_genral_2005,christensen_oblique_2004,eldar_beyond_2009,eldar_sampling_2015}, we can immediately derive its GSP counterpart.
In contrast, sampling of \textit{random graph signals} has not been considered so far because existing (generalized) sampling methods for graph signals have mainly focused on deterministic signal models \cite{tanaka_sampling_2020,anis_efficient_2016,chen_discrete_2015,marques_sampling_2016,puy_random_2018,tsitsvero_signals_2016,sakiyama_eigendecomposition-free_2019,chamon_greedy_2018,tanaka_generalized_2020,chepuri_graph_2018}.

In this paper, we consider a graph signal sampling framework for random graph signals.
Random graph signals are modeled by graph wide sense stationarity (GWSS), which is a counterpart of wide sense stationarity (WSS) of standard signals.
WSS is characterized by the statistical moments, i.e., mean and covariance, that are invariant to shift.
As previously mentioned, graph signals do not lie in SI spaces in general.
Hence, existing definitions of GWSS are based on the spectral characteristics of signals\cite{girault_stationary_2015,perraudin_stationary_2017,marques_stationary_2017}, which are extensions of the power spectral density (PSD) of the standard WSS.
However, existing definitions of GWSS require additional assumptions that are not necessary for WSS.
Therefore, we first define GWSS as a natural extension of WSS based on signal modulation.

Subsequently, we develop a generalized sampling framework to best recover GWSS signals.
Our framework 
parallels generalized sampling of SI signals \cite{eldar_nonideal_2006}.
It consists of sampling, correction, and reconstruction transforms.
The correction transform is inserted between the sampling and reconstruction transforms to compensate for non-ideal measurements.
We derive the correction transform that minimizes the mean-squared error (MSE). 
Our framework can be applied to any sampling method that is linear.
In other words, both vertex and graph frequency domain sampling methods\cite{anis_efficient_2016,chen_discrete_2015,tsitsvero_signals_2016,chamon_greedy_2018,sakiyama_eigendecomposition-free_2019,tanaka_spectral_2018} are applicable without changing the framework.

Existing works of (generalized) sampling theory on graphs  \cite{chepuri_graph_2018,tanaka_generalized_2020} have been studied only for deterministic signal models.
In contrast, our generalized sampling is designed for random graph signals as described in Table \ref{tab:field_gen_samp}.
Interestingly, our solution parallels that in the SI setting \cite{eldar_nonideal_2006} when sampling is performed in the graph frequency domain.
Moreover, we reveal that the existing signal recovery under different priors \cite{tanaka_generalized_2020} is special cases of our proposed recovery.
Experiments for synthetic signals validate that our proposed recovery is effective for stochastic graph signals with sampling in both vertex and spectral domains.

The remainder of this paper is organized as follows. Section \ref{sec:sampling} reviews generalized sampling for time-domain signals. For WSS signals, we introduce the standard  Wiener filter. In Section \ref{sec:samp_stationarity}, generalized sampling framework for stochastic graph signals is introduced. In addition, GWSS is defined as an extension of WSS. Section \ref{sec:graph_wiener} derives graph Wiener filters for recovery of a GWSS process based on the minimization of the MSE between the original and reconstructed graph signals. In the special case of graph frequency domain sampling, the graph frequency response parallels that of sampling in SI spaces. In Section \ref{sec:relation_priors}, we discuss the relationship among the proposed Wiener filter and existing generalized graph signal sampling. Signal recovery experiments for synthetic and real-world signals are demonstrated in Section \ref{sec:experiments}. Section \ref{sec:conclusion} concludes the paper.

\textit{Notation:} We consider a weighted undirected graph $\mathcal{G}=(\mathcal{V,E})$, where $\cal V$ and $\cal E$ represent the sets of vertices and edges, respectively. The number of vertices is $N = |\mathcal{V}|$ unless otherwise specified. The adjacency matrix of $\mathcal{G}$ is denoted by $\mathbf{A}$ where its $(m,n)$-element $a_{mn}\geq 0$ is the edge weight between the $m$th and $n$th vertices; $a_{mn}=0$ for unconnected vertices. The degree matrix $\mathbf{D}$ is defined as $\mathbf{D}=\text{diag}\,(d_{0},d_{1},\ldots,d_{N-1})$, where $d_{m}=\sum_na_{mn}$ is the $m$th diagonal element. We use a graph Laplacian $ \mathbf{L}\coloneqq \mathbf{D}-\mathbf{A}$ as a graph variation operator. A graph signal $\bm{x} \in \mathbb{C}^N$ is defined as a mapping from the vertex set to the set of complex numbers, i.e., $\bm{x}:\mathcal{V}\rightarrow \mathbb{C}$.

\begin{table}[t!]
\centering
\caption{Comparison of generalized graph signal sampling. }\label{tab:field_gen_samp}
\begin{tabular}{c||c|c}
\hline
Signal priors  & \multicolumn{1}{c}{SI spaces} & \multicolumn{1}{|c}{Graph subspaces} \\ \hhline{===}
Subspace   & \multicolumn{1}{c}{\cite{eldar_sampling_2015}} & \multicolumn{1}{|c}{\cite{chepuri_graph_2018,tanaka_generalized_2020}} \\
Smootheness  & \multicolumn{1}{c}{\cite{eldar_sampling_2015}} & \multicolumn{1}{|c}{\cite{chepuri_graph_2018,tanaka_generalized_2020}} \\
Stochatic  & \multicolumn{1}{c}{\cite{eldar_nonideal_2006}} & \multicolumn{1}{|c}{\textbf{This work}}  \\ \hline
\end{tabular}
\end{table}

The graph Fourier transform (GFT) of $\bm{x}$ is defined as
\begin{align}
\hat{x}(\lambda_i)=\braket{\bm{u}_{i},\bm{x}}=\sum_{n=0}^{N-1}u_{i}[n]x[n],
\end{align}
where $\bm{u}_{i}$ is the $i$th column of a unitary matrix $\mathbf{U}$ and it is obtained by the eigenvalue decomposition of the graph Laplacian $\mathbf{L}=\mathbf{U\Lambda U}^*$ with the eigenvalue matrix $\bm{\Lambda}=\mathrm{diag}\,(\lambda_0,\lambda_1,$ $\ldots,\lambda_{N-1})$.
We refer to $\lambda_i$ as a \textit{graph frequency}. 
We let $[\cdot]_{n,k}$ and $(\cdot)_{n}$ denote the ($n,k$)-element in the matrix and $n$-th element in the vector, respectively, $(*)$, $\braket{\cdot,\cdot}$ and $\tr(\cdot)$ denote the convolution, the inner product between two vectors and trace of a matrix, respectively, and $(\circ)$ represents the Hadamard (elementwise) product.

\begin{figure*}[t!]
\centering
  \includegraphics[width=1.8\columnwidth]{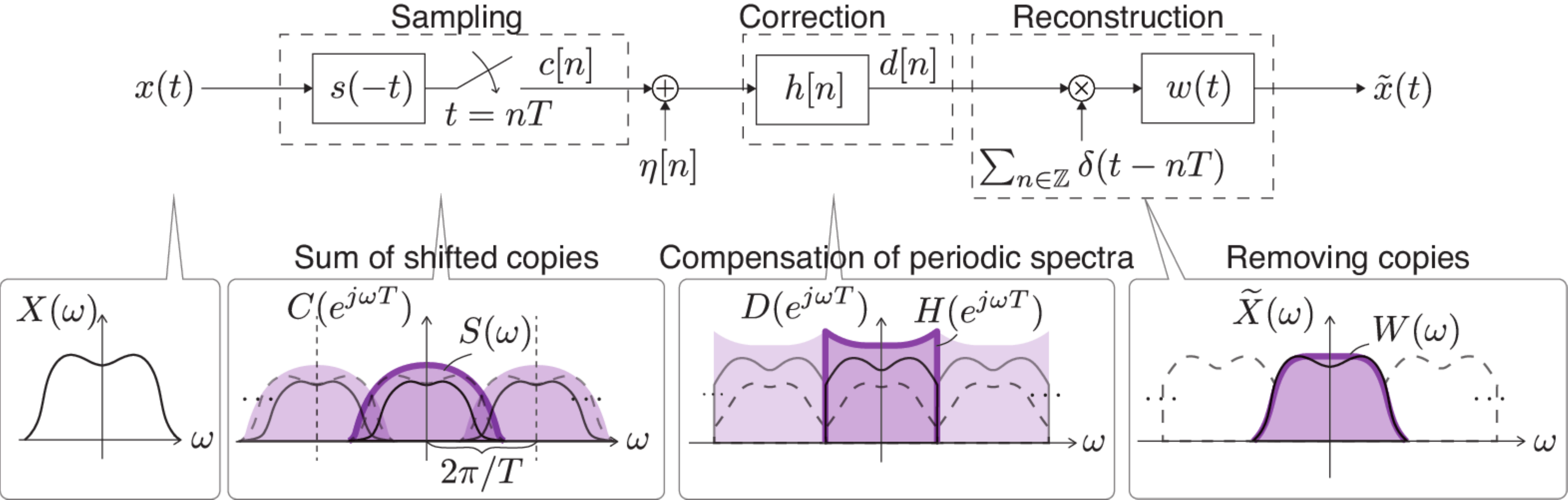}\vspace{-1em}
  \caption{Generalized sampling framework in SI spaces and its counterpart in the Fourier domain. Dashed-lined and solid-lined areas in the bottom represent input and filtered spectra, respectively. Colored areas represent spectral responses of each filter.}\label{framework_si}
\end{figure*}

For time-domain signals, $x(t)$ and $x[n]$ denote a continuous-time signal and discrete-time signal, respectively. The continuous-time Fourier transform (CTFT) of a signal $x(t)\in L_2$ is denoted by $X(\omega)$ and the discrete-time Fourier transform (DTFT) of a sequence $x[n]\in\ell_2$ is denoted by $X(e^{j\omega})$.

\section{Generalized sampling in SI spaces}\label{sec:sampling}

Generalized sampling for stochastic standard signals is reviewed in this section \cite{eldar_nonideal_2006}. 
Detailed derivations and discussions can be found in \cite{eldar_nonideal_2006,eldar_sampling_2015}.
The Wiener filter introduced in this section parallels that in our graph signal sampling framework.

\subsection{Sampling and recovery framework for time-domain signals}\label{sec:gen_sampling}

We first review standard generalized sampling for time-domain signals \cite{eldar_sampling_2015}. The framework is depicted in Fig. \ref{framework_si}. 

Let $x(t)$ be a continuous-time signal and $\eta[n]$ be stationary noise. We consider samples $c[n]$ at $t=nT$ of a filtered signal of $x(t)$, $c(t)=x(t)*s(-t)$
, where $s(-t)$ denotes a sampling filter. The DTFT of the samples, $C(e^{j\omega})$, can be written as
\begin{align}
C(e^{j\omega})=\frac{1}{T}\sum_{k\in\mathbb{Z}}S^*\left(\frac{\omega-2\pi k}{T}\right)X\left(\frac{\omega-2\pi k}{T}\right),\label{eq:std_freq_samp}
\end{align}
where $S(\omega)$ is the CTFT of $s(t)\in L_2$. 
The measured samples are corrupted by noise and are given by
\begin{align}
y[n]=\braket{s(t-nT),x(t)}+\eta[n].
\end{align}
The reconstructed signal is constrained to lie in a SI space $\mathcal{W}$, spanned by the shifts of a reconstruction kernel $w(t)\in L_2$, i.e., 
\begin{align}
\tilde{x}(t)=\sum_{n\in\mathbb{Z}}d[n]w(t-nT),\label{eq:std_time_samp}
\end{align}
where $d[n]\in \ell_2$ are unknown expansion coefficients, which yield the best possible recovery $\tilde{x}(t)$. 
To obtain an appropriate $d[n]$, we apply a digital correction filter $h[n]$ to $y[n]$, i.e., $d[n]=(h*y)[n]$. 
The CTFT of $\tilde{x}(t)$, $\widetilde{X}(\omega)$, can be expressed as
\begin{align}
\widetilde{X}(\omega)=D(e^{j\omega T})W(\omega),\label{eq:std_freq_reconst}
\end{align}
where $D(e^{j\omega})$ is the DTFT of $d[n]$ and $W(\omega)$ is the CTFT of $w(t)$.

In practice, sampling and reconstruction filters, i.e., $s(t)$ and $w(t)$, may be predefined prior to sampling based on technical requirements. For generalized sampling, instead, we assume that we can design the correction filter $h[n]$ freely.
The best correction filter $h[n]$ is designed such that the input signal $x(t)$ and the reconstructed signal $\tilde{x}(t)$ are close enough in some metric.
This is a widely-accepted sampling framework for time-domain signals and we follow it here for the graph setting \cite{eldar_minimum_2006,eldar_nonideal_2006}.

\begin{figure*}[t!]
\centering
  \includegraphics[width=1.5\columnwidth]{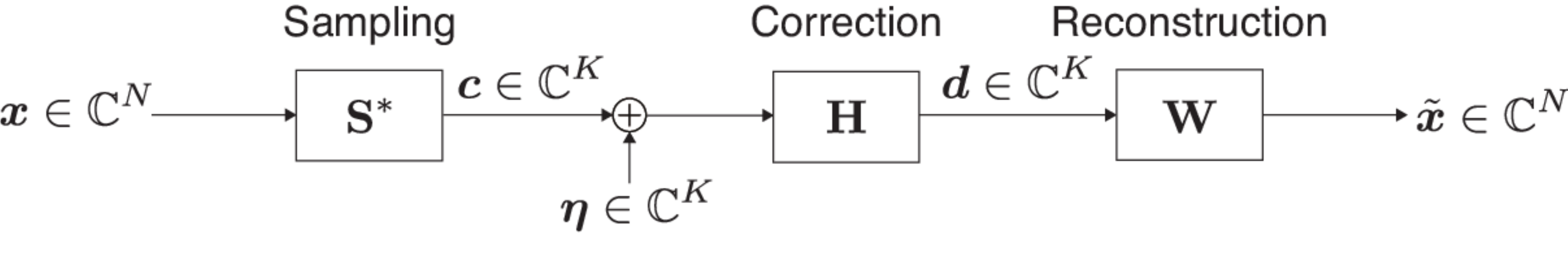}
  \caption{Generalized sampling framework for graph signals.}\label{gene_corrupt_filter_res}
\end{figure*}

\subsection{Wiener filter}

Next, suppose that $x(t)$ is a zero-mean WSS process with known power spectral density (PSD) $\Gamma_x(\omega)$ and $\eta[n]$ is a zero-mean WSS noise process with known PSD $\Gamma_\eta(e^{j\omega})$, independent of $x(t)$.
Detailed definitions of WSS are given in Appendix \ref{sec:stationarity}.

The optimal correction filter proposed in \cite{eldar_nonideal_2006} is obtained by solving the following problem:
\begin{align}
\min_{h[n]}\mathbb{E}\left\{\left|\tilde{x}(t)-P_{\mathcal{W}}x(t)\right|^2\right\},\label{eq:worst-case}
\end{align}
where $P_\mathcal{W}$ is an orthogonal projection onto the reconstruction subspace $\mathcal{W}$.
Since the recovered signal $\tilde{x}(t)$ is constrained by \eqref{eq:std_time_samp} to lie in $\mathcal{W}$, we want it to best approximate the orthogonal projection of $x(t)$ onto that same space $\mathcal{W}$ \cite{eldar_nonideal_2006}.
The frequency response of the solution of \eqref{eq:worst-case} is given by \cite{eldar_nonideal_2006}.

\begin{align}
&H_{\text{PRE}}(e^{j\omega})\nonumber \\
&=\frac{\sum_{k\in\mathbb{Z}}\Gamma_x(\frac{\omega}{T}+\frac{2\pi k}{T})S(\frac{\omega}{T}+\frac{2\pi k}{T})W^*(\frac{\omega}{T}+\frac{2\pi k}{T})}{R_{W}(e^{j\omega})\left(\Gamma_\eta(e^{j\omega})+\sum_{k\in\mathbb{Z}}\Gamma_x(\frac{\omega}{T}+\frac{2\pi k}{T})\left|S(\frac{\omega}{T}+\frac{2\pi k}{T})\right|^2\right)},\label{eq:stdwienercons}
\end{align}
where $R_{W}(e^{j\omega})=\sum_{k\in\mathbb{Z}}|W(\frac{\omega}{T}+\frac{2\pi k}{T})|^2$.

When the reconstruction filter is unconstrained, we can optimize the correction filter jointly with the reconstruction filter.
The optimal filter in the unconstrained setting is a special case of \eqref{eq:stdwienercons}, where $w(t)$ is optimally chosen. 
In order to obtain the unconstrained solution, we thus solve \eqref{eq:worst-case} with respect to $w(t)$ by substituting \eqref{eq:stdwienercons} for \eqref{eq:worst-case} \cite{eldar_sampling_2015}.
The resulting optimal correction filter in the unconstrained case is given by
\begin{align}
H_{\text{UNC}}(e^{j\omega})=\frac{1}{\Gamma_\eta(e^{j\omega})+\sum_{k\in\mathbb{Z}}\Gamma_x(\frac{\omega}{T}+\frac{2\pi k}{T})\left|S(\frac{\omega}{T}+\frac{2\pi k}{T})\right|^2}, \label{eq:stwienerunc}
\end{align}
with the reconstruction filter $W(\omega)=\Gamma_x(\omega)S(\omega)$.

In Section \ref{sec:samp_stationarity}, we will introduce the sampling framework for GWSS, which is a counterpart to what we describe in this section.

\section{Sampling of Graph Signals and Graph Wide Sense Stationarity}\label{sec:samp_stationarity}

In this section, we introduce the sampling and recovery framework used throughout the paper.
Then, two representative sampling methods for graph signals are reviewed.
We also formally define GWSS as an analog of WSS.

\subsection{Sampling and recovery framework for graph signals}
We begin by reviewing generalized graph signal sampling \cite{hara_generalized_2020,tanaka_generalized_2020}, illustrated in Fig. \ref{gene_corrupt_filter_res}. 
The main differences among various methods stem from the sampling domain and signal generation model.

Let $\bm{c}\in \mathbb{C}^K (K\leq N)$ be the sampled graph signal. When $\mb{S}^*$ is the sampling transformation, $\bm{c}=\mb{S}^*\bm{x}$, 
we would like to recover $\bm{x}$ from the noisy samples $\bm{y} = \bm{c}+\bm{\eta}$ using a reconstruction transformation $\mathbf{W} \in \mathbb{C}^{N\times K}$. The recovered signal is then
\begin{align}
\tilde{\bm{x}}=\mathbf{WH}\bm{y}=\mathbf{WH}(\bm{c}+\bm{\eta})=\mathbf{WH}(\mathbf{S}^*\bm{x}+\bm{\eta}),\label{recov}
\end{align}
where $\mathbf{H} \in \mathbb{C}^{K
\times K}$ is some transformation that compensates for non-ideal measurements. 
It is an analog of the generalized sampling framework introduced in Section \ref{sec:gen_sampling}: We use three filters $\mb{S}$, $\mb{H}$, and $\mb{W}$.
In the following, we seek the best $\mathbf{H}$ so that $\tilde{\bm{x}}$ is close to $\bm{x}$ in some sense under appropriate signal assumptions \cite{eldar_nonideal_2006,eldar_minimum_2006}.

\subsection{Sampling Methods}
Here, we introduce representative graph signal sampling operators in the vertex and graph frequency domains.

\subsubsection{Sampling in Vertex Domain}

Vertex domain sampling is expressed as the selection of samples on a vertex subset. Therefore, it  corresponds to nonuniform sampling in the time domain.

Sampling in the vertex domain is defined as follows:

\begin{definition}[Sampling in the vertex domain]\label{definition:vertexsamp}
Let $\bm{x}\in\mathbb{C}^N$ be the original graph signal and $\mb{G} \in\mathbb{C}^{N\times N}$ be an arbitrary graph filter. In addition, let $\mb{I}_\mathcal{M}\in \{0,1\}^{K\times N}$ be a submatrix of the identity matrix $\mb{I}_N$ extracting $K=|\mathcal{M}|$ rows corresponding to the sampling set $\mathcal{M}$. The sampled graph signal $\bm{c}\in\mathbb{C}^K$ is given as follows: 
\begin{align}
\bm{c}=\mb{I}_\mathcal{M}\mb{G}\bm{x}.\label{eq:vertexsamp}
\end{align}
\end{definition}
\noindent The sampling matrix is then expressed as $\mb{S}^*=\mb{I}_\mathcal{M}\mb{G}$.
In contrast to time domain signals, $\mb{I}_\mathcal{M}$ may depend on the graph since local connectivity of the graph is often irregular.
This implies that there may exist a ``best'' vertex set for graph signal sampling described in \cite{tanaka_sampling_2020}.

\subsubsection{Sampling in the GFT Domain}\label{subsection:gftdomainsamp}

When sampling SI signals, the spectrum folding phenomenon occurs \cite{eldar_sampling_2015}.
Graph frequency domain sampling utilizes the behavior in the graph spectrum.

Formally, graph frequency domain sampling, which is a counterpart of \eqref{eq:std_freq_samp}, is defined as follows:

\begin{definition}[Sampling in the graph frequency domain \cite{tanaka_spectral_2018}]\label{definition:freqsamp}
Let $\hat{\bm{x}} \in \mathbb{C}^N$ be the original signal in the graph frequency domain, i.e., $\hat{\bm{x}} = \mathbf{U}^* \bm{x}$, and let $S(\lambda_i)$ be an arbitrary sampling filter defined in the graph frequency domain. For any sampling ratio $M\in \mathbb{Z}$\footnote{$M$ is assumed to be a divisor of $N$ for simplicity.}, the sampled graph signal in the graph frequency domain is given by $\hat{\bm{c}} \in \mathbb{C}^{K}$, where $K=N/M$, and
\begin{align}
\hat{c}(\lambda_i)=\sum_{l=0}^{M-1}S(\lambda_{i+Kl}) \hat{x}(\lambda_{i+Kl}).\label{sampmod}
\end{align}
\end{definition}
\noindent
In matrix form, the sampled graph signal is represented as $\hat{\bm{c}}=\mathbf{D}_{\text{samp}}S(\bm{\Lambda})\hat{\bm{x}}$, where $\mathbf{D}_{\text{samp}}=[\mathbf{I}_K\ \mathbf{I}_K\ \cdots]\in \mathbb{C}^{K\times N }$. Then, we define the sampling matrix $\mathbf{S}^*$ as \cite{tanaka_generalized_2020}
\begin{equation}
\mathbf{S}^* = \mathbf{U}_{\text{reduced}}\mathbf{D}_{\text{samp}} S(\bm{\Lambda}) \mathbf{U}^*,\label{sampmat}
\end{equation}
where $\mathbf{U}_{\text{reduced}}\in \mathbb{C}^{K\times K}$ is an arbitrary unitary matrix, which may correspond to the GFT for a reduced-size graph. 
Theoretically, we can use any unitary matrix for $\mb{U}_{\text{reduced}}$.
One choice which has been studied in multiscale transforms for graph signals is graph reduction \cite{sakiyama_graph_2017}.

Here, we define reconstruction in the graph frequency domain, which is the counterpart of \eqref{eq:std_freq_reconst}, as follows:
\begin{align}
[\mb{U}^*\tilde{\bm{x}}](\lambda_i)= \hat{d}(\lambda_{i\,\mathrm{mod}\, K})W(\lambda_i) ,\label{reconstrep}
\end{align}
where $\hat{\bm{d}}\in\mathbb{C}^K$ is a vector composed of expansion coefficients and $W(\lambda_i)$ is the reconstruction filter kernel defined in the graph frequency domain.
Correspondingly, the reconstruction matrix is represented as
\begin{align}
    \mathbf{W}=\mathbf{U}W(\bm{\Lambda})\mathbf{D}_{\text{samp}}^*\mathbf{U}_{\text{reduced}}^{*}.\label{reconmat}
\end{align}
The reconstruction in (\ref{reconstrep}) is performed by replicating the original spectrum in the same manner as standard signal processing \cite{tanaka_spectral_2018}.

\subsection{Graph Wide Sense Stationarity}

Several definitions of GWSS have been proposed \cite{girault_stationary_2015,perraudin_stationary_2017,segarra_stationary_2017}.
A definition of GWSS based on the WSS by shift (see Collorary \ref{cor:wss_shift} in Appendix \ref{sec:stationarity}) is shown in Appendix \ref{app:gwss_type1}.

In this paper, we define GWSS based on the WSS by modulation (see Collorary \ref{cor:wss_localize} in Appendix \ref{sec:stationarity}) as follows:

\begin{definition}[Graph Wide Sense Stationary by Modulation (GWSS$_\text{M}$)]\label{definition:gwss2}
Let $\bm{x}$ be a graph signal on a graph $\mathcal{G}$\footnote{We suppose that $\mathcal{G}$ is connected for simplicity. Nevertheless, we can extend our definition of GWSS to the case where $\mathcal{G}$ is not connected.}. Then, $\bm{x}$ is a graph wide sense stationary process if and only if the following two conditions are satisfied for all $m$:
\begin{enumerate}
\itemequation[cond1:gwss2]{}{\mathbb{E}\left[(\bm{x}\star \bm{\delta}_n)_m\right]=\mu_x=\mathrm{const},}
\itemequation[cond2:gwss2]{}{\mathbb{E}\left[(\bm{x}\star \bm{\delta}_n-\mu_x\bm{1})_m(\bm{x}\star \bm{\delta}_{k}-\mu_x\bm{1})^*_m\right]=[\mathbf{\Gamma}_x]_{n,k}.} 
\end{enumerate}
The operator $\cdot \star \bm{\delta}_k$ is defined as follows: 
\begin{align}
\bm{x} \star \bm{\delta}_k\coloneqq \mb{M}\mathrm{diag}(u_{0}[k],u_{1}[k],\ldots)\mb{U}^*\bm{x},\label{eq:modulationoprator}
\end{align}
where $\mb{M}=[e^{j2\pi\cdot0/N}\bm{1},\,e^{j2\pi\cdot1/N}\bm{1},\,\cdots]$.
\end{definition}
\noindent
Note that $\cdot\star\bm{\delta}_k$ corresponds to the standard sinusoidal modulation, since $(\bm{x}\star\bm{\delta}_k)_n=(\mb{U}\exp(j\mb{\Omega})\mb{U}^*\bm{x})_k$ where $\mb{\Omega}=\text{diag}(0,2\pi/N,\ldots,2\pi (N-1)/N)$. Therefore, we refer to $\cdot\star\bm{\delta}_k$ as a modulation operator on a graph.
Derivation and motivation of Definition \ref{definition:gwss2} are given in Appendix \ref{app:gwss_type2}.

One of the important properties of Definition \ref{definition:gwss2} is that $\mb{\Gamma}_x$ is diagonalizable by $\mb{U}$ (cf. Appendix \ref{sec:stationarity_detail}). We refer to $\widehat{\Gamma}_x(\mb{\Lambda})\coloneqq \mb{U}^*\mb{\Gamma}_x\mb{U}$ as the graph PSD. 
This property is preferred in GSP since it parallels the Wiener-Khinchin relation in the time domain\cite{papoulis_probability_2002} and it is advantageous for exploiting spectral tools in GSP.

Recall that the autocovariance function is invariant to shift in the standard WSS \cite{papoulis_probability_2002}.
Since a graph operator is often used as a shift operator in the graph setting, the invariance with shift for GWSS can be translated to
\begin{align}
\mb{\Gamma}_x \mb{L}= \mb{L} \mb{\Gamma}_x.\label{eq:gwss_primal_prop}
\end{align}
This property is also used in \cite{jian_wide-sense_2021,hasanzadeh_piecewise_2019}.
Note that $\mb{\Gamma}_x$ is diagonalizable by $\mb{U}$ if and only if \eqref{eq:gwss_primal_prop} is satisfied \cite{stein_real_2005}. 
In fact, the alternative definitions of GWSS require additional assumptions to satisfy \eqref{eq:gwss_primal_prop}.
For example, \cite{girault_stationary_2015} requires that all eigenvalues of $\mb{L}$ are distinct. 
However, we sometimes encounter graphs whose graph operator has an eigenvalue with multiplicity greater than one \cite{perraudin_stationary_2017}.
Moreover, \cite{perraudin_stationary_2017,segarra_stationary_2017} require that the covariance $\mb{\Gamma}_x$ is expressed by a  polynomial in $\mb{L}$, while this assumption may not be true in general, including the multiple eigenvalue case. 
On the other hand, our GWSS definition does not require such assumptions, but it still satisfies \eqref{eq:gwss_primal_prop}.
As a result, Definition \ref{definition:gwss2} can be regarded as an extension of the existing GWSS.
The relationship among the GWSS definitions is discussed in detail in Appendix \ref{app:relation_gwss}.
Note that our generalized sampling is applicable under different definitions of GWSS with slight modifications.

\section{Graph Wiener filter: recovery for stochastic graph signals}
\label{sec:graph_wiener}

We now consider the design of the correction filter.
It is optimized such that the reconstructed graph signal is close to the original one in the mean squared sense.
We show that the resulting Wiener filter parallels that in the SI setting when graph spectral sampling (Definition \ref{definition:freqsamp}) is performed.

\subsection{Graph Wiener Filter}\label{sec:graph_wiener_vertex}
Suppose that $\bm{x}$ is a zero-mean GWSS process with a known covariance $\mb{\Gamma}_x$ and $\bm{\eta}$ is a zero-mean GWSS noise process with a known covariance $\mb{\Gamma}_\eta$, independent of each other. 
For simplicity, we suppose that columns of $\mb{W}$ and $\mb{S}$ satisfy the Riesz condition \cite{eldar_sampling_2015}, i.e., $\mb{W}^*\mb{W}$ and $\mb{S}^*\mb{S}$ are invertible.

We now formulate the design the correction filter. 
Unlike the formulation \eqref{eq:worst-case} in the time domain, we can directly minimize the expectation of the normed error in the graph setting
because the subspace of graph signals is finite-dimensional.

We consider minimizing the MSE between the original and reconstructed graph signals by solving
\begin{align}
\min_{\mathbf{H}}\  \mathbb{E}[\,\| \tilde{\bm{x}}-\bm{x}\|^2\,].\label{estimationerrorprob}
\end{align}
The optimal correction filter is obtained in the following Theorem.

\begin{theorem}\label{theorem:h_con}
Suppose that the input signal $\bm{x}$ and noise $\bm{w}$ are zero-mean GWSS processes with covariances $\mb{\Gamma}_x$ and $\mb{\Gamma}_w$, respectively, which are uncorrelated with each other.
Then, the solution of \eqref{estimationerrorprob} is given by
\begin{align}
\mathbf{H}_{\mathrm{PRE}}=(\mathbf{W}^*\mathbf{W})^{-1}\mathbf{W}^*\bm{\Gamma}_x \mathbf{S}(\mathbf{S}^*\bm{\Gamma}_x \mathbf{S}+\bm{\Gamma}_\eta)^{-1}.\label{wienervertex}
\end{align}
\end{theorem}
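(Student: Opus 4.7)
The plan is to reduce the matrix MSE problem to a weighted least-squares problem via an orthogonal decomposition onto $\mathrm{range}(\mathbf{W})$, and then to solve the resulting normal equations via the orthogonality principle (equivalently, a Wirtinger gradient). The key observation is that since $\tilde{\bm{x}}=\mathbf{W}\mathbf{H}\bm{y}$ is forced to lie in $\mathrm{range}(\mathbf{W})$, the component of $\bm{x}$ orthogonal to that subspace is unreachable and contributes an additive constant to the MSE that is independent of $\mathbf{H}$.

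Concretely, let $\mathbf{P}_{\mathbf{W}}\coloneqq\mathbf{W}(\mathbf{W}^*\mathbf{W})^{-1}\mathbf{W}^*$, which is well defined since $\mathbf{W}^*\mathbf{W}$ is invertible by the Riesz condition. Pythagoras then gives pointwise $\|\tilde{\bm{x}}-\bm{x}\|^2 = \|\tilde{\bm{x}}-\mathbf{P}_{\mathbf{W}}\bm{x}\|^2 + \|(\mathbf{I}-\mathbf{P}_{\mathbf{W}})\bm{x}\|^2$, because $\tilde{\bm{x}}-\mathbf{P}_{\mathbf{W}}\bm{x}\in\mathrm{range}(\mathbf{W})$ while $(\mathbf{I}-\mathbf{P}_{\mathbf{W}})\bm{x}\perp\mathrm{range}(\mathbf{W})$. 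Setting $\bm{z}\coloneqq(\mathbf{W}^*\mathbf{W})^{-1}\mathbf{W}^*\bm{x}$, so that $\mathbf{P}_{\mathbf{W}}\bm{x}=\mathbf{W}\bm{z}$, problem \eqref{estimationerrorprob} is equivalent, up to an $\mathbf{H}$-independent term, to
\begin{equation*}
\min_{\mathbf{H}}\;J(\mathbf{H}) \coloneqq \mathbb{E}\bigl[(\mathbf{H}\bm{y}-\bm{z})^{*}(\mathbf{W}^*\mathbf{W})(\mathbf{H}\bm{y}-\bm{z})\bigr].
\end{equation*}
Using the zero-mean, uncorrelated assumptions, the relevant second-order statistics are
\begin{equation*}
\mathbf{R}_{\bm{y}\bm{y}}\coloneqq\mathbb{E}[\bm{y}\bm{y}^*]=\mathbf{S}^{*}\bm{\Gamma}_x\mathbf{S}+\bm{\Gamma}_\eta,\qquad \mathbb{E}[\bm{z}\bm{y}^*]=(\mathbf{W}^*\mathbf{W})^{-1}\mathbf{W}^*\bm{\Gamma}_x\mathbf{S}.
\end{equation*}

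Since $J$ is convex quadratic in $\mathbf{H}$, setting its Wirtinger derivative with respect to $\mathbf{H}^*$ to zero (equivalently, invoking the orthogonality principle for the weighted inner product induced by $\mathbf{W}^*\mathbf{W}$) yields the normal equation
\begin{equation*}
(\mathbf{W}^*\mathbf{W})\,\mathbf{H}\,\mathbf{R}_{\bm{y}\bm{y}} = (\mathbf{W}^*\mathbf{W})\,\mathbb{E}[\bm{z}\bm{y}^*] = \mathbf{W}^*\bm{\Gamma}_x\mathbf{S}.
\end{equation*}
Cancelling the invertible $\mathbf{W}^*\mathbf{W}$ on the left and inverting $\mathbf{R}_{\bm{y}\bm{y}}$ on the right produces exactly $\mathbf{H}_{\mathrm{PRE}}$ in \eqref{wienervertex}. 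The main obstacle is bookkeeping rather than conceptual: one must carefully keep track of the order of factors when differentiating a matrix-valued quadratic form over a complex matrix argument, and one needs $\mathbf{R}_{\bm{y}\bm{y}}$ to be invertible (which follows whenever $\bm{\Gamma}_\eta\succ 0$ or when the combined covariance has full rank). The orthogonal-decomposition step is what makes the calculation transparent, because it removes all $\mathbf{W}$-dependence from the right-hand side of the normal equations and leaves a standard Wiener-filter problem whose vertex-domain solution is the graph analog of \eqref{eq:stdwienercons}.
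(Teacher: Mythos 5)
Your proof is correct and lands on exactly the paper's normal equation $(\mathbf{W}^*\mathbf{W})\mathbf{H}(\mathbf{S}^*\bm{\Gamma}_x\mathbf{S}+\bm{\Gamma}_\eta)=\mathbf{W}^*\bm{\Gamma}_x\mathbf{S}$, so the two arguments coincide in their computational core. The difference is in the front end: the paper expands $\mathbb{E}[\|\tilde{\bm{x}}-\bm{x}\|^2]$ directly as a trace quadratic in $\mathbf{H}$ (using linearity of expectation, the cyclic property of the trace, and the zero-mean uncorrelated assumptions) and then sets the matrix differential $\mathrm{d}\varepsilon_{\text{MSE}}$ to zero, whereas you first perform the Pythagorean split onto $\mathrm{range}(\mathbf{W})$, peel off the $\mathbf{H}$-independent term $\mathbb{E}[\|(\mathbf{I}-\mathbf{P}_{\mathbf{W}})\bm{x}\|^2]$, and recast what remains as a weighted Wiener problem for $\bm{z}=(\mathbf{W}^*\mathbf{W})^{-1}\mathbf{W}^*\bm{x}$. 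Your organization buys two things: it mirrors the SI-space criterion \eqref{eq:worst-case} (projection onto the reconstruction space), which the paper invokes in the time-domain section but does not reuse in the graph proof, and it makes the left-cancellation of $\mathbf{W}^*\mathbf{W}$ transparent via the orthogonality principle in the $\mathbf{W}^*\mathbf{W}$-weighted inner product; the paper's version instead carries every cross term through the trace expansion. Both routes tacitly require $\mathbf{S}^*\bm{\Gamma}_x\mathbf{S}+\bm{\Gamma}_\eta$ to be invertible (the theorem writes its inverse), so your explicit remark on that point is a small improvement rather than a gap.
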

\begin{proof}
The MSE $\varepsilon_{\text{MSE}}=\mathbb{E}[\|\tilde{\bm{x}}-\bm{x}\|^2]$ is given by
\begin{align}
\varepsilon_{\text{MSE}}=&
\mathbb{E}[\tr \left(\mb{WHS}^*\bm{x}\bm{x}^*\mb{S}\mb{H}^*\mb{W}^*\right)]\nonumber\\
&+\mathbb{E}[\tr\left(-\bm{x}\bm{x}^*\mb{S}\mb{H}^*\mb{W}^*-\bm{x}\bm{x}^*\mb{WHS}^*+\bm{x}\bm{x}^*\right)]\nonumber\\
&+\mathbb{E}[\tr \left(\mb{WH}\bm{\eta}\bm{\eta}^*\mb{H}^*\mb{W}^*-\bm{x}\bm{\eta}^*\mb{H}^*\mb{W}^*\right)]\nonumber\\
& +\mathbb{E}[\tr\left(-\mb{WH}\bm{\eta}\bm{x}^*+\bm{\eta}\bm{\eta}^*\right)]\nonumber\\
=&\tr \left(\mb{WHS}^*\mb{\Gamma}_x\mb{S}\mb{H}^*\mb{W}^*-\mb{\Gamma}_x\mb{S}\mb{H}^*\mb{W}^*\right) \nonumber\\
&+\tr\left(-\mb{\Gamma}_x\mb{WHS}^*+\mb{\Gamma}_x\right)\nonumber\\
&+\tr \left(\mb{WH}\mb{\Gamma}_\eta\mb{H}^*\mb{W}^*+\mb{\Gamma}_\eta\right)\nonumber\\
=&\tr \left(\mb{S}^*\mb{\Gamma}_x\mb{S}\mb{H}^*\mb{W}^*\mb{WH}-2\mb{S}^*\mb{\Gamma}_x\mb{WH}+\mb{\Gamma}_x\right)\nonumber\\
& +\tr \left(\mb{\Gamma}_\eta\mb{H}^*\mb{W}^*\mb{W}\mb{H}+\mb{\Gamma}_\eta\right), \label{proof:eq1}
\end{align}
where the second equality holds since expectation and trace are interchangeable and the third equality follows by the cyclic property of trace. 

The infinitesimal perturbation of $\varepsilon_{\text{MSE}}$ with respect to $\mb{H}$ results in
\begin{align}
\mathrm{d}\varepsilon_{\text{MSE}}=&\tr\left( 2\mb{S}^*\mb{\Gamma}_x\mb{S}\mb{H}^*\mb{W}^*\mb{W} \mathrm{d}\mb{H}\right)\nonumber\\
&+\tr\left( -2\mb{S}^*\mb{\Gamma}_x\mb{W}\mathrm{d}\mb{H}+2\mb{\Gamma}_\eta\mb{H}^*\mb{W}^*\mb{W}\mathrm{d}\mb{H}\right).\label{proof:infinitestimal_e}
\end{align}
Here, $\mathrm{d}\varepsilon_{\text{MSE}}$ is defined by
\begin{align}
\mathrm{d}\varepsilon_{\text{MSE}}=\sum_{n,m} \left[\frac{\partial \varepsilon_{\text{MSE}}}{\partial \mb{H}}\circ\mathrm{d}\mb{H}\right]_{n,m}=\tr\left(\left(\frac{\partial \varepsilon_{\text{MSE}}}{\partial \mb{H}}\right)^*\mathrm{d}\mb{H}\right).\label{proof:infinitesimal}
\end{align}
As a result, the derivative of $\varepsilon_{\text{MSE}}$ in \eqref{proof:eq1} with respect to $\mb{H}$ results in
\begin{align}
\frac{\partial\varepsilon_{\text{MSE}}}{\partial \mb{H}}=\mathbf{W}^*\mathbf{W}\mathbf{H}(\mathbf{S}^*\bm{\Gamma}_x\mathbf{S}+\bm{\Gamma}_\eta)-\mathbf{W}^*\bm{\Gamma}_x\mathbf{S}.\label{proof:estimationerrorderiv}
\end{align}

Setting \eqref{proof:estimationerrorderiv} to zero, we obtain the optimal $\mb{H}$ of \eqref{wienervertex}.
\end{proof}

In Theorem \ref{theorem:h_con}, we consider the predefined case.
Next, we move on to the unconstrained case: The reconstruction filter $\mathbf{W}$ can also be freely chosen along with $\mathbf{H}$.
The optimal solution is obtained by solving
\begin{align}
\min_{\mb{W},\,\mb{H}} \mathbb{E}\left[\,\|\tilde{\bm{x}}-\bm{x}\|^2\,\right].\label{uncprob}
\end{align}
The optimal correction filter is given in the following Theorem.

\begin{theorem}\label{theorem:h_unc}
Suppose that the input signal $\bm{x}$ and noise $\bm{w}$ are zero-mean GWSS processes with covariances $\mb{\Gamma}_x$ and $\mb{\Gamma}_w$, respectively, which are uncorrelated with each other.
Then, the solution of \eqref{uncprob} is given by 
\begin{align}
\mathbf{H}_{\mathrm{UNC}}=(\mathbf{S}^*\bm{\Gamma}_x \mathbf{S}+\bm{\Gamma}_\eta)^{-1},\quad \mb{W}_{\mathrm{UNC}}=\mb{\Gamma}_x \mb{S}.\label{unconstwienervertex}
\end{align}
\end{theorem}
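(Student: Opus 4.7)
The plan is to exploit that, in the unconstrained problem, only the composite operator $\mathbf{T} := \mathbf{W}\mathbf{H}\in\mathbb{C}^{N\times K}$ enters the recovered signal $\tilde{\bm{x}} = \mathbf{T}(\mathbf{S}^*\bm{x} + \bm{\eta})$. Under the Riesz condition, $\mathbf{W}$ is any full-column-rank $N\times K$ matrix and $\mathbf{H}$ any $K\times K$ matrix, so $\mathbf{T}$ ranges over all of $\mathbb{C}^{N\times K}$. Hence \eqref{uncprob} collapses to the classical linear MMSE problem $\min_{\mathbf{T}}\mathbb{E}[\|\mathbf{T}(\mathbf{S}^*\bm{x} + \bm{\eta}) - \bm{x}\|^2]$, and the desired pair $(\mathbf{W}_{\mathrm{UNC}}, \mathbf{H}_{\mathrm{UNC}})$ will simply be any factorization of the resulting $\mathbf{T}_{\mathrm{opt}}$.

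First, I would expand the MSE in direct analogy to the first three equalities of the proof of Theorem \ref{theorem:h_con}, writing $\mathbf{T}$ in place of $\mathbf{W}\mathbf{H}$ and using uncorrelatedness together with $\mathbb{E}[\bm{x}\bm{x}^*] = \bm{\Gamma}_x$ and $\mathbb{E}[\bm{\eta}\bm{\eta}^*] = \bm{\Gamma}_\eta$. Differentiating with respect to $\mathbf{T}$ via the matrix calculus of \eqref{proof:infinitesimal} and setting the derivative to zero yields the normal equation $\mathbf{T}(\mathbf{S}^*\bm{\Gamma}_x\mathbf{S} + \bm{\Gamma}_\eta) = \bm{\Gamma}_x \mathbf{S}$, whence $\mathbf{T}_{\mathrm{opt}} = \bm{\Gamma}_x\mathbf{S}(\mathbf{S}^*\bm{\Gamma}_x\mathbf{S} + \bm{\Gamma}_\eta)^{-1}$. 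The canonical factorization $\mathbf{W}_{\mathrm{UNC}} = \bm{\Gamma}_x\mathbf{S}$ and $\mathbf{H}_{\mathrm{UNC}} = (\mathbf{S}^*\bm{\Gamma}_x\mathbf{S} + \bm{\Gamma}_\eta)^{-1}$ then trivially reproduces $\mathbf{T}_{\mathrm{opt}}$.

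An alternative route that mirrors the time-domain derivation in \cite{eldar_sampling_2015} is to substitute $\mathbf{H}_{\mathrm{PRE}}$ from \eqref{wienervertex} directly into the MSE \eqref{proof:eq1}; the resulting expression depends on $\mathbf{W}$ only through the orthogonal projector $\mathbf{W}(\mathbf{W}^*\mathbf{W})^{-1}\mathbf{W}^*$ onto $\mathrm{range}(\mathbf{W})$, and is minimized whenever $\mathrm{range}(\mathbf{W}) \supseteq \mathrm{range}(\bm{\Gamma}_x\mathbf{S})$, the smallest admissible choice being $\mathbf{W} = \bm{\Gamma}_x\mathbf{S}$ up to an invertible right factor that can be absorbed into $\mathbf{H}$. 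Both routes produce the same pair, which I would present as a sanity check.

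No substantive obstacle is anticipated: the expansion, differentiation and normal-equation steps are routine linear algebra since, unlike the SI case in Section \ref{sec:sampling}, the ambient space is finite-dimensional and no interchange of sums and integrals is required. The main subtlety worth flagging is that the factorization of $\mathbf{T}_{\mathrm{opt}}$ is non-unique---for any invertible $\mathbf{A}\in\mathbb{C}^{K\times K}$ the pair $(\mathbf{W}_{\mathrm{UNC}}\mathbf{A},\,\mathbf{A}^{-1}\mathbf{H}_{\mathrm{UNC}})$ is equally optimal---so the theorem should be read as specifying one canonical representative. Invertibility of $\mathbf{S}^*\bm{\Gamma}_x\mathbf{S} + \bm{\Gamma}_\eta$, needed to justify writing $\mathbf{T}_{\mathrm{opt}}$ in closed form, follows when $\bm{\Gamma}_\eta \succ 0$; otherwise one uses a pseudoinverse and appeals to the Riesz hypothesis on $\mathbf{S}$.
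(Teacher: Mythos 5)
Your proposal is correct, but your primary route differs from the paper's. The paper proves the theorem by substituting the predefined-case solution $\mathbf{H}_{\mathrm{PRE}}$ of \eqref{wienervertex} into the MSE \eqref{proof:eq1}, observing that the result depends on $\mathbf{W}$ only through the orthogonal projector $\mathbf{P}_{\mathcal{W}}=\mathbf{W}(\mathbf{W}^*\mathbf{W})^{-1}\mathbf{W}^*$, differentiating with respect to $\mathbf{W}$, and concluding that $\mathbf{P}_{\mathcal{W}}$ must act as the identity on $\mathcal{R}(\bm{\Gamma}_x\mathbf{S})$, whence $\mathbf{W}=\bm{\Gamma}_x\mathbf{S}$ and $\mathbf{H}=(\mathbf{S}^*\bm{\Gamma}_x\mathbf{S}+\bm{\Gamma}_\eta)^{-1}$ --- this is exactly the ``alternative route'' you sketch as a sanity check. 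Your main argument instead collapses the problem to the classical linear MMSE estimation of $\bm{x}$ from $\mathbf{S}^*\bm{x}+\bm{\eta}$ over the composite operator $\mathbf{T}=\mathbf{W}\mathbf{H}$, using the (correct) observation that $\mathbf{T}$ ranges over all of $\mathbb{C}^{N\times K}$; the normal equation then gives $\mathbf{T}_{\mathrm{opt}}=\bm{\Gamma}_x\mathbf{S}(\mathbf{S}^*\bm{\Gamma}_x\mathbf{S}+\bm{\Gamma}_\eta)^{-1}$ and any factorization works. Your route is more elementary: it avoids the somewhat delicate matrix derivative through $(\mathbf{W}^*\mathbf{W})^{-1}$ and makes the non-uniqueness of the $(\mathbf{W},\mathbf{H})$ split transparent, a point the paper leaves implicit. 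What the paper's route buys is structural: it exhibits the unconstrained solution as a limiting case of Theorem~\ref{theorem:h_con}, identifies the optimal reconstruction subspace $\mathcal{W}=\mathcal{R}(\bm{\Gamma}_x\mathbf{S})$ explicitly, and preserves the parallel with the two-stage SI-space derivation of \eqref{eq:stwienerunc} in \cite{eldar_sampling_2015}. Your closing remarks on invertibility of $\mathbf{S}^*\bm{\Gamma}_x\mathbf{S}+\bm{\Gamma}_\eta$ (guaranteed when $\bm{\Gamma}_\eta\succ 0$) address a hypothesis the paper uses without comment.
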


\begin{proof}
Since \eqref{wienervertex} is optimal for an arbitrary $\mb{W}$, we plug it into \eqref{proof:eq1} and have
\begin{align}
\varepsilon_{\text{MSE}}
=&\tr(\mb{P}_\mathcal{W}\mb{Z}\mb{S}^*\mb{\Gamma}_x\mb{S}\mb{Z}^*\mb{P}_\mathcal{W}-2\mb{\Gamma}_x\mb{S}\mb{Z}^*\mb{P}_\mathcal{W}+\mb{\Gamma}_x)\nonumber\\
& +\tr(\mb{P}_\mathcal{W}\mb{Z}\mb{\Gamma}_\eta\mb{Z}^*\mb{P}_\mathcal{W}+\mb{\Gamma}_w),\label{proof2:eq1}
\end{align}
where $\mb{P}_\mathcal{W}=\mb{W}(\mb{W}^*\mb{W})^{-1}\mb{W}^*$ and $\mb{Z}=\mb{\Gamma}_x\mb{S}(\mb{S}^*\mb{\Gamma}_x\mb{S}+\mb{\Gamma}_\eta)^{-1}$. In the same manner as \eqref{proof:estimationerrorderiv}, the derivative of $\varepsilon_{\text{MSE}}$ with respect to $\mb{W}$ is derived from \eqref{proof2:eq1} as
\begin{align}
\frac{\partial \varepsilon_{\text{MSE}}}{\partial \mb{W}}=&2\mb{P}_\mathcal{W}\mb{Z}\mb{S}^*\mb{\Gamma}_x\mb{S}\mb{Z}^*\mb{W}(\mb{W}^*\mb{W})^{-1}\nonumber\\
&-2\mb{\Gamma}_x\mb{S}\mb{Z}^*\mb{W}(\mb{W}^*\mb{W})^{-1}\nonumber\\
&+2\mb{P}_\mathcal{W}\mb{Z}\mb{\Gamma}_\eta\mb{Z}^*\mb{W}(\mb{W}^*\mb{W})^{-1}.\label{proof2:deriv}
\end{align}
By setting \eqref{proof2:deriv} to zero, we have
\begin{align}
(\mb{P}_\mathcal{W}\mb{\Gamma}_x\mb{S}-\mb{\Gamma}_x\mb{S})\mb{Z}^*\mb{W}(\mb{W}^*\mb{W})^{-1}&=\bm{0}.
\end{align}
Therefore,  $\mb{P}_\mathcal{W}$ is necessary to be the identity mapping over $\mathcal{R}(\mb{\Gamma}_x\mb{S})$, leading to $\mathcal{W}=\mathcal{R}(\mb{\Gamma}_x\mb{S})$ where $\mathcal{R}(\cdot)$ is the range space of a matrix. Graph signal recovery with \eqref{wienervertex} can be expressed as
\begin{align}
\mb{WH}=\mb{P}_\mathcal{W}\mb{\Gamma}_x\mb{S}(\mb{S}^*\mb{\Gamma}_x\mb{S}+\mb{\Gamma}_\eta)^{-1}.\label{proof:unc_sol_const}
\end{align}
Since $\mathcal{W}= \mathcal{R}(\mb{\Gamma}_x\mb{S})$, we can choose $\mb{W}=\mb{\Gamma}_x\mb{S}$ and obtain $\mb{H}=(\mb{S}^*\mb{\Gamma}_x\mb{S}+\mb{\Gamma}_\eta)^{-1}$. This completes the proof.
\end{proof}

So far, we considered a general solution for stochastic recovery. 
By choosing graph frequency domain sampling, we can show that the resulting recovery parallels that in SI space, studied in \cite{eldar_sampling_2015,eldar_nonideal_2006}.

\subsection{Special cases for GFT domain sampling}\label{sec:graph_wiener_gft}

Suppose that the graph signal and noise conform to zero-mean GWSS processes with PSD $\widehat{\Gamma}_x(\lambda)$ and $\widehat{\Gamma}_\eta(\lambda)$, respectively. 
We assume that $\mathbf{S}^*$ and $\mathbf{W}$ are defined in the graph frequency domain by \eqref{sampmat} and \eqref{reconmat}, respectively.
In this setting, \eqref{wienervertex} is diagonalizable by $\mathbf{U}_{\text{reduced}}$, i.e., it has a  graph frequency response
\begin{align}
H_{\text{PRE}}(\lambda_i)=\frac{\sum_l \Gamma_x(\lambda_{i+Kl})W^*(\lambda_{i+Kl})S(\lambda_{i+Kl})}{R_{W}(\lambda_i)(\sum_l\widehat{\Gamma}_x(\lambda_{i+Kl})|S(\lambda_{i+Kl})|^2+\widehat{\Gamma}_\eta(\lambda_{i}))},\label{wienerfreq}
\end{align}
where
\begin{align}
R_{W}(\lambda_i)\coloneqq\sum_{l=0}^{M-1}W^*(\lambda_{i+Kl})W(\lambda_{i+Kl}).
\end{align}
Similarly, the  graph spectral responses of (\ref{unconstwienervertex}) are given by
\begin{align}
    H_{\text{UNC}}(\lambda_i)&=\frac{1}{\sum_l \Gamma_x(\lambda_{i+Kl})|S(\lambda_{i+Kl})|^2+\Gamma_\eta(\lambda_{i})},\label{unconstwienerfreq}
\end{align}
with reconstruction filter $W(\lambda_i)= \Gamma_x(\lambda_i)S(\lambda_i)$. The correction filters \eqref{wienerfreq} and \eqref{unconstwienerfreq} inherit frequency responses of the Wiener filter in standard sampling \eqref{eq:stdwienercons} and \eqref{eq:stwienerunc}, respectively.

In the following, we discuss the relationship among the proposed graph Wiener filter and existing generalized graph signal sampling. 

\section{Relationship to other priors}\label{sec:relation_priors}

In this section, we consider the relationship between the proposed graph Wiener filter and graph signal recovery under subspace and smoothness priors \cite{tanaka_generalized_2020}.
For simplicity, we only consider the predefined case.
The unconstrained case can be derived in a similar fashion.

\subsection{Subspace prior}

Under the subspace prior, we assume the following graph signal model \cite{tanaka_generalized_2020}.
\begin{equation}
    \bm{x}=\mb{A}\bm{d},
\end{equation}
where $\mb{A}\in\mathbb{C}^{N\times K}$ is a known generator and $\bm{d}\in\mathbb{C}^{K}$ is expansion coefficients.

We consider a relationship between the subspace prior and the stochastic one. 
Suppose that $\bm{d}$ is a random vector and
$\mb{\Sigma}_d=\mathbb{E}[\bm{d}\bm{d}^*]$ is the covariance of $\bm{d}$.
The covariance of $\bm{x}$ is then written as
\begin{align}
\mb{\Sigma}_x=&\mathbb{E}[\mb{A}\bm{d}\bm{d}^*\mb{A}^*]\nonumber\\
=&\mb{A}\mathbb{E}[\bm{d}\bm{d}^*]\mb{A}^*\nonumber\\
=&\mb{A}\mb{\Sigma}_d\mb{A}^*.\label{eq:cov_subspace}
\end{align}
By substituting \eqref{eq:cov_subspace} to \eqref{wienervertex}, we have the following correction filter:
\begin{align}
\mb{H}_\text{MX,SUB}=(\mb{W}^*\mb{W})^{-1}\mb{W}^*\mb{A}\mb{\Sigma}_d\mb{A}^*\mb{S}(\mb{S}^*\mb{A}\mb{\Sigma}_d\mb{A}^*\mb{S}+\mb{\Gamma}_\eta)^{-1}.\label{eq:wienervertex_subspace_0}
\end{align}

For subspace prior, 
if $\mb{S}^*\mb{A}$ is invertible and there is no noise,
\eqref{eq:wienervertex_subspace_0} reduces to
\begin{align}
\mb{H}_\text{MX,SUB}=&(\mb{W}^*\mb{W})^{-1}\mb{W}^*\mb{A}\mb{\Sigma}_d\mb{A}^*\mb{S}(\mb{S}^*\mb{A}\mb{\Sigma}_d\mb{A}^*\mb{S})^{-1}\nonumber\\
=&(\mb{W}^*\mb{W})^{-1}\mb{W}^*\mb{A}\mb{\Sigma}_d\mb{A}^*\mb{S}(\mb{A}^*\mb{S})^{-1}\mb{\Sigma}_d^{-1}(\mb{S}^*\mb{A})^{-1}\nonumber\\
=&(\mb{W}^*\mb{W})^{-1}\mb{W}^*\mb{A}(\mb{S}^*\mb{A})^{-1},\label{eq:wienervertex_subspace}
\end{align}
where we assume that $\mb{\Sigma}_d$ is invertible.
As observed, $\mb{H}_\text{MX,SUB}$ does not depend on $\mb{\Sigma}_d$.
Interestingly, the solution in \eqref{eq:wienervertex_subspace} coincides with the minimax solution for signal recovery under subspace prior \cite{tanaka_generalized_2020}.
As a result, we can view the minimax recovery under the subspace prior \eqref{eq:wienervertex_subspace} as a special case of \eqref{wienervertex} with random expansion coefficients and known $\mb{A}$.

\subsection{Smoothness prior}

We consider smoothness prior.
Under the smoothness prior, graph signals satisfy the following condition \cite{tanaka_generalized_2020}:
\begin{equation}
    \|\mb{V}\bm{x}\|^2\leq\rho^2,\label{eq:smo_prior}
\end{equation}
where $\mb{V}=\mb{U}V(\mb{\Lambda})\mb{U}^*$ is a smoothness measuring function, i.e., graph high-pass filter, and $\rho>0$ is a constant. We assume that $\mb{V}$ is bounded, i.e., $\mb{V}^*\mb{V}$ is invertible. 

To derive the solution of \eqref{estimationerrorprob} subject to the constraint in \eqref{eq:smo_prior}, the problem may be written as minimization of the Lagrangian
\begin{align}
L(\mb{H},\xi)=\mathbb{E}[\|\tilde{\bm{x}}-\bm{x}\|^2]+\xi(\mathbb{E}[\|\mb{V}\bm{x}\|^2]-\rho^2),\label{eq:lagrangian}
\end{align}
where $\xi$ is a constant.
The optimal solution of \eqref{eq:lagrangian} satisfies the following identity \cite{boyd_convex_2004}:
\begin{align}
\lambda(\mathbb{E}[\|\mb{V}\bm{x}\|^2]-\rho^2)=\xi(\|\mb{V}\mb{\Sigma}_x\mb{V}^*\|_F^2-\rho^2)=0.\label{eq:lagrangian_cond}
\end{align}
Note that the second equality of \eqref{eq:lagrangian_cond} is satisfied with either $\xi =0$ or $\|\mb{V}\mb{\Sigma}_x\mb{V}\|^2_F=\rho^2$.
Therefore, the nontrivial solution of \eqref{eq:lagrangian} can be obtained with $\|\mb{V}\mb{\Sigma}_x\mb{V}\|^2_F=\rho^2$. 
Since $\|\mb{V}(\mb{V}^*\mb{V})^{-1}\mb{V}^*\|_F^2=\|\mb{I}\|_F^2=N$, $\mb{\Sigma}_x$ can be written as
\begin{align}
\mb{\Sigma}_x=\frac{\rho^2}{N}(\mb{V}^*\mb{V})^{-1}.\label{eq:cov_smo}
\end{align}
Since the derivative of $L(\mb{H},\xi)$ with respect to $\mb{H}$ coincides with \eqref{proof:eq1}, we immediately obtain the following solution:
\begin{align}
\mb{H}_\text{MX,SMO}=&(\mb{W}^*\mb{W})^{-1}\mb{W}^*(\mb{V}^*\mb{V})^{-1}\mb{S}(\mb{S}^*(\mb{V}^*\mb{V})^{-1}\mb{S})^{-1},\label{eq:wienervertex_smoothness}
\end{align}
where we also assume the noiseless case ($\mb{\Gamma}_\eta = \mathbf{0}$ in \eqref{proof:eq1}) as in the subspace prior.

In fact, \eqref{eq:wienervertex_smoothness} coincides with the minimax solution under the smoothness prior \cite{tanaka_generalized_2020}. Similar to \eqref{eq:wienervertex_subspace}, we can view minimax recovery under the smoothness prior \eqref{eq:wienervertex_smoothness} as a special case of \eqref{wienervertex}.

\section{Signal Recovery Experiments}\label{sec:experiments}

\begin{table*}[t!]
\setlength{\tabcolsep}{0.13em}
\centering
\caption{Average MSEs of the reconstructed signals for the synthetic data on 20 independent realizations (in decibels).
We also show the standard deviations along with the MSEs.
ER, BL, FB, VS and SS denote an Erd\H{o}s-R\'{e}nyi graph, bandlimited, full-band, vertex domain sampling and spectral domain sampling, respectively. UNC, PRE, LS and MX denote unconstrained, predefined, least-squares and minimax criteria, respectively.}\label{tab:ex1}
\footnotesize
\begin{tabular}{c|c|c|c|cc|ccc|c|ccccc}
\hline
\multicolumn{4}{c|}{Prior} & \multicolumn{2}{c|}{Stochastic} & \multicolumn{3}{c|}{Smoothness \cite{tanaka_generalized_2020}} & \multirow{2}{*}{\begin{tabular}[c]{@{}c@{}}BL\\\cite{tanaka_spectral_2018}\end{tabular}} & \multirow{2}{*}{\begin{tabular}[c]{@{}c@{}}MKVV\\\cite{heimowitz_smooth_2018}\end{tabular}} & \multirow{2}{*}{\begin{tabular}[c]{@{}c@{}}MKVD\\\cite{narang_signal_2013}\end{tabular}} & \multirow{2}{*}{\begin{tabular}[c]{@{}c@{}}NLPD\\\cite{chen_discrete_2015}\end{tabular}} & \multirow{2}{*}{\begin{tabular}[c]{@{}c@{}}GSOD\\\cite{segarra_reconstruction_2016}\end{tabular}} & \multirow{2}{*}{\begin{tabular}[c]{@{}c@{}}NLPI\\\cite{narang_localized_2013}\end{tabular}}\\ \cline{1-9}
\multicolumn{4}{c|}{Reconst./ Criterion} & UNC & PRE & UNC & PRE LS & PRE MX&  &&&&&\\ \hline\hline
\multirow{8}{*}{Sensor} & \multirow{4}{*}{Noisy} & \multirow{2}{*}{BL} & VS & \textbf{-8.87}$\pm$1.18 & -8.86$\pm$1.19 & 443$\pm$14.4 & 208$\pm$11.5 & 442$\pm$14.4 & -8.69$\pm$1.16 & 136$\pm$23.9 & 68.9$\pm$30.4 & 119$\pm$23.6 & 18.9$\pm$0.10 & -6.79$\pm$1.15\\
 &  &  & SS & \textbf{-8.88}$\pm$1.99 &\textbf{-8.88}$\pm$1.99 & -8.13$\pm$1.99 & -6.99$\pm$1.98 & -8.63$\pm$1.99 & -8.19$\pm$1.99 & - & - & - & - & -\\ \cline{3-15} 
 &  & \multirow{2}{*}{FB} & VS & \textbf{-11.9}$\pm$1.15 & -10.7$\pm$1.15 & -4.73$\pm$1.44 & -9.82$\pm$1.21 & -9.59$\pm$1.21 &  - & 143$\pm$18.8 & 203$\pm$22.7 & 162$\pm$17.8 & 18.9$\pm$0.11 & -2.10$\pm$2.22\\
 &  &  & SS & \textbf{-14.1}$\pm$1.99 & -10.9$\pm$1.99 & -5.05$\pm$1.99 & -10.8$\pm$1.99 & -10.6$\pm$1.99 & - & - & - & - & - & -\\ \cline{2-15} 
 & \multirow{4}{*}{Clean} & \multirow{2}{*}{BL} & VS & \textbf{-8.99}$\pm$1.20 & -8.90$\pm$1.19 & 126$\pm$12.5 & 44.2$\pm$8.94 & 125$\pm$12.6 & -8.90$\pm$1.19 & 97.0$\pm$20.4 & -5.93$\pm$1.02 & -6.94$\pm$1.22 & 18.7$\pm$0.09 & -8.91$\pm$1.19\\
 &  &  & SS & \textbf{-8.99}$\pm$1.99 & -8.95$\pm$1.99 & \textbf{-8.99}$\pm$1.99 & -8.88$\pm$1.99 & -8.95$\pm$1.99 & -8.98$\pm$1.99 & - & - & - & - & -\\ \cline{3-15} 
 &  & \multirow{2}{*}{FB} & VS & \textbf{-13.5}$\pm$1.35 & -11.5$\pm$1.23 & -6.66$\pm$1.38 & -11.4$\pm$1.23 & -11.4$\pm$1.23 & -  & 132$\pm$19.7 & 176$\pm$27.6 & 152$\pm$17.1 & 18.8$\pm$0.11 & -4.01$\pm$2.13\\
 &  &  & SS & \textbf{-14.7}$\pm$1.99 & -11.2$\pm$1.99 & -5.51$\pm$1.99 & -11.1$\pm$1.99 & -11.0$\pm$1.99 & - & - & - & - & - & -\\ \hline\hline
\multirow{8}{*}{ER} & \multirow{4}{*}{Noisy} & \multirow{2}{*}{BL} & VS & \textbf{-18.6}$\pm$1.40 & -16.6$\pm$1.44 & 209$\pm$19.2 & 198$\pm$18.5 & 209$\pm$19.2 & -17.3$\pm$1.43 & -14.8$\pm$1.46 & 30.1$\pm$22.8 & 142$\pm$12.2 & 18.0$\pm$0.32 & -16.5$\pm$1.40 \\
 &  &  & SS & -24.2$\pm$1.99 & -18.2$\pm$1.99 & -24.2$\pm$1.99 & \textbf{-24.3}$\pm$1.99 & -12.6$\pm$1.99 & -18.2$\pm$1.99 & - & - & - & - & -\\ \cline{3-15} 
 &  & \multirow{2}{*}{FB} & VS & -\textbf{18.8}$\pm$1.53 & -17.2$\pm$1.46 & -14.3$\pm$1.50 & -14.1$\pm$1.43 & -14.1$\pm$1.43 & - & -14.8$\pm$1.46 & 53.1$\pm$14.1 & 147$\pm$11.8 & 18.1$\pm$0.31 & -16.8$\pm$1.42\\
 &  &  & SS & \textbf{-25.9}$\pm$2.00 & -19.5$\pm$2.00 & -21.3$\pm$1.99 & -19.5$\pm$1.99 & -19.5$\pm$1.99 & - & - & - & - & - & -\\ \cline{2-15} 
 & \multirow{4}{*}{Clean} & \multirow{2}{*}{BL} & VS & \textbf{-26.7}$\pm$1.64 & -17.5$\pm$1.51 & 62.4$\pm$15.9 & 66.1$\pm$14.0 & 62.1$\pm$15.9 & -17.6$\pm$1.47 & -14.8$\pm$1.46 & -4.95$\pm$22.8 & 137$\pm$12.7 & 17.8$\pm$0.33 & -16.9$\pm$1.46\\
 &  &  & SS & \textbf{-26.7}$\pm$2.00 & -18.7$\pm$1.99 & \textbf{-26.7}$\pm$2.00 & -13.9$\pm$2.00 & -18.7$\pm$1.99 & -26.5$\pm$2.00 & - & - & - & - & -\\ \cline{3-15} 
 &  & \multirow{2}{*}{FB} & VS & \textbf{-25.2}$\pm$1.91 & -18.8$\pm$1.61 & -19.9$\pm$1.63 & -18.8$\pm$1.61 & -18.8$\pm$1.61 & - & -14.8$\pm$1.46 & 14.4$\pm$20.2 & 141$\pm$12.2 & 18.0$\pm$0.32 & -17.7$\pm$1.54\\
 &  &  & SS & \textbf{-28.2}$\pm$2.01 & -20.1$\pm$2.00 & -22.1$\pm$2.00 & -20.1$\pm$2.00 & -20.1$\pm$2.00 & - & - & - & - & - & -\\ \hline\hline
\multirow{8}{*}{2D grid} & \multirow{4}{*}{Noisy} & \multirow{2}{*}{BL} & VS & \textbf{-8.25}$\pm$1.16 & -8.24$\pm$1.16 & 311$\pm$18.1 & 204$\pm$13.8 & 310$\pm$18.1 & -8.11$\pm$1.15 & -8.07$\pm$1.15 & -2.85$\pm$17.1 & 132$\pm$23.5 & 18.9$\pm$0.09 & -5.96$\pm$1.14\\
 &  &  & SS & \textbf{-8.31}$\pm$1.99 & -8.24$\pm$1.99 & -7.57$\pm$1.99 & -6.13$\pm$1.98 & -7.85$\pm$1.99 & -7.67$\pm$1.99 & - & - & - & - & -\\ \cline{3-15} 
 &  & \multirow{2}{*}{FB} & VS & \textbf{-12.4}$\pm$1.15 & -10.6$\pm$1.17 & -5.58$\pm$1.35 & -10.5$\pm$1.16 & -10.3$\pm$1.15 & - & -8.12$\pm$1.15 & 66.2$\pm$18.7 & 156$\pm$18.4 & 18.9$\pm$0.10 & -2.34$\pm$1.88 \\
 &  &  & SS & \textbf{-13.7}$\pm$1.99 & -10.7$\pm$1.99 & -5.75$\pm$1.99 & -10.6$\pm$1.99 & -10.4$\pm$1.99 & - & - & - & - & - & - \\ \cline{2-15} 
 & \multirow{4}{*}{Clean} & \multirow{2}{*}{BL} & VS & \textbf{-8.51}$\pm$1.18 & -8.33$\pm$1.18 & 72.0$\pm$17.3 & 36.0$\pm$10.4 & 71.2$\pm$17.2 & -8.32$\pm$1.17 & -8.23$\pm$1.16 & -6.38$\pm$1.08 & -6.41$\pm$1.62 & 18.7$\pm$0.11 &-8.32$\pm$1.17\\
 &  &  & SS & \textbf{-8.51}$\pm$1.99 & -8.39$\pm$1.99 & \textbf{-8.51}$\pm$1.99 & -8.25$\pm$1.99 & -8.39$\pm$1.99 & -8.48$\pm$1.99 & - & - & - & - & -\\ \cline{3-15} 
 &  & \multirow{2}{*}{FB} & VS & \textbf{-14.0}$\pm$1.29 & -11.3$\pm$1.24 & -6.45$\pm$1.37 & -11.2$\pm$1.24 & -11.2$\pm$1.24 & - & -8.13$\pm$1.15 & 61.3$\pm$20.6 & 153$\pm$18.3 & 18.9$\pm$0.11 & -3.29$\pm$1.82\\
 &  &  & SS & \textbf{-14.5}$\pm$2.00 & -11.0$\pm$1.99 & -6.06$\pm$1.99 & -10.9$\pm$1.99 & -10.8$\pm$1.99 & - & - & - & - & - & -\\ \hline
\end{tabular}
\end{table*}

In this section, we validate the effectiveness of the proposed method via signal recovery experiments. We demonstrate that the approach described in Section \ref{sec:graph_wiener} reduces the reconstruction error in comparison with existing recovery techniques under the stochastic setting.

\begin{figure}[t!]
  \centering
  \includegraphics[width=1\linewidth]{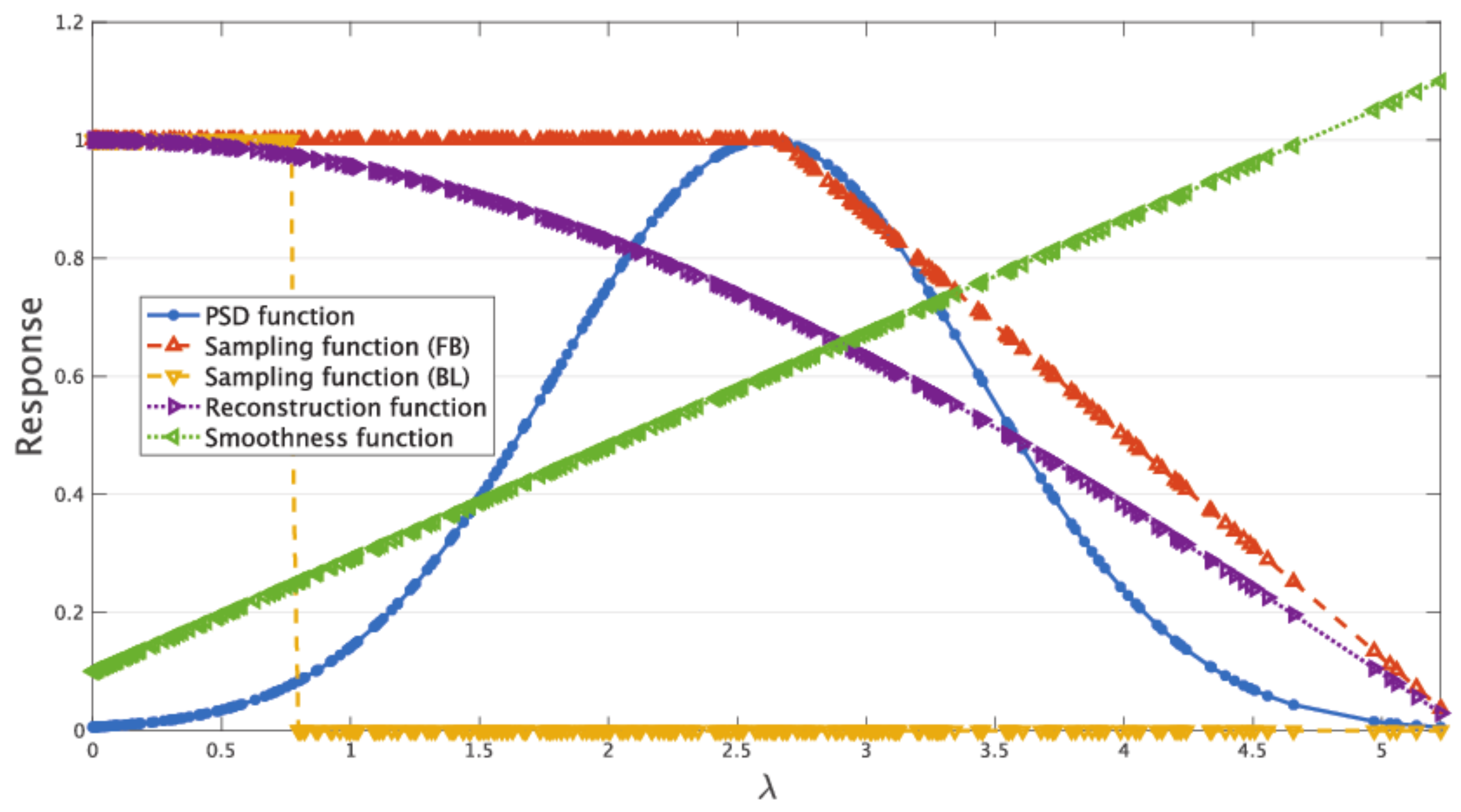}
  \caption{Graph frequency responses of several functions used for experiments.}\label{ex1_res_filter}
\end{figure}

\begin{figure*}[t!]
\centering
 \subfigure[][Original]
  {\includegraphics[width=0.23\linewidth]{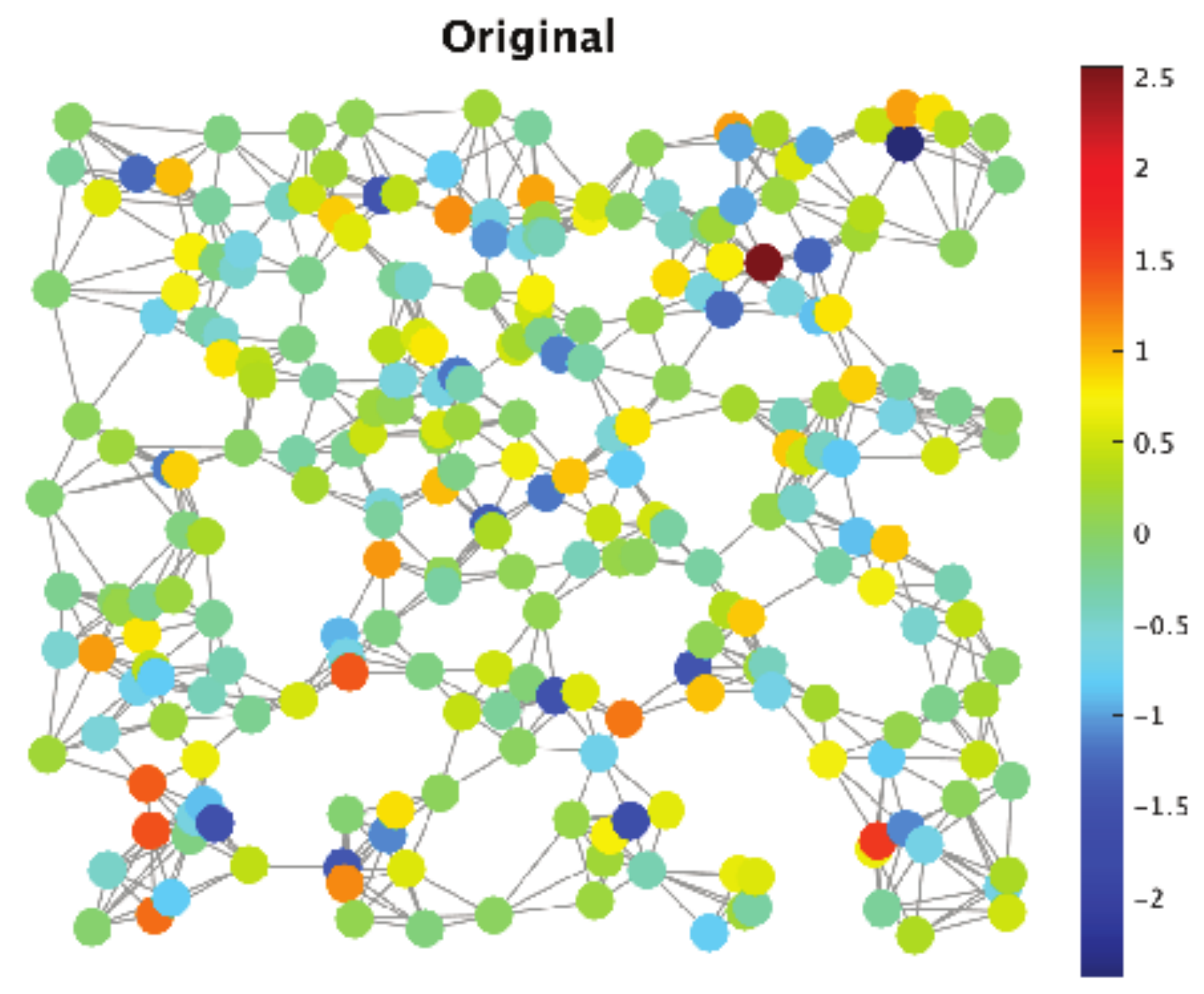} \label{ex1_ori}}
 \subfigure[][Proposed: UNC ST.]
  {\includegraphics[width=0.23\linewidth]{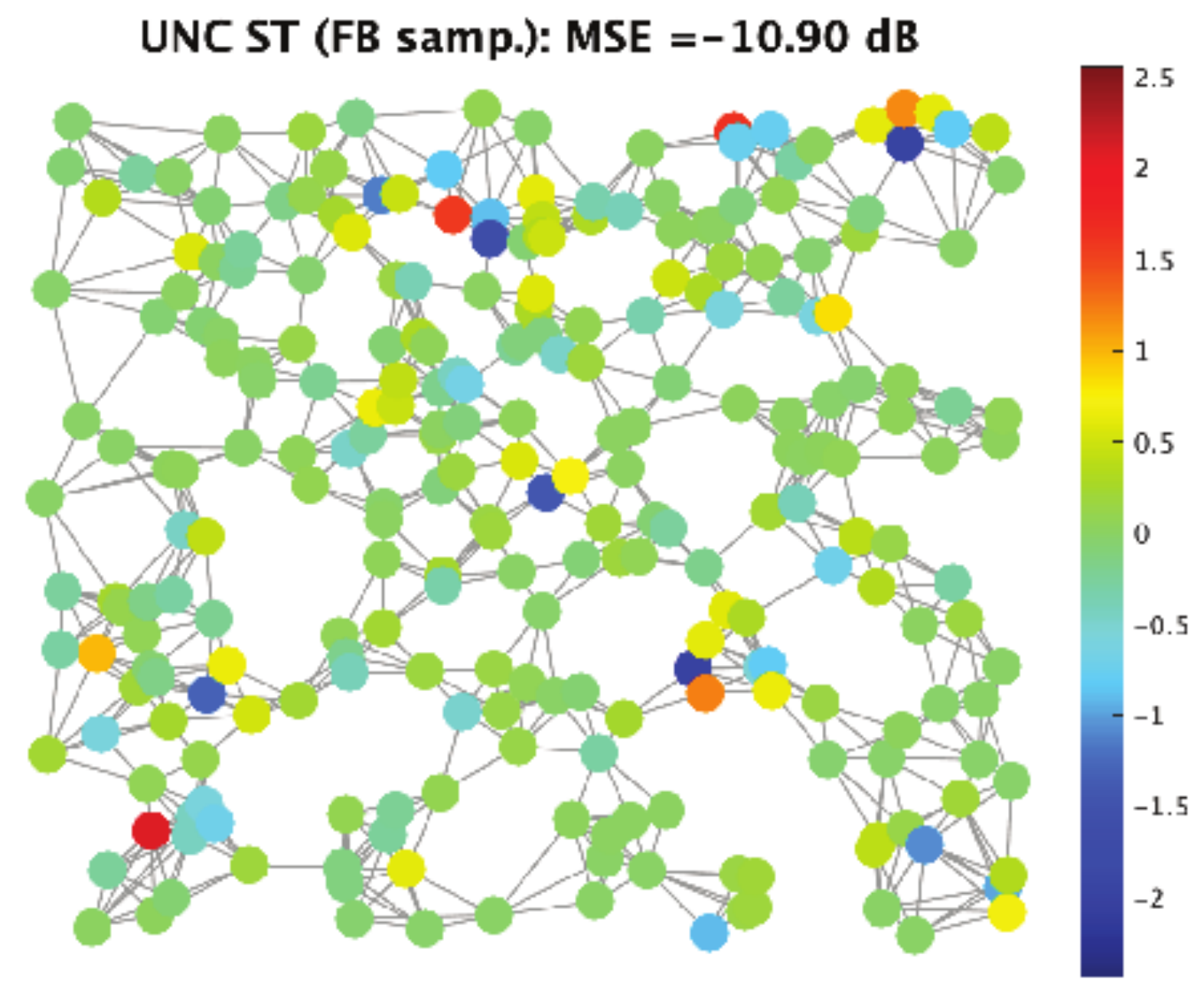} \label{ex1_unc_st_nbl}}
  \subfigure[][Proposed.: PRE ST.]
  {\includegraphics[width=0.23\linewidth]{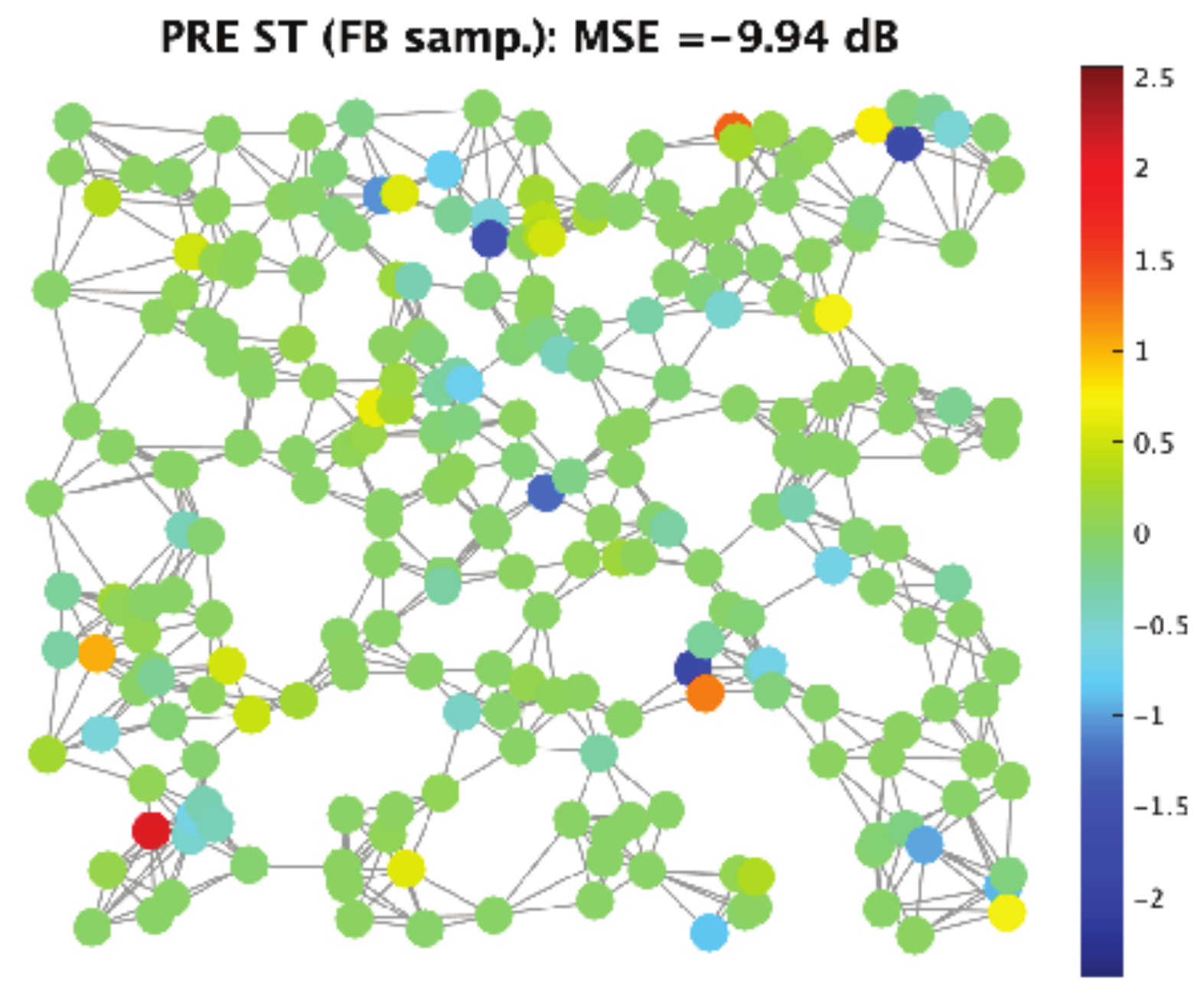} \label{ex1_con_st_nbl}}
 \subfigure[][UNC SM\cite{tanaka_generalized_2020}.]
  {\includegraphics[width=0.23\linewidth]{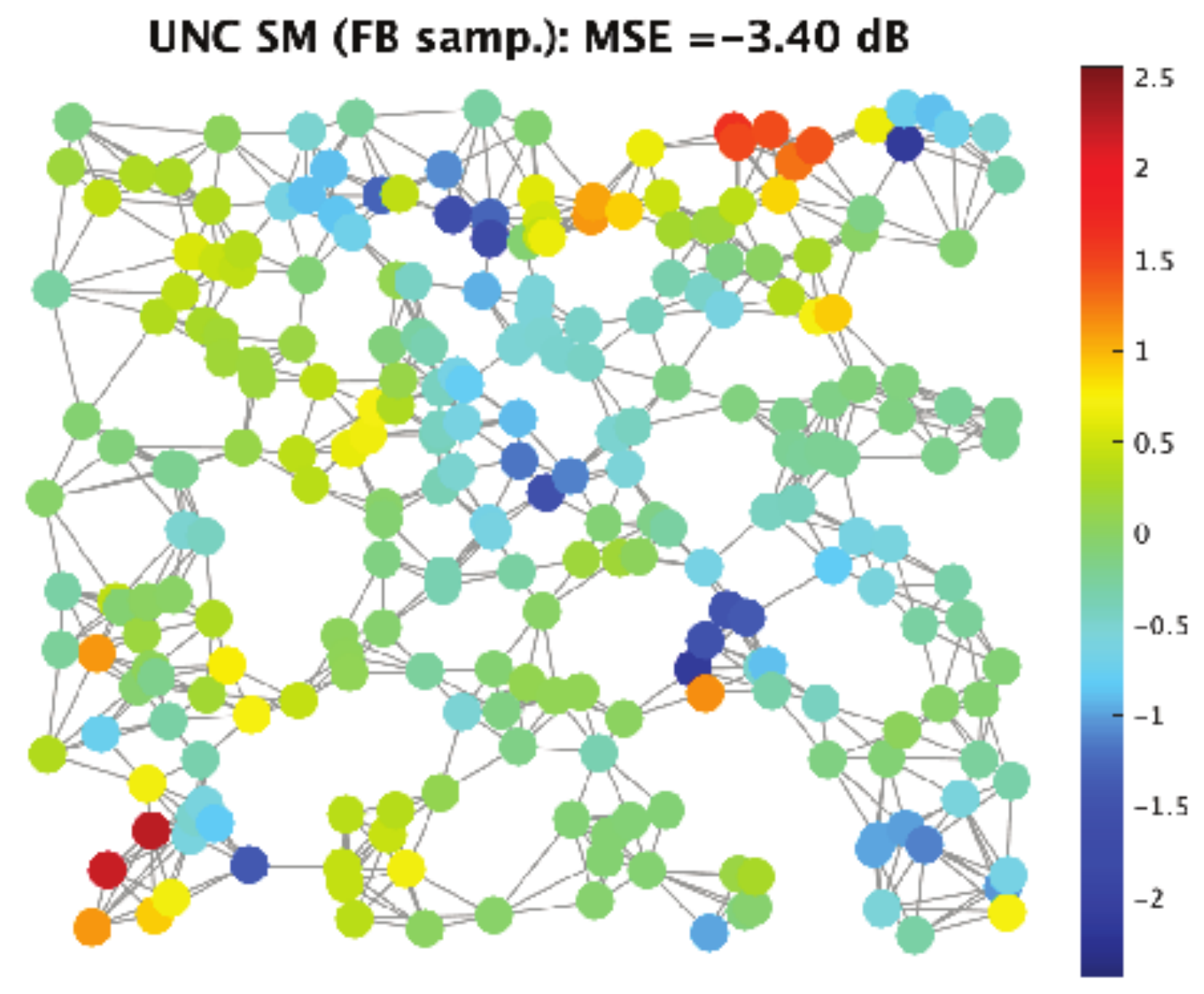} \label{ex1_unc_sm_nbl}}
 \subfigure[][PRE SM LS\cite{tanaka_generalized_2020}.]
  {\includegraphics[width=0.23\linewidth]{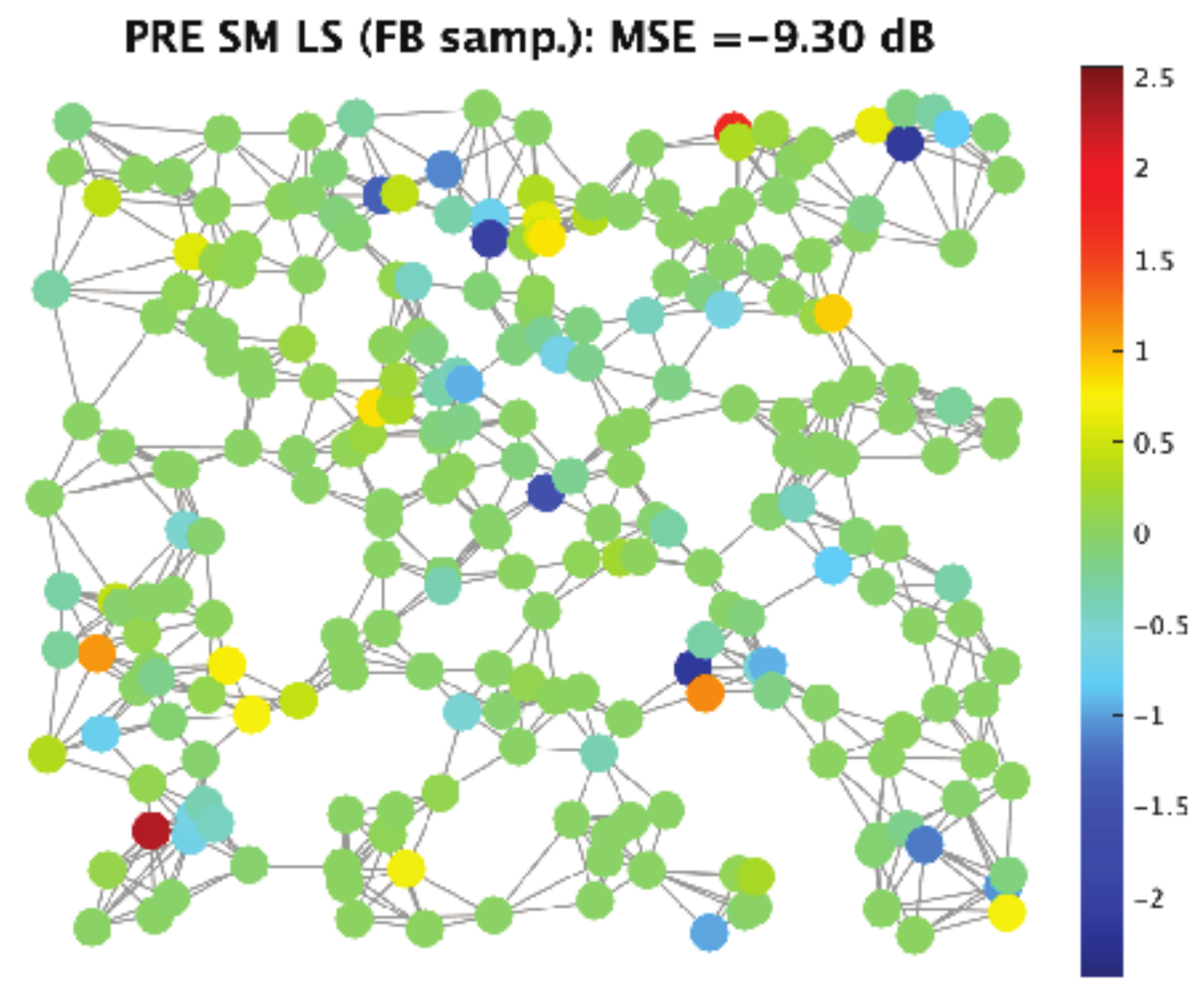} \label{ex1_con_sm_ls_nbl}}
 \subfigure[][PRE SM MX\cite{tanaka_generalized_2020}.]
  {\includegraphics[width=0.23\linewidth]{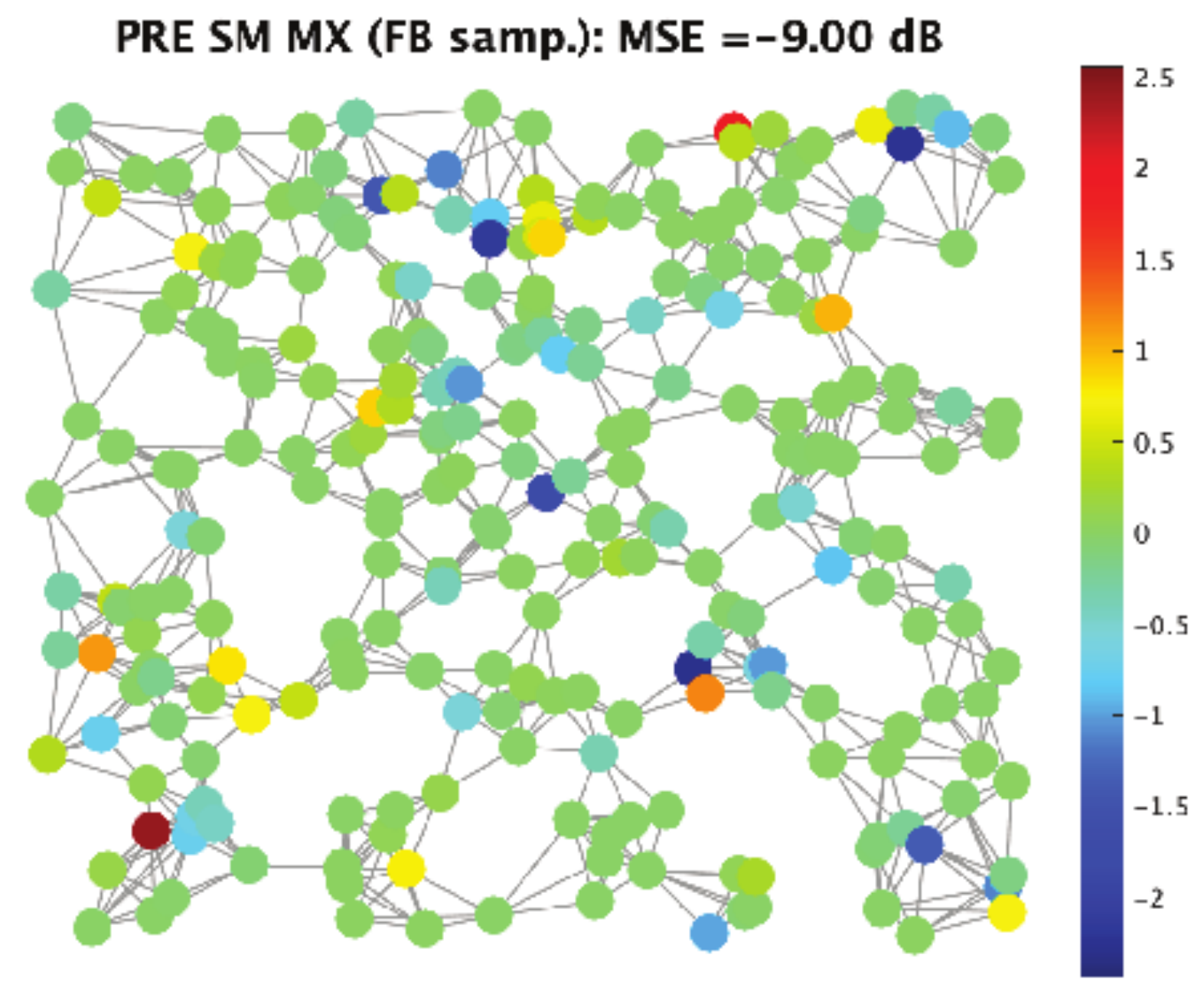} \label{ex1_con_sm_mx_nbl}}
 \subfigure[][BL\cite{tanaka_spectral_2018}.]
  {\includegraphics[width=0.23\linewidth]{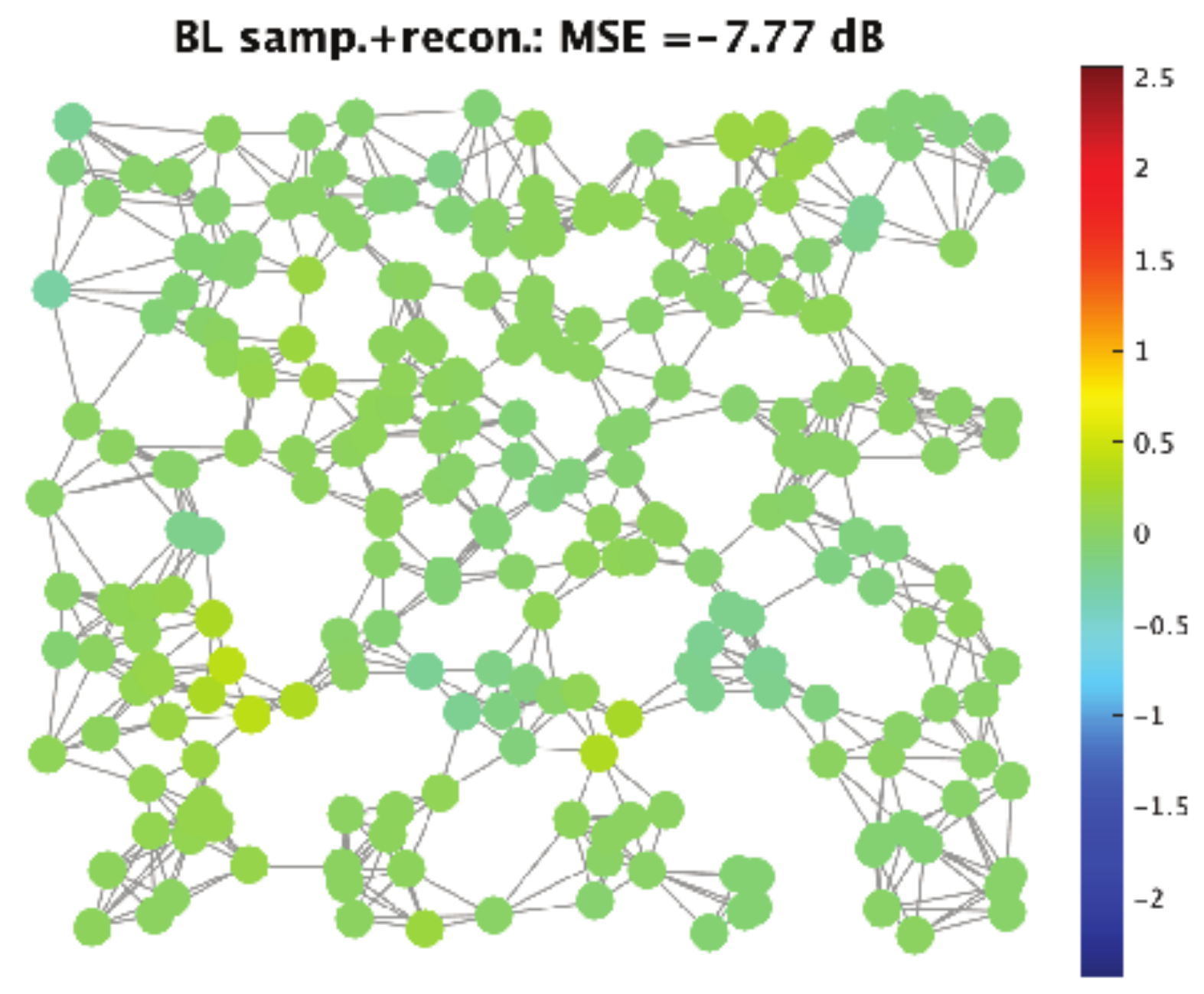} \label{ex1_bl}}
 \subfigure[][MKVV\cite{heimowitz_smooth_2018}.]
  {\includegraphics[width=0.23\linewidth]{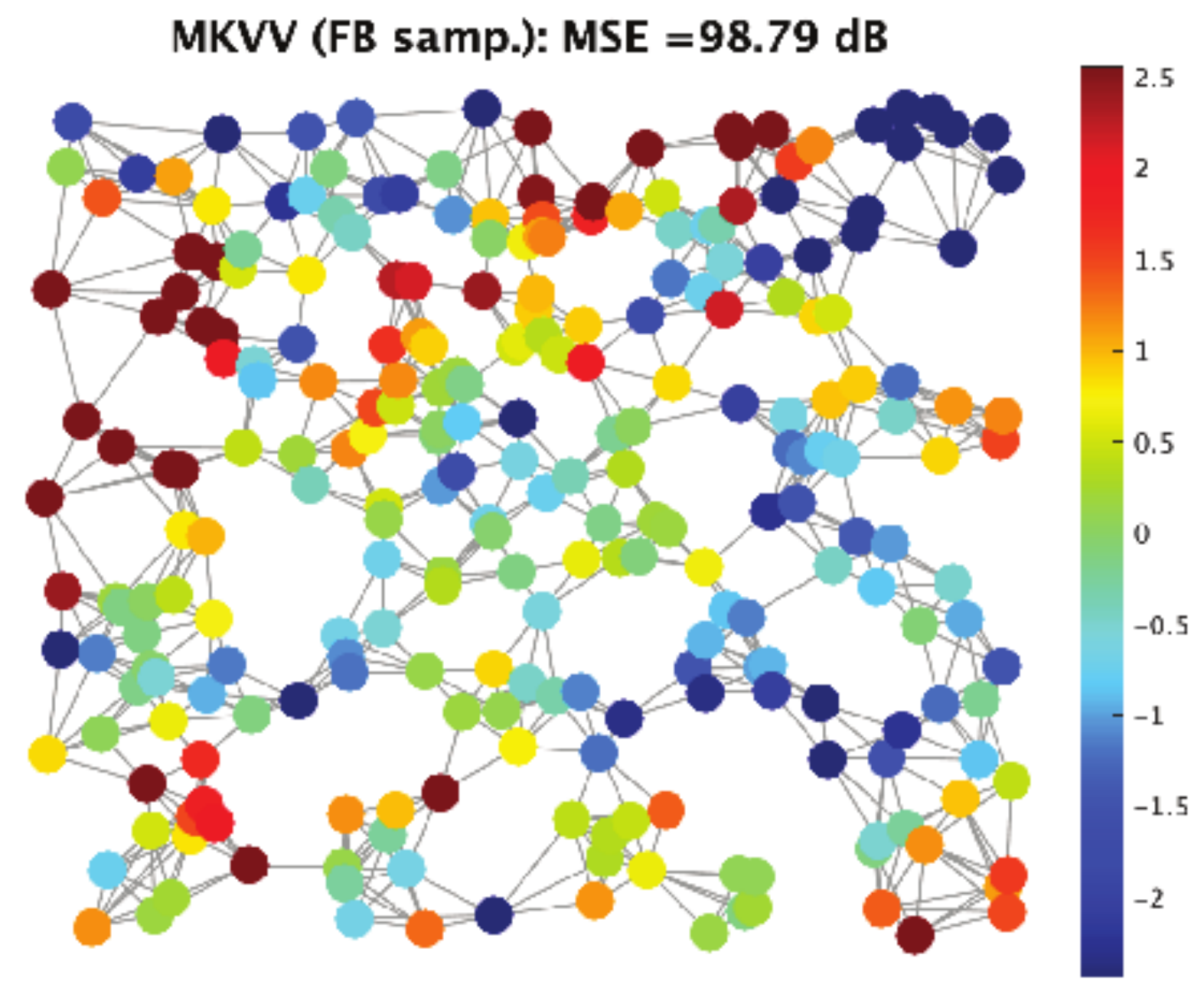} \label{ex1_unc_st_bl}}
 \subfigure[][MKVD\cite{narang_signal_2013}.]
  {\includegraphics[width=0.23\linewidth]{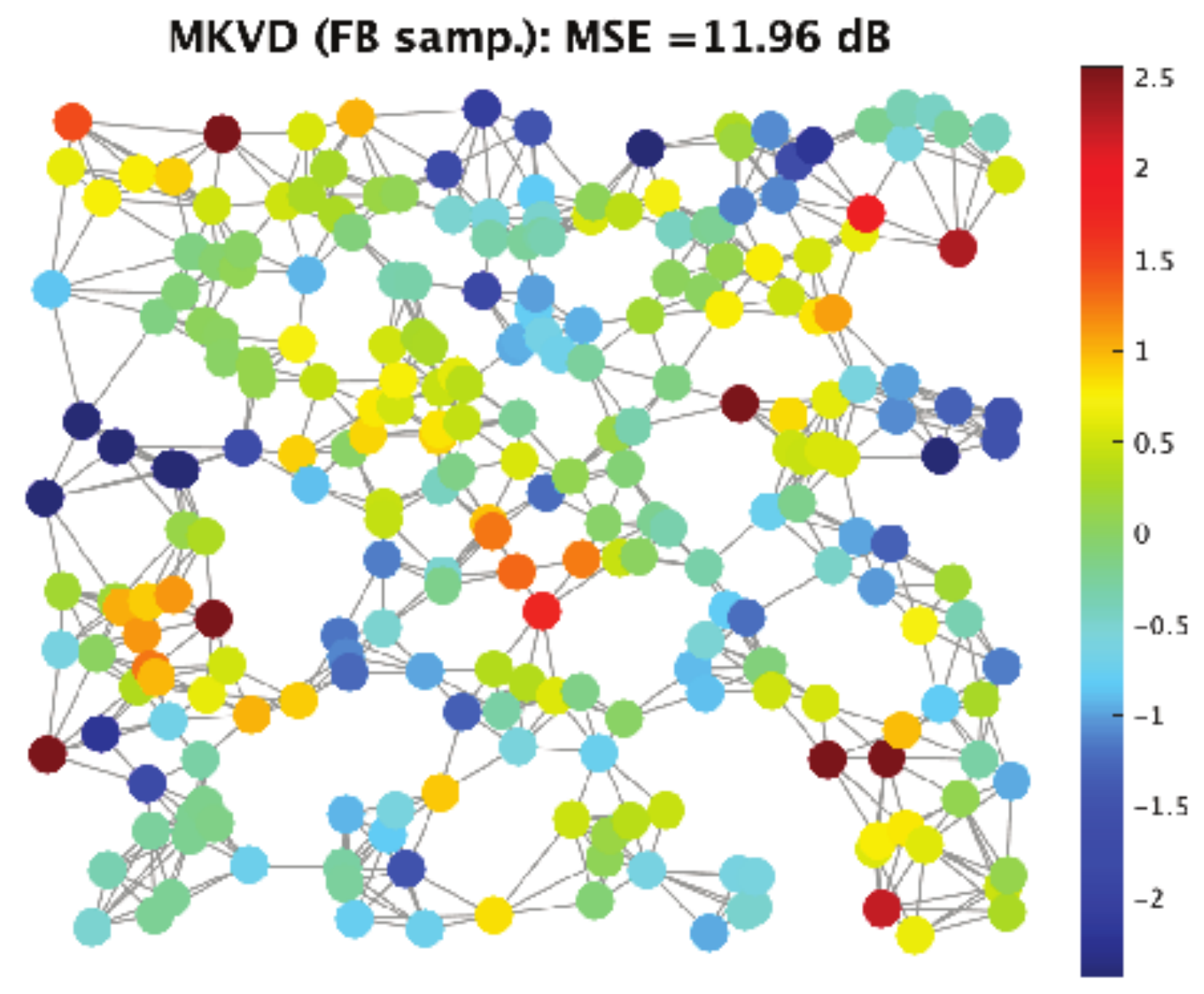} \label{ex1_con_st_bl}}
 \subfigure[][NLPD\cite{chen_discrete_2015}.]
  {\includegraphics[width=0.23\linewidth]{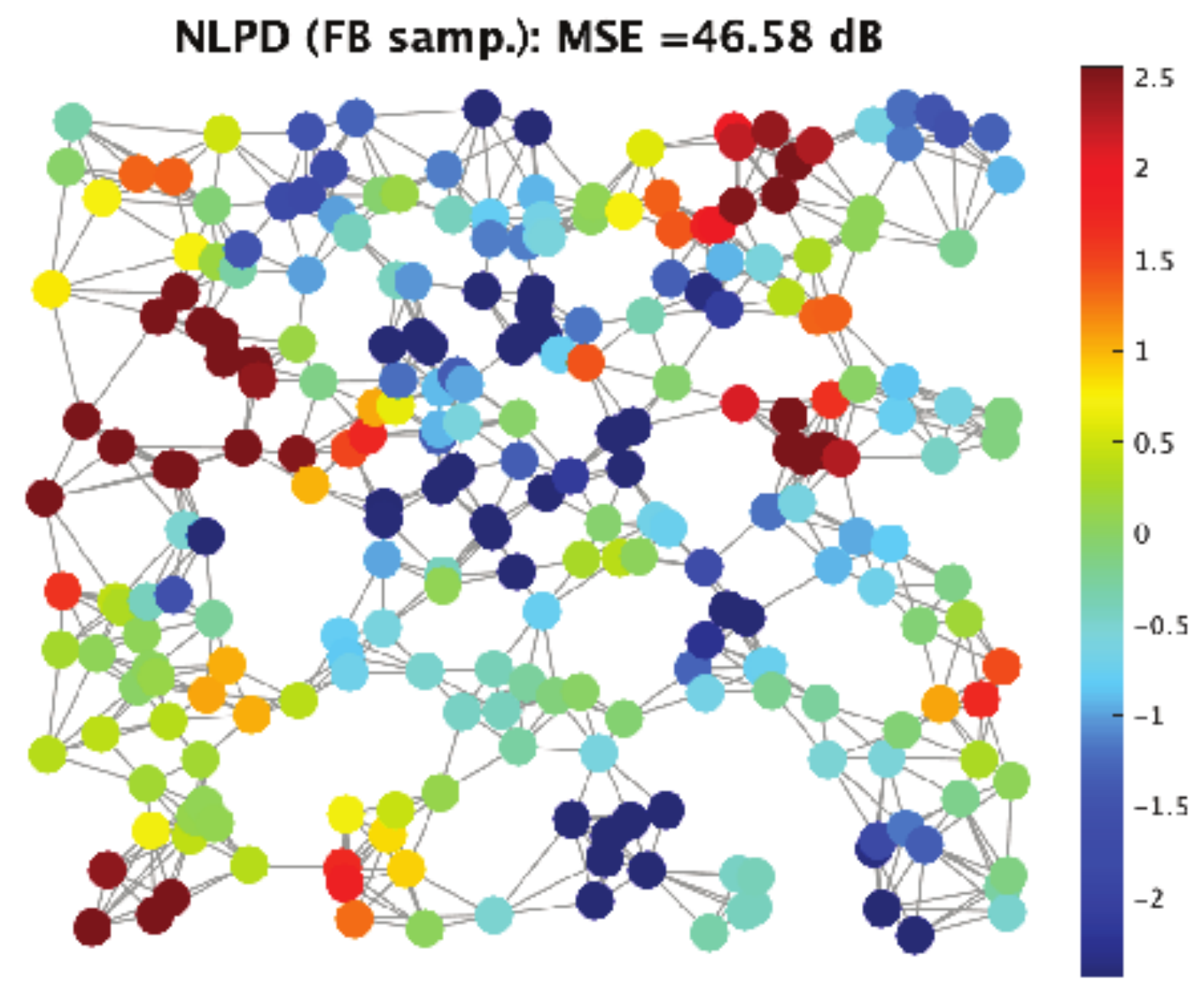} \label{ex1_unc_sm_bl}}
 \subfigure[][GSOD\cite{segarra_reconstruction_2016}.]
  {\includegraphics[width=0.23\linewidth]{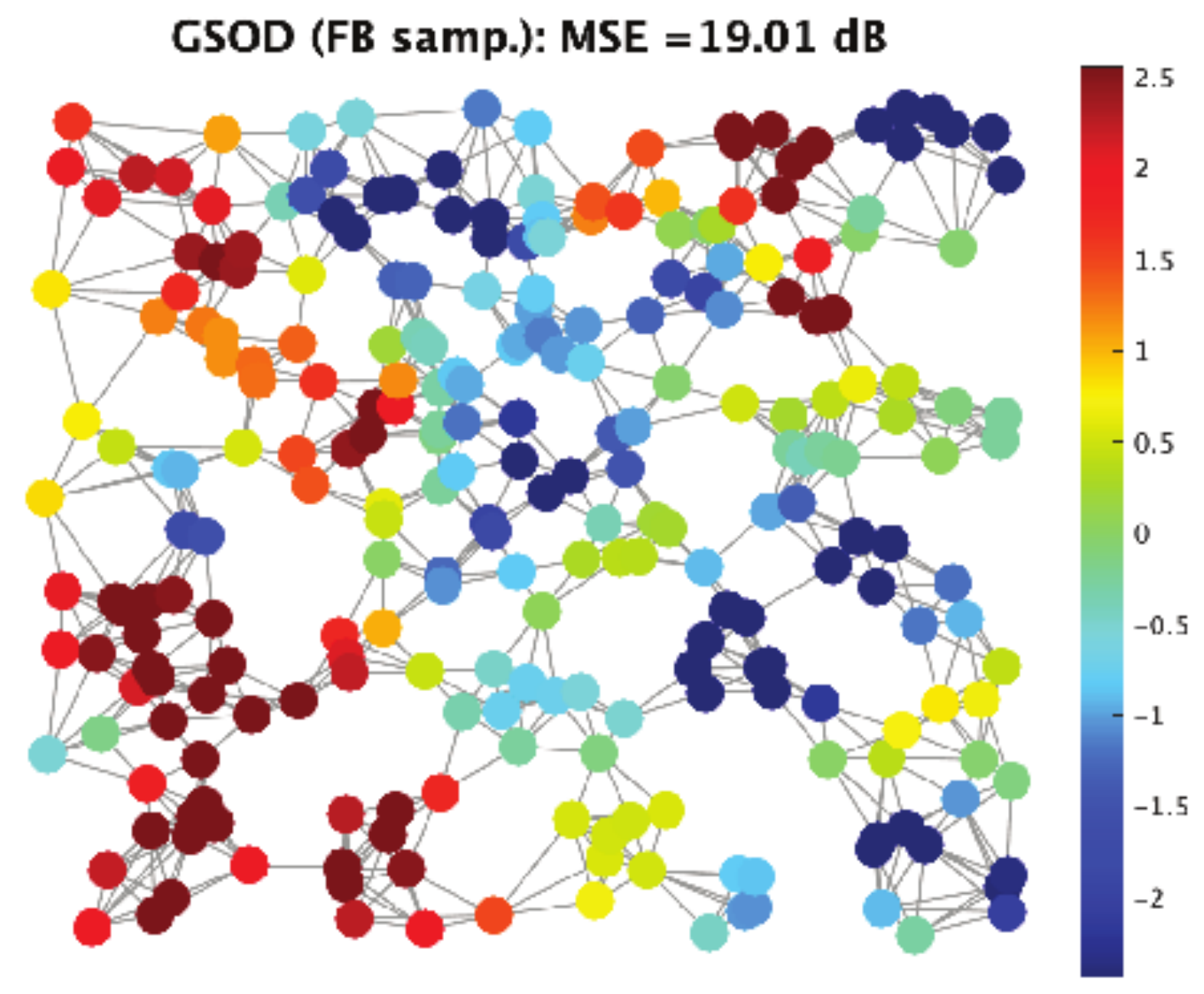} \label{ex1_con_sm_ls_bl}}
 \subfigure[][NLPI\cite{narang_localized_2013}. ]
  {\includegraphics[width=0.23\linewidth]{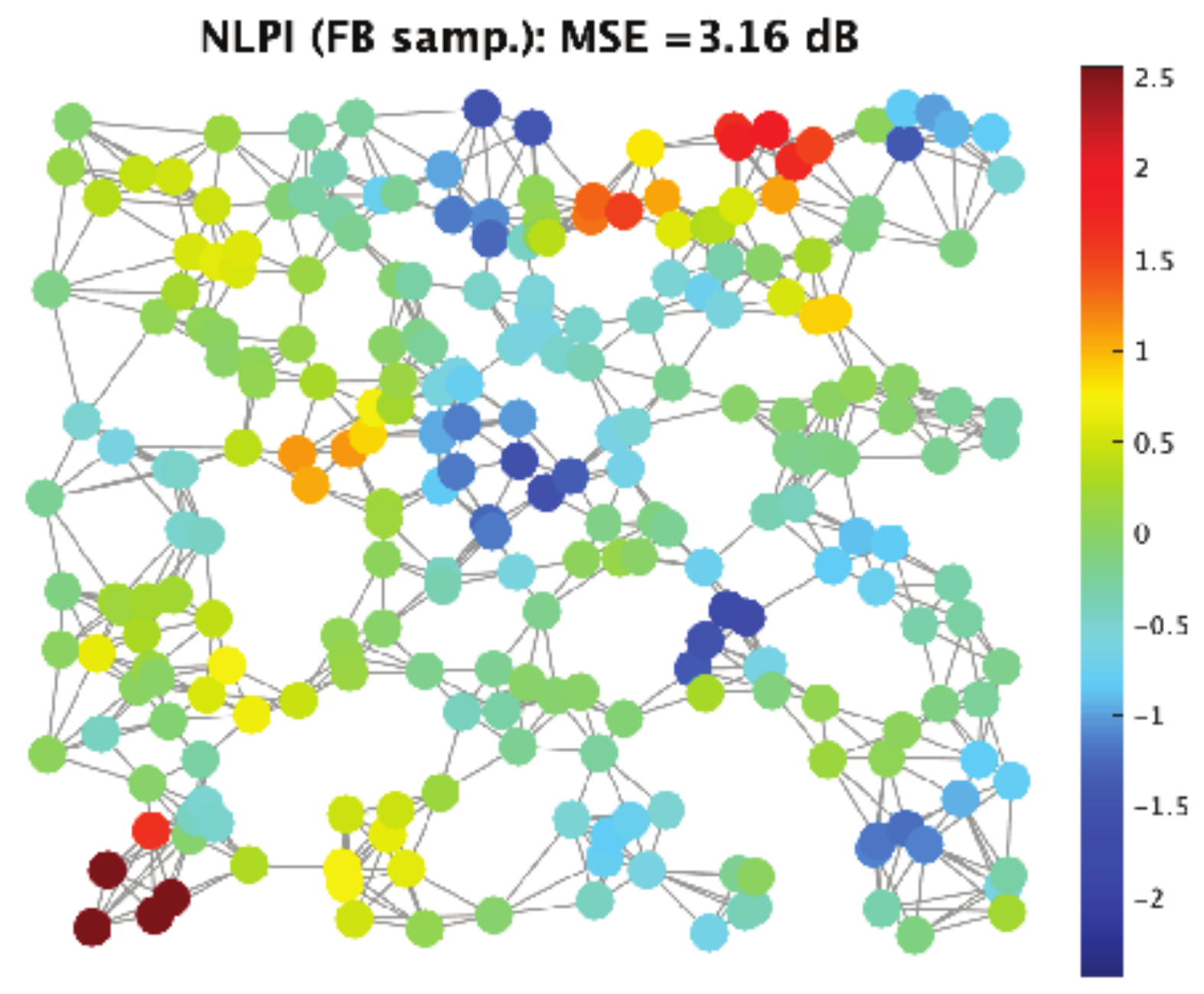} \label{ex1_con_sm_mx_bl}}
\caption{Signal recovery experiments for noisy graph signals on a random sensor graph with $N=256$. Sampling is performed by \eqref{eq:sampler_fb} in the vertex domain. UNC and PRE denote the unconstrained and predefined solutions. SM and ST denote recovery with smoothness and stochastic priors. LS and MX denote least-squares and minimax criteria.} \label{fig:ex1_recov}
\end{figure*}

\subsection{Synthetic Graph Signals}

In this subsection, we perform signal recovery experiments for synthetic graph signals.

\subsubsection{Sampling and Recovery Setting}
We perform signal recovery experiments of synthetic graph signals. The recovery framework is illustrated in Fig. \ref{gene_corrupt_filter_res}.

Here, we consider the following three graphs with $N=256$:
\begin{itemize}
\item Random sensor graph \footnote{Random sensor graphs are implemented by $k$ nearest neighbor graphs, whose vertices are randomly distributed in 2-D space $[0,1]\times [0,1]$ \cite{perraudin_gspbox:_2014}.};
\item Erd\H{o}s-R\'{e}nyi (ER) graph with edge connection probability $p=0.3$; and
\item 2D grid graph.
\end{itemize}
We use the following functions throughout the experiments.
\begin{itemize}
\item PSD function:
\begin{align}
\displaystyle
\widehat{\Gamma}_x(\lambda_i)=\exp\left(-\left\{\frac{2\lambda_i-\lambda_{\max}}{\sqrt{\lambda_{\max}}}\right\}^2\right).
\end{align}
\item Sampling function \#1 (for full-band sampling):
\begin{align}
S(\lambda_i)=
\begin{cases}
1&\lambda_{\max}/\lambda_{i}\leq 2\\
2-2\lambda_i/(\lambda_{\max})&\text{otherwise}.
\end{cases}\label{eq:sampler_fb}
\end{align}
\item Sampling function \#2 (for bandlimited sampling):
\begin{align}
S(\lambda_i)=
\begin{cases}
1& i\in[0,K-1],\\
0& \text{otherwise}.
\end{cases}\label{eq:sampler_bl}
\end{align}
\item Reconstruction function (for constrained recovery):
\begin{align}
\displaystyle
W(\lambda_i)=\cos\left(\frac{\pi}{2}\cdot\frac{\lambda_i}{\lambda_{\max}}\right).\label{eq:con_recon_kernel}
\end{align}
\item Smoothness function (for recovery with smoothness prior by \cite{tanaka_generalized_2020}):
\begin{align}
\displaystyle
V(\lambda_i)=\frac{\lambda_i}{\lambda_{\max}}+\varepsilon\label{eq:smoother}
\end{align}
where we set to $\varepsilon = 0.1$.
\end{itemize}
\label{exp}
Their spectral responses are shown in Fig. \ref{ex1_res_filter}.

A stochastic graph signal is generated by $\bm{x}\sim\mathcal{N}(\bm{0},\mb{\Gamma}_x)$, where $\mb{\Gamma}_x$ satisfies the GWSS conditions described in Section \ref{sec:samp_stationarity}, i.e., $\mb{\Gamma}_x=\mb{U}\widehat{\Gamma}_x(\mb{\Lambda})\mb{U}^*$. 

As mentioned in Appendix \ref{sec:stationarity_detail}, existing definitions of GWSS coincide when $\mb{\Gamma}_x$ is diagonalizable by $\mb{U}$.
Therefore, this setting simultaneously satisfies Definition \ref{definition:gwss2} and the other existing definitions of GWSS \cite{girault_stationary_2015,segarra_stationary_2017,perraudin_stationary_2017}. 

Under the presence of noise, we suppose that noise conforms to i.i.d. zero-mean Gaussian distribution with $\sigma^2=0.3$, i.e., $\eta \sim \mathcal{N}(0,0.3)$ and thus $\Gamma_\eta(\lambda_i)=0.3$ for all $i$. Graph signals are sampled with sampling ratio $N/K=4$.

We perform experiments for two cases. 1) The sampling function $S(\lambda_i)$ is full-band as in \eqref{eq:sampler_fb}. 2) $S(\lambda_i)$ is bandlimited as in \eqref{eq:sampler_bl}. In addition, two sampling domains, i.e., samplings in vertex and graph frequency domains, are considered. For vertex domain sampling, sampled vertices are selected randomly. 
For the predefined recovery, we use the given kernel $W(\lambda_i)$ in \eqref{eq:con_recon_kernel}.

We calculate the average MSE in 20 independent runs and compare with existing graph signal interpolation methods,
MKVV\cite{heimowitz_smooth_2018}, MKVD\cite{narang_signal_2013}, NLPD\cite{chen_discrete_2015}, GSOD\cite{segarra_reconstruction_2016} and NLPI\cite{narang_localized_2013}, and 
recovery methods under smoothness priors proposed in \cite{tanaka_generalized_2020}. Smoothness is measured by the energy of high frequency graph spectra with the measuring function $V(\lambda_i)$. We also show the result of the bandlimited reconstruction \cite{tanaka_spectral_2018}, i.e., reconstruction with the sampling filter \#2 without correction.

Since there has been no prior work on graph signal sampling with stochastic prior, we also compare with graph signal recovery with smoothness prior \cite{tanaka_generalized_2020} as a benchmark.

\subsubsection{Results}
Table \ref{tab:ex1} summarizes the average MSEs with the standard deviations (in decibels). Examples of recovered signals via full-band sampling are visualized in Fig. \ref{fig:ex1_recov}. 

From Table \ref{tab:ex1}, it is observed that the proposed methods consistently present the best MSEs in most cases.
It is regardless of the sampling domain. 
Sometimes, the MSEs for the bandlimited sampling is close to the proposed method, especially for the ER graph.
This is because of the eigenvalue distribution: The ER graph may produce eigenvalues sparsely distributed in low graph frequencies, i.e., $\lambda < \lambda_{\max}/2$ where we consider $\mb{L}$ as a graph operator,
and many eigenvalues exist in high graph frequencies ($\lambda \ge \lambda_{\max}/2$).
This implies that the cutoff frequency at the $K$th eigenvalue can be relatively high, resulting in wide bandwidth.
This also results in the fact that the bandlimited sampling filter in \eqref{eq:sampler_bl} may pass the spectra in the middle graph frequencies ($\lambda \in [\lambda_{\max}/4,3\lambda_{\max}/4]$);
This behavior improves the reconstruction MSEs.
Note that the proposed method presents stable recovery performances for all the graph and signal models considered.

The recovery with smoothness prior increases MSEs especially for vertex domain sampling because the PSD function $\widehat{\Gamma}_x(\lambda_i)$ does not represent smooth signals in this setting.
For the noiseless cases, the proposed methods demonstrate MSE improvements by 3--5 dB on average compared to the noisy case. 
The unconstrained solution of the proposed stochastic recovery results in the best performance for almost all cases.

\begin{table*}[t!]
\setlength{\tabcolsep}{0.3em}
\centering
\caption{Average MSEs of the reconstructed signals for the sea surface temperature data (in decibels). Notations are the same as Table \ref{tab:ex1}.}\label{tab:ex2}
\footnotesize
\begin{tabular}{c|c|cc|ccc|c|ccccc}
\hline
\multicolumn{2}{c|}{Prior} & \multicolumn{2}{c|}{Stochastic} & \multicolumn{3}{c|}{Smoothness \cite{tanaka_generalized_2020}} & \multirow{2}{*}{\begin{tabular}[c]{@{}c@{}}BL\\\cite{tanaka_spectral_2018}\end{tabular}} & \multirow{2}{*}{\begin{tabular}[c]{@{}c@{}}MKVV\\\cite{heimowitz_smooth_2018}\end{tabular}} & \multirow{2}{*}{\begin{tabular}[c]{@{}c@{}}MKVD\\\cite{narang_signal_2013}\end{tabular}} & \multirow{2}{*}{\begin{tabular}[c]{@{}c@{}}NLPD\\\cite{chen_discrete_2015}\end{tabular}} & \multirow{2}{*}{\begin{tabular}[c]{@{}c@{}}GSOD\\\cite{segarra_reconstruction_2016}\end{tabular}} & \multirow{2}{*}{\begin{tabular}[c]{@{}c@{}}NLPI\\\cite{narang_localized_2013}\end{tabular}}\\ \cline{1-7}
\multicolumn{2}{c|}{Reconst./ Criterion} & UNC & PRE & UNC & PRE LS & PRE MX&  &&&&&\\ \hline\hline
\multirow{2}{*}{BL} & VS & \textbf{20.6}$\pm$4.05 & 54.4$\pm$3.17 & 232$\pm$15.5 & 154$\pm$8.71 & 232$\pm$15.5 & 56.0$\pm$3.14 & 60.9$\pm$3.13 & 209$\pm$15.8 & 206$\pm$12.3 & 61.0$\pm$3.13 & 827$\pm$7.05\\
 & SS & \textbf{14.1}$\pm$1.16 & 55.1$\pm$0.15 & \textbf{14.1}$\pm$1.16 & 63.3$\pm$0.14 & 551$\pm$0.15 & 14.2$\pm$1.15 & - & - & - & - & -\\ \hline
 \multirow{2}{*}{FB} & VS & \textbf{28.3}$\pm$4.34 & 54.4$\pm$3.17 & 32.0$\pm$3.93 & 54.5$\pm$3.17 & 54.4$\pm$3.17 & -  & 60.9$\pm$3.13 & 210$\pm$15.8 & 206$\pm$12.0 & 61.0$\pm$3.13 & 828$\pm$6.88\\
 & SS & \textbf{19.5}$\pm$1.11 & 55.1$\pm$0.14 & 22.0$\pm$0.94 & 55.1$\pm$0.14 & 55.1$\pm$0.14 & - & - & - & - & - & -\\ \hline
\end{tabular}
\end{table*}

\begin{figure*}[t!]
\centering
 \subfigure[][Original]
  {\includegraphics[width=0.32\linewidth]{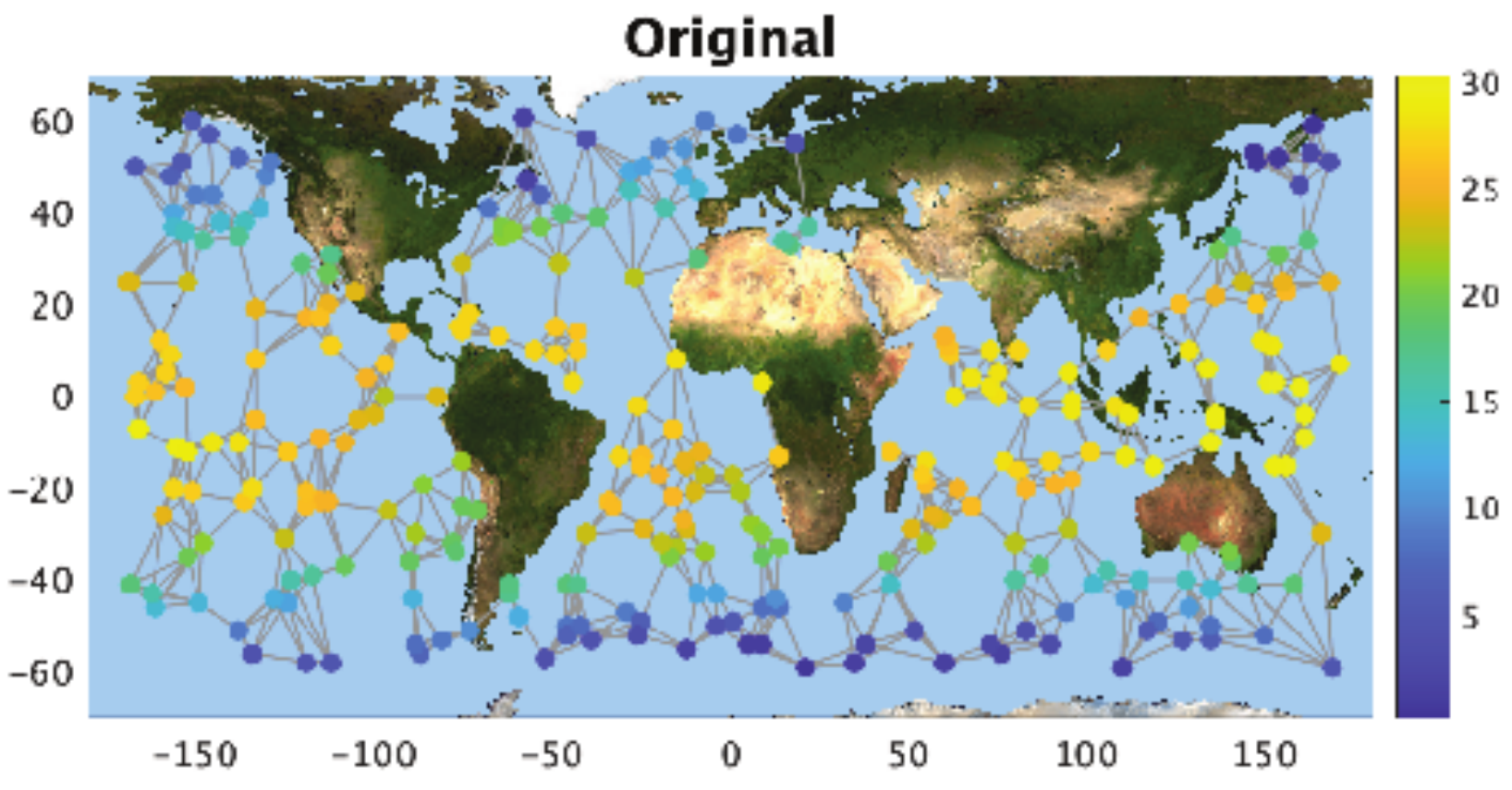} \label{ex1_ori}}
 \subfigure[][Proposed: UNC ST.]
  {\includegraphics[width=0.32\linewidth]{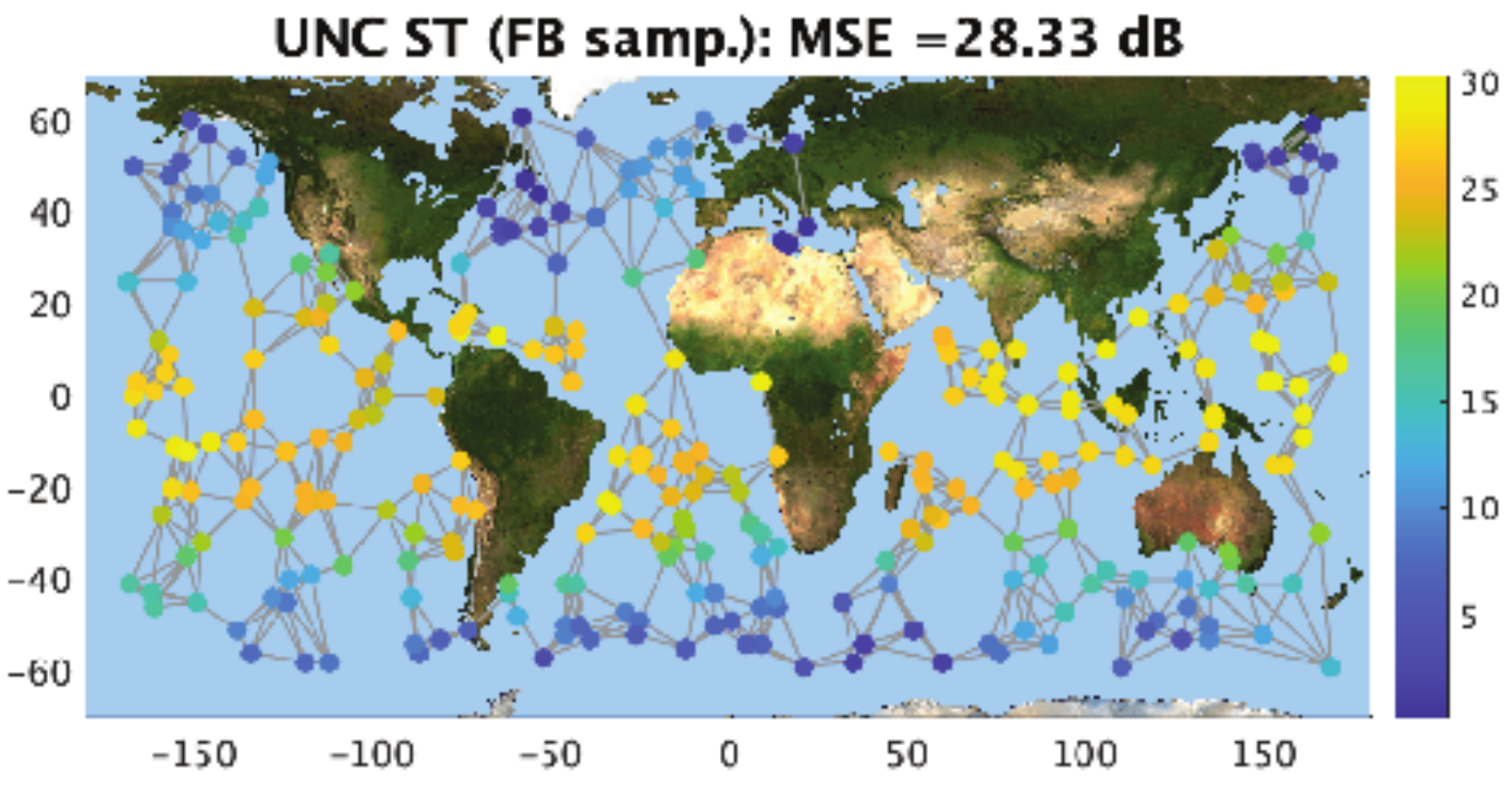} \label{ex1_unc_st_nbl}}
  \subfigure[][Proposed.: PRE ST.]
  {\includegraphics[width=0.32\linewidth]{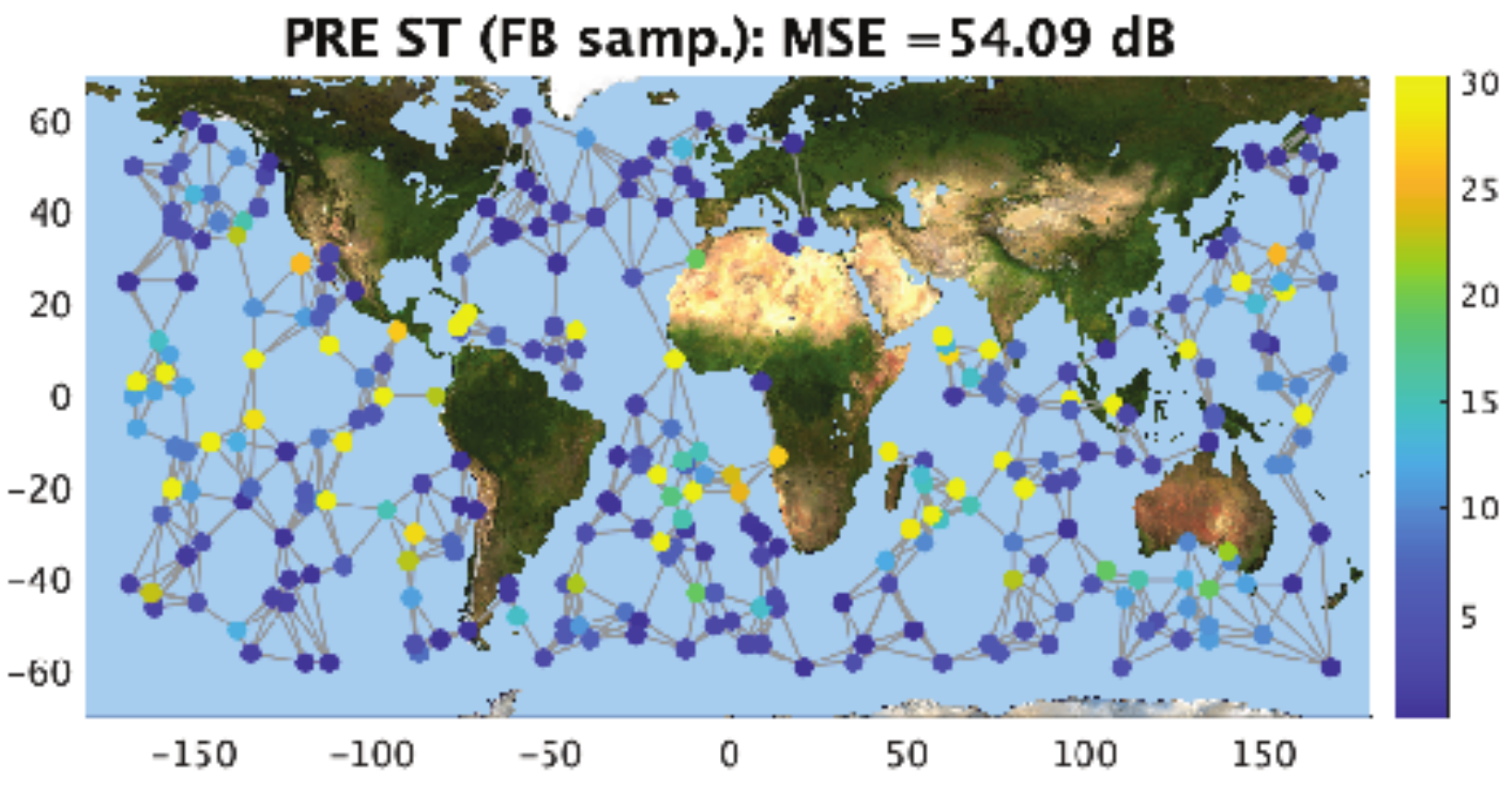} \label{ex1_con_st_nbl}}
 \subfigure[][UNC SM\cite{tanaka_generalized_2020}.]
  {\includegraphics[width=0.32\linewidth]{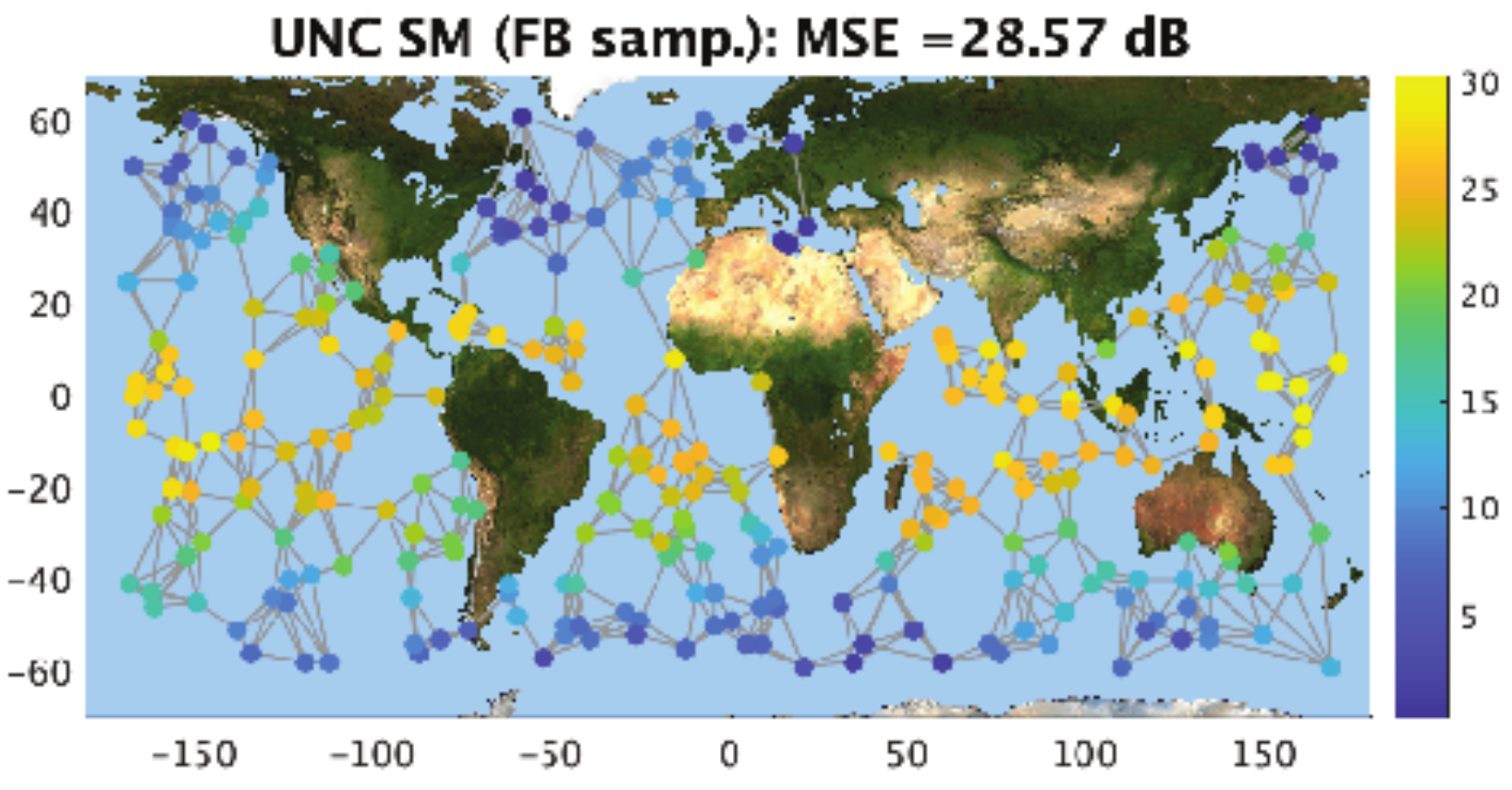} \label{ex1_unc_sm_nbl}}
 \subfigure[][PRE SM LS\cite{tanaka_generalized_2020}.]
  {\includegraphics[width=0.32\linewidth]{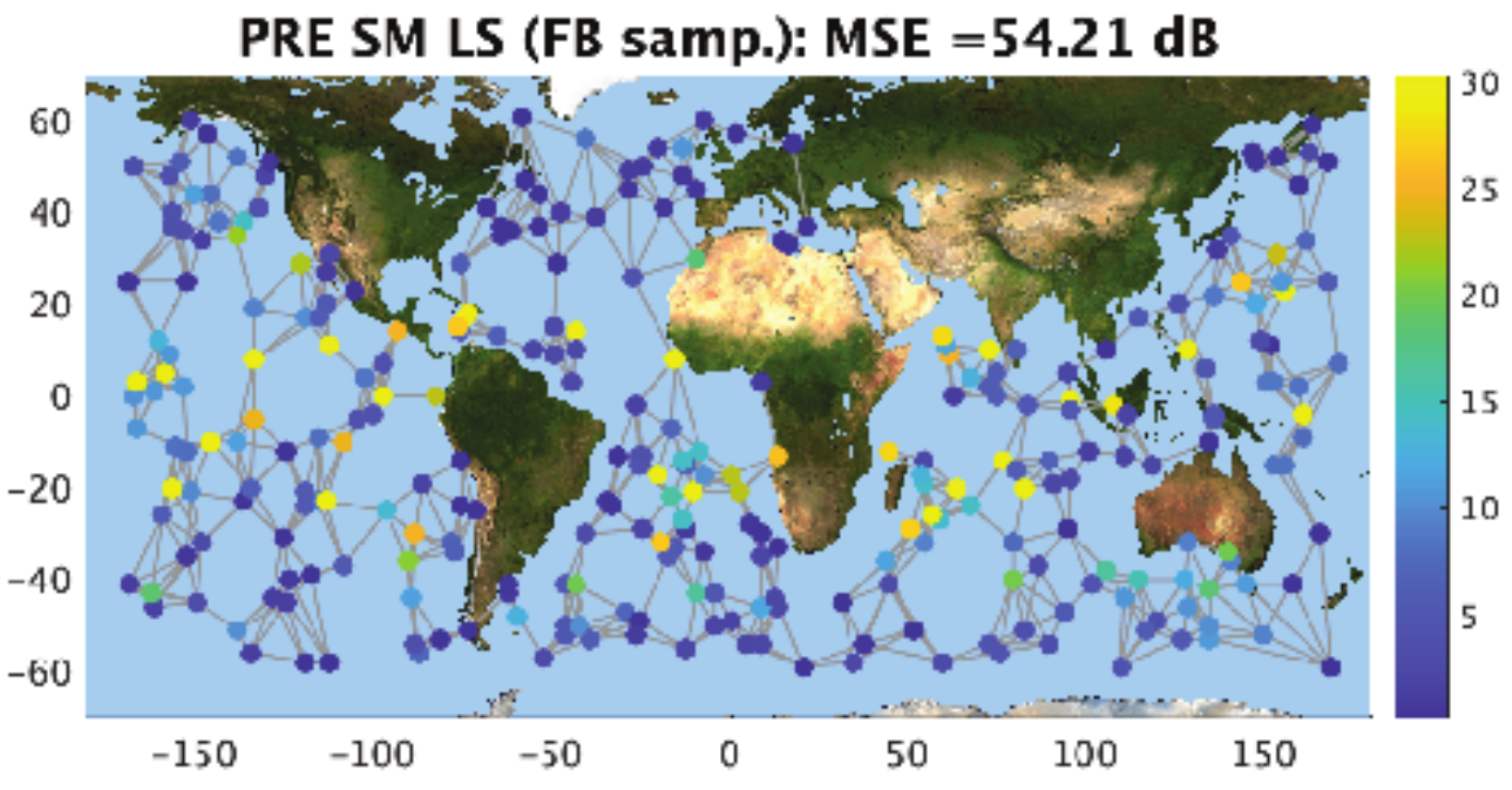} \label{ex1_con_sm_ls_nbl}}
 \subfigure[][PRE SM MX\cite{tanaka_generalized_2020}.]
  {\includegraphics[width=0.32\linewidth]{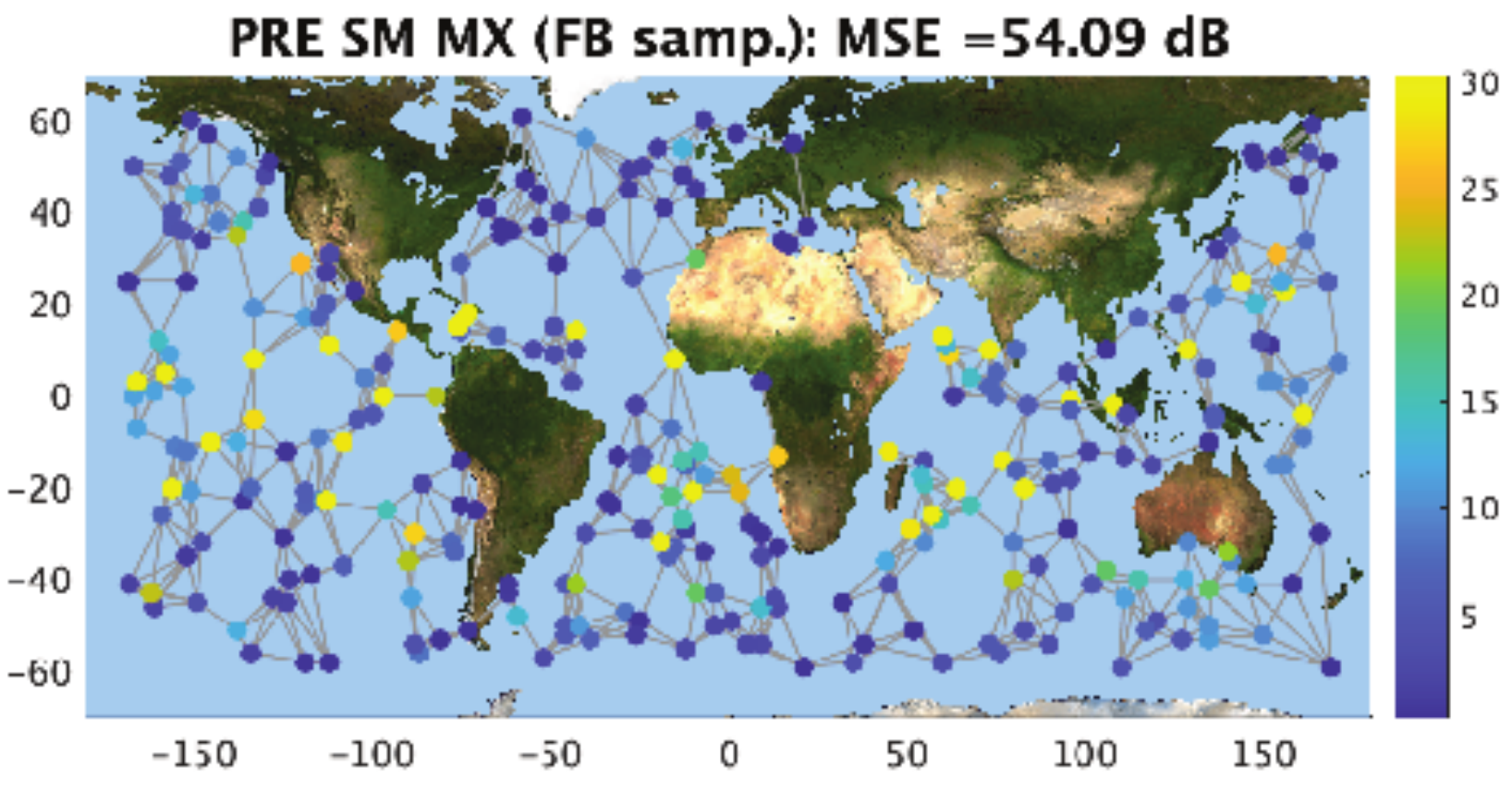} \label{ex1_con_sm_mx_nbl}}
 \subfigure[][BL\cite{tanaka_spectral_2018}.]
  {\includegraphics[width=0.32\linewidth]{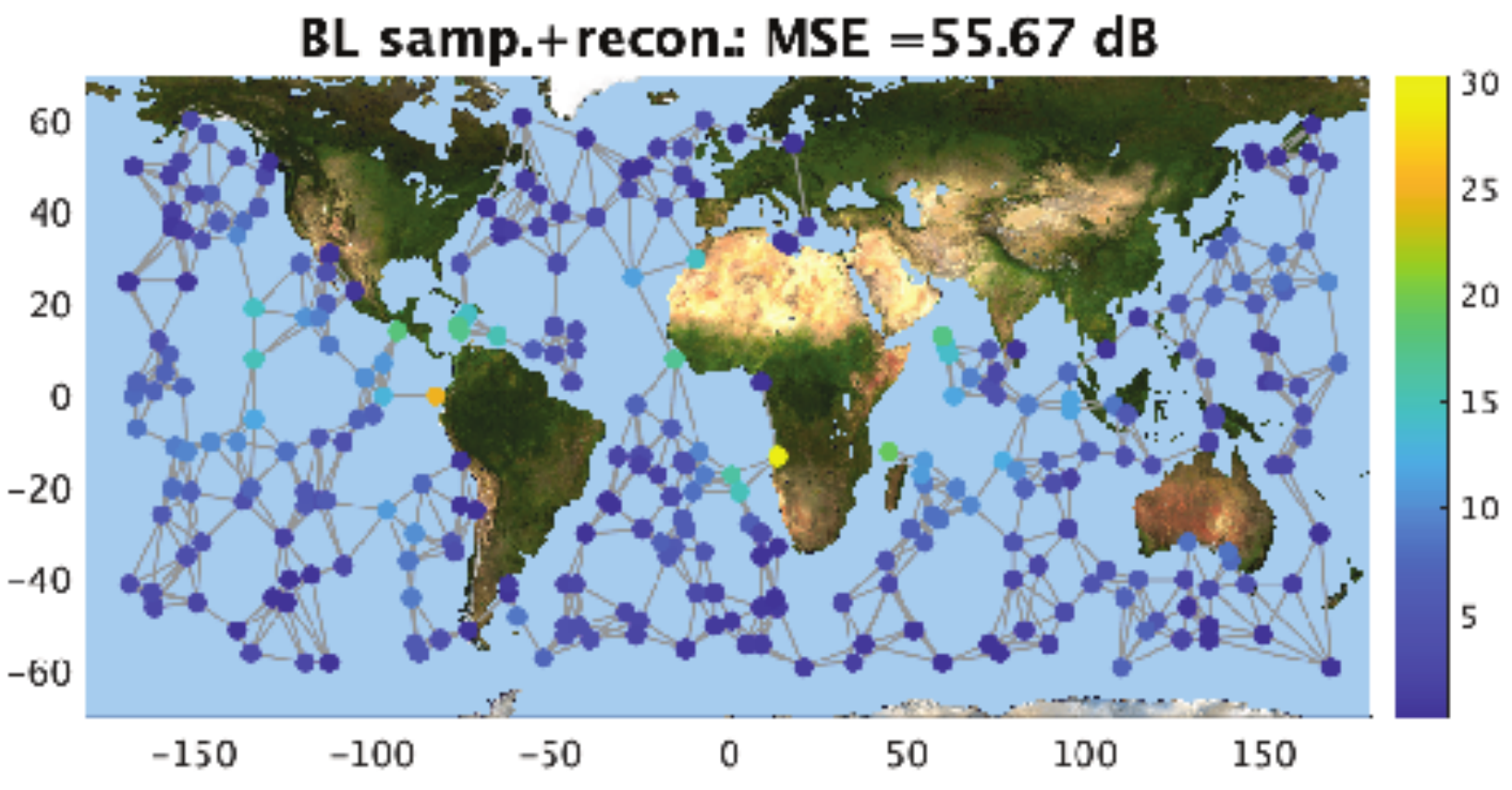} \label{ex1_bl}}
 \subfigure[][MKVV\cite{heimowitz_smooth_2018}.]
  {\includegraphics[width=0.32\linewidth]{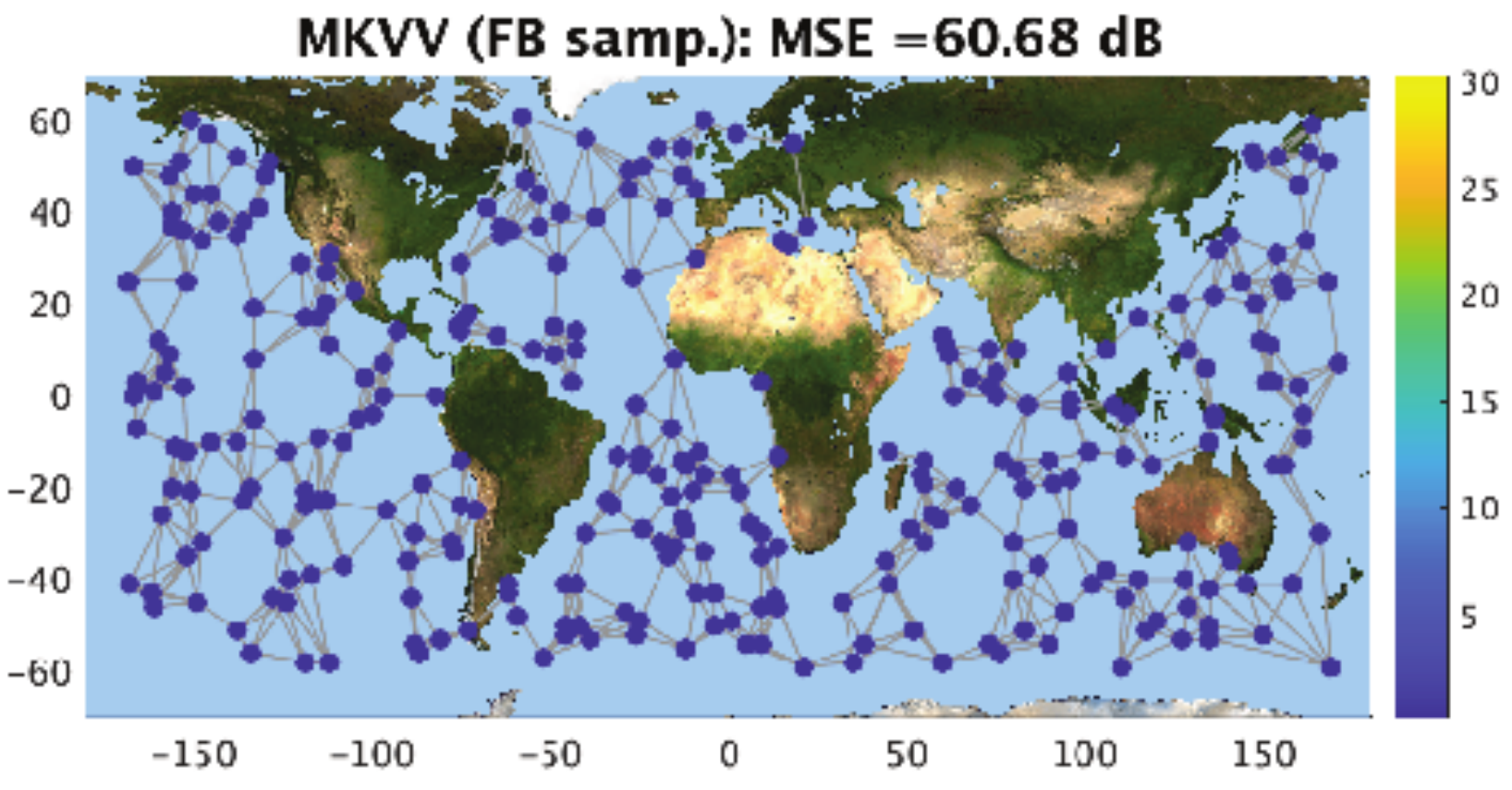} \label{ex1_unc_st_bl}}
 \subfigure[][MKVD\cite{narang_signal_2013}.]
  {\includegraphics[width=0.32\linewidth]{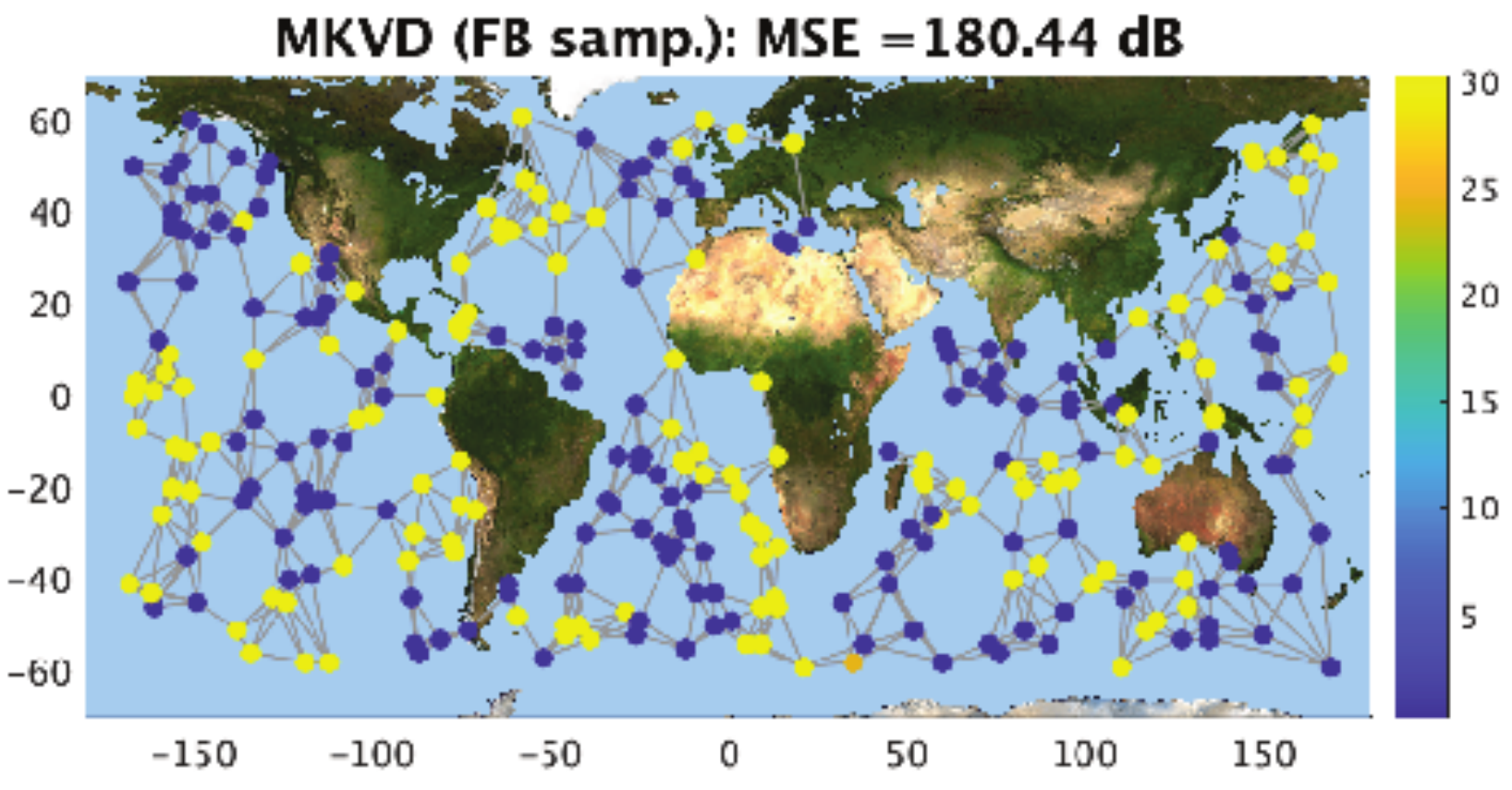} \label{ex1_con_st_bl}}
 \subfigure[][NLPD\cite{chen_discrete_2015}.]
  {\includegraphics[width=0.32\linewidth]{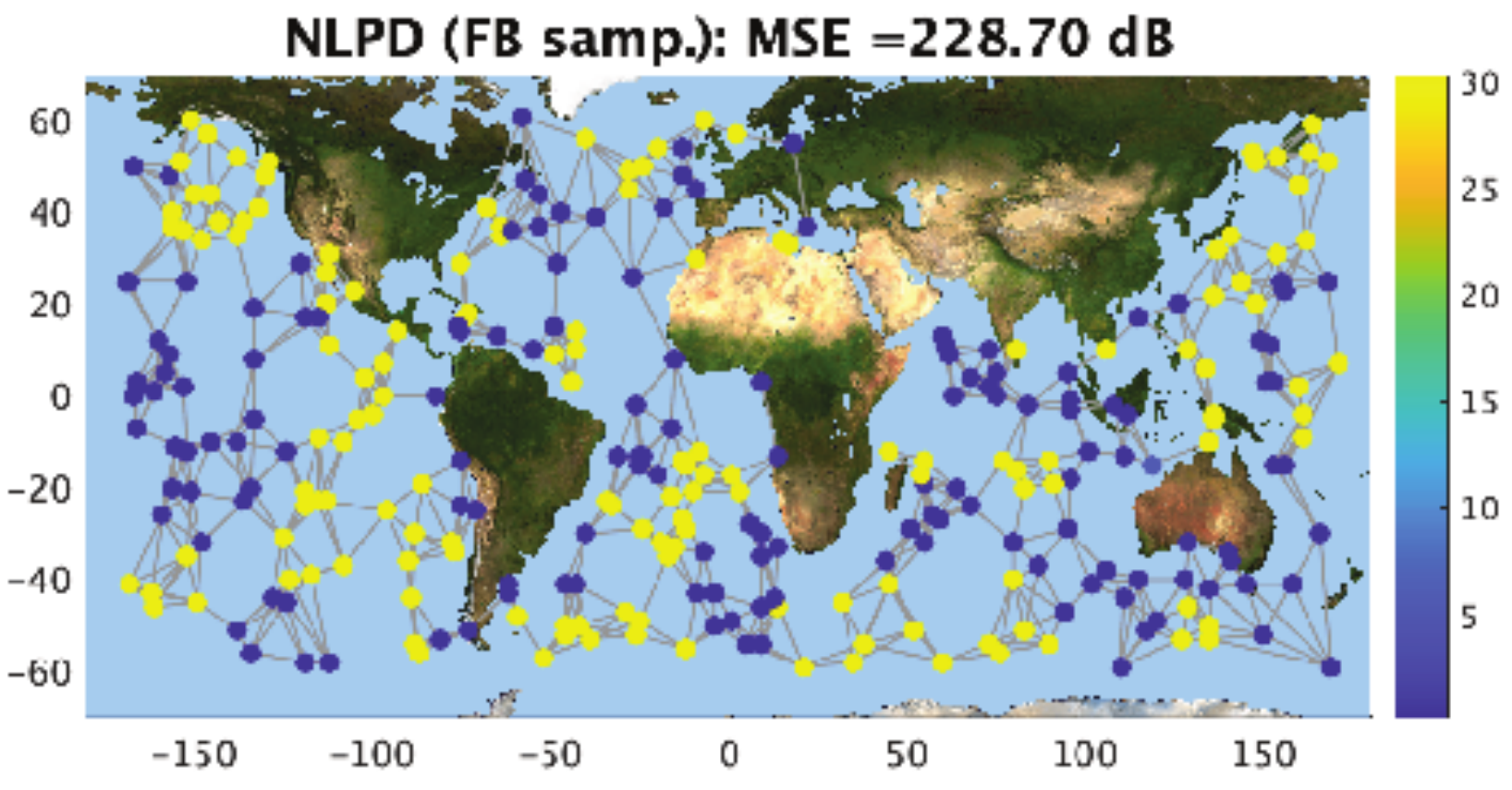} \label{ex1_unc_sm_bl}}
 \subfigure[][GSOD\cite{segarra_reconstruction_2016}.]
  {\includegraphics[width=0.32\linewidth]{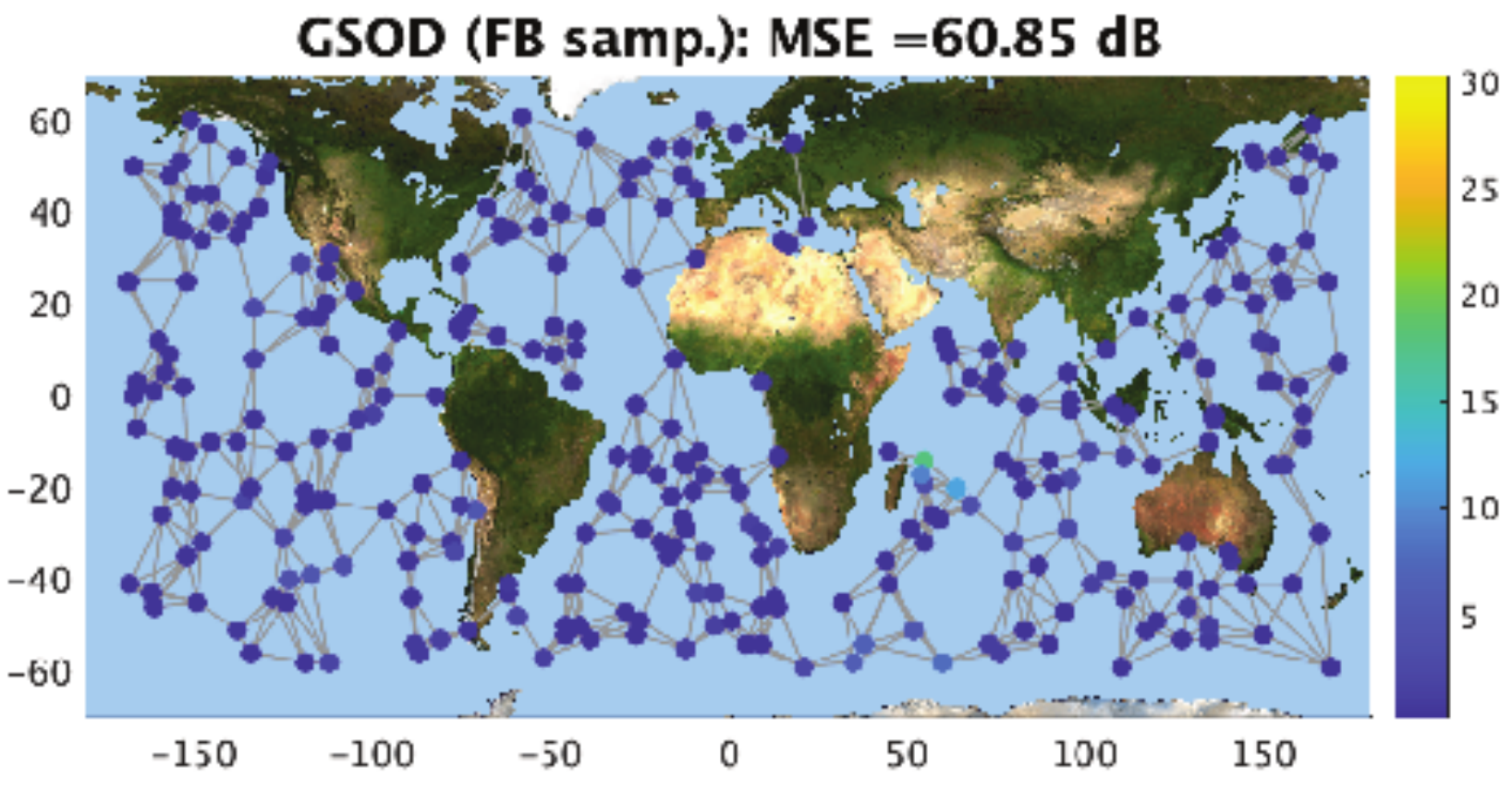} \label{ex1_con_sm_ls_bl}}
 \subfigure[][NLPI\cite{narang_localized_2013}. ]
  {\includegraphics[width=0.32\linewidth]{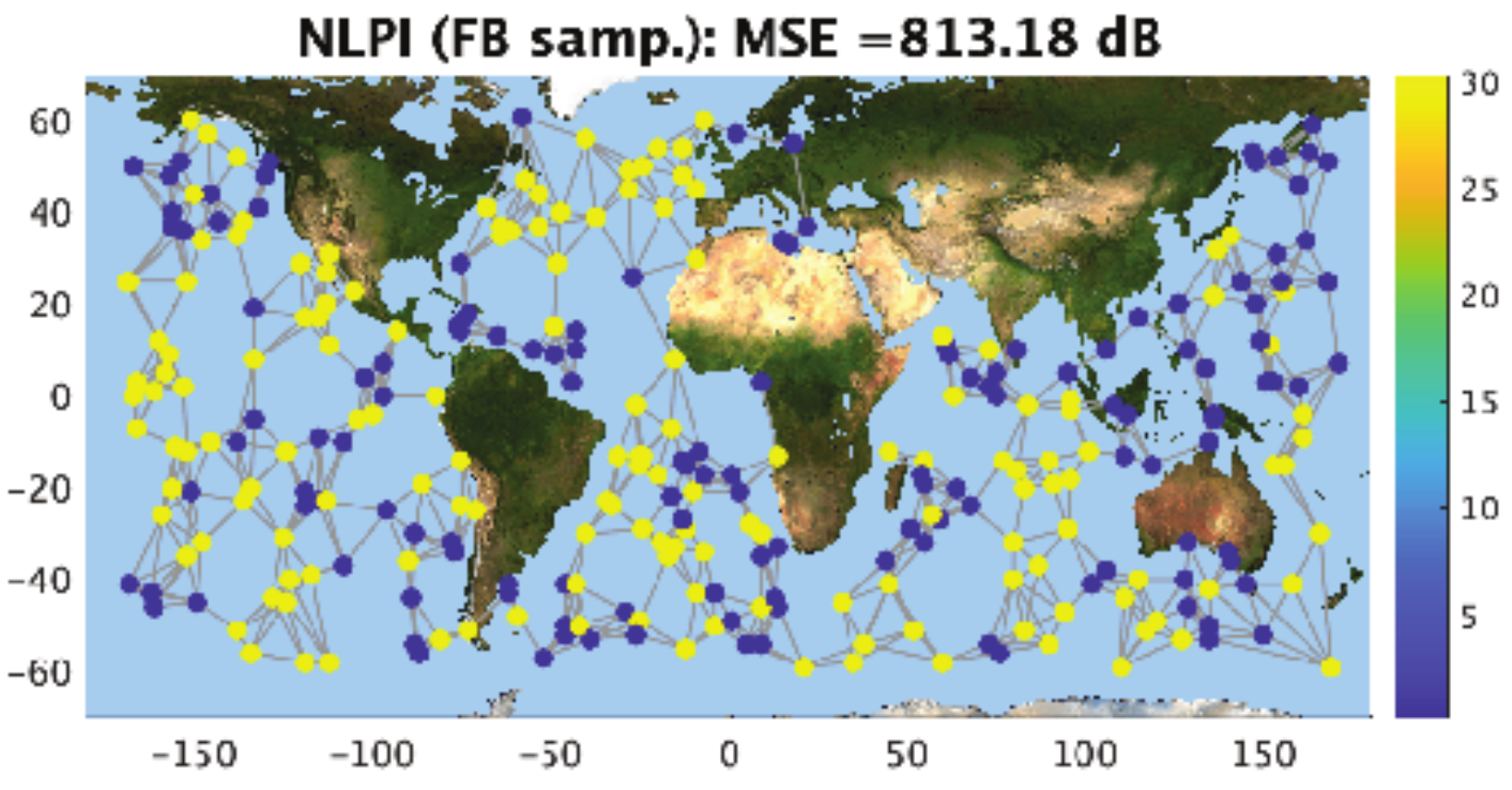} \label{ex1_con_sm_mx_bl}}
\caption{Signal recovery experiments for the sea surface temperature data on a $5$-nearest neighbor graph. Sampling is performed by \eqref{eq:sampler_fb} in the vertex domain. Abbreviations are the same as those in Fig. \ref{fig:ex1_recov}.} \label{fig:ex2_recov_sensor}
\end{figure*}

\subsection{Real-world  Data}\label{subsec:real_experiments}

We also perform signal recovery experiments on real-world data.

\subsubsection{Sampling and Recovery Setting}

We use the global sea surface temperature dataset \cite{rayner_global_2003}. It records snapshots of sea surface temperatures for every month from 2004 to 2021. They are spatially sampled at intersections of 1-degree latitude-longitude grids.
For the experiment, we used data for 24 months from 2018 to 2020.
For this dataset, we randomly sample $N=291$ intersections and they are regarded as vertices.
We then construct a $5$-nearest neighbor graph based on Euclidean distance.
Edge weights are set to be unweighted.
We then remove the edges crossing land areas.
The number of samples is set to $K=73$.

For each time instance, we estimate the covariance from the data in the previous year.
We use the method proposed in \cite{perraudin_stationary_2017}.
The estimated covariance is used for the proposed graph Wiener filter.
We calculate the average MSEs of 24 months for samplings in vertex and graph frequency domains, and compare with the existing methods in the previous subsection.
For vertex domain sampling, we perform the experiments on 20 random sampling patterns of vertices for calculating the average MSE.

\subsubsection{Results}

Table \ref{tab:ex2} summarizes the average MSEs with the standard deviations (in decibels). Recovered graph signals are visualized in Fig. \ref{fig:ex2_recov_sensor}.

From Table \ref{tab:ex2}, we can observe that the proposed method outperforms existing methods for almost all cases, as expected to the experiment on synthetic data.
This is also validated in Fig. \ref{fig:ex2_recov_sensor}.

\section{Conclusion}\label{sec:conclusion}

Generalized graph signal sampling with stochastic priors is proposed in this paper. We derive the graph Wiener filter for unconstrained and predefined reconstruction, based on minimization of the expectation of the squared norm error between the original and reconstructed graph signals. We show that, when sampling is performed in the graph frequency domain, spectral responses of the graph Wiener filter parallel those in generalized sampling for WSS signals.
We also reveal a theoretical relationship between the proposed graph Wiener filter and existing signal recovery under different priors.
In the recovery experiments, we validate the MSE improvement of the proposed methods for various graphs and two sampling domains.

\begin{appendices}

\section{Wide Sense Stationarity}\label{sec:stationarity}

Stationarity of time-domain signals is reviewed here since this is useful to understand the connection between WSS and GWSS.

We consider a continuous-time signal $x(t)$. Its stationarity is defined as follows:

\begin{definition}[Wide sense stationary for time-domain signals]\label{definition:wss}
Let $x(t)$ and $\gamma_x(t)$ be a stochastic signal in the time domain and its autocovariance function, respectively. The signal $x(t)$ is a wide sense stationary process if and only if the following two conditions are satisfied: 
\begin{enumerate}
\itemequation[cond1:wss]{}{\mathbb{E}\left[x(t)\right]=\mu_x=\mathrm{const},}
\itemequation[cond2:wss]{}{\displaystyle \mathbb{E}\left[(x(t)-\mu_x)(x(\tau)-\mu_x)^*\right]=\gamma_x(t-\tau).}
\end{enumerate}
\end{definition}

\noindent A WSS process can be characterized in the Fourier domain by Wiener-Khinchin theorem \cite{wiener_generalized_1930}. If $x(t)$ is a WSS process, then its power spectral density (PSD) function coincides with the CTFT of the autocovariance function of $x(t)$, $\gamma_x(t)\in L_1$, i.e., 
\begin{align}
\Gamma_x(\omega)=\int_{-\infty}^{\infty} \gamma_x(t)e^{-j\omega t}d\omega.
\end{align}

Not surprisingly, WSS has several equivalent expressions because of the correspondence between time-shift and frequency-modulation, i.e., $x(t-\tau)\leftrightarrow e^{-j\tau\omega}X(\omega)$. 

Here, we present another two expressions of WSS to show the connection with GWSSs.
\begin{corollary}[WSS by Shift]\label{cor:wss_shift}
Let $T_{t_0}\{\cdot\}$ be the shift operator that delays the signal by $t_0$, i.e., $T_{t_0}\{x(t)\}=x(t-t_0)$. Definition \ref{definition:wss} can be written as follows:
\begin{enumerate}
\itemequation[cond1:wss1]{}{\mathbb{E}\left[T_{t_0}\{x(t)\}\right]=\mu_x=\mathrm{const},}
\item $\displaystyle\mathbb{E}\left[(T_{t_0} \{x(t)\}-\mu_x)(T_{t_0} \{x(\tau)\}-\mu_x)^*\right]$
\nonumitemequation[cond2:wss1]{}{
=\gamma_x(t-\tau).}
\end{enumerate}

\end{corollary}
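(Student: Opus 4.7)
The plan is to establish the equivalence of Definition~\ref{definition:wss} and the shift-based conditions in Corollary~\ref{cor:wss_shift} by direct substitution of the identity $T_{t_0}\{x(t)\}=x(t-t_0)$, exploiting the fact that the lag $(t-t_0)-(\tau-t_0) = t-\tau$ is invariant under a common shift of both arguments. The entire content of the corollary is that standard WSS already encodes shift invariance, and rewriting it with $T_{t_0}$ merely makes this invariance manifest in a form analogous to what will be used later for GWSS.

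First I would prove that Definition~\ref{definition:wss} implies the shift-based form for every $t_0 \in \mathbb{R}$. Since $\mu_x$ is constant and $t-t_0$ ranges over all of $\mathbb{R}$ as $t$ does, condition~(\ref{cond1:wss}) immediately gives $\mathbb{E}[x(t-t_0)]=\mu_x$, i.e.\ (\ref{cond1:wss1}). For the covariance condition, I would apply (\ref{cond2:wss}) at the time-points $t-t_0$ and $\tau-t_0$, obtaining
\begin{align*}
\mathbb{E}\bigl[(x(t-t_0)-\mu_x)(x(\tau-t_0)-\mu_x)^*\bigr]
&= \gamma_x\bigl((t-t_0)-(\tau-t_0)\bigr) \\
&= \gamma_x(t-\tau),
\end{align*}
which is exactly (\ref{cond2:wss1}) after rewriting $x(t-t_0)$ as $T_{t_0}\{x(t)\}$.

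For the converse direction, I would specialize $t_0 = 0$ in the shift-based conditions. Then $T_0$ is the identity operator, so (\ref{cond1:wss1}) and (\ref{cond2:wss1}) collapse to (\ref{cond1:wss}) and (\ref{cond2:wss}) respectively, recovering Definition~\ref{definition:wss}.

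There is no genuine obstacle here; the only subtle point worth making explicit is that the corollary is an equivalence, not a mere implication, and that it should hold for every $t_0$ (if the conditions were only required for a single $t_0$, the forward direction would still follow trivially, but the strength of the statement comes from universality over $t_0$). This emphasis matters because the GWSS variant introduced later replaces $T_{t_0}$ with a modulation operator $\cdot \star \bm{\delta}_k$ whose parametrization by $k$ is indexed over all vertices, paralleling the role of $t_0$ here.
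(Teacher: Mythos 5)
Your proof is correct and follows essentially the same route as the paper, which simply observes that the mean condition is immediate and that the covariance condition is unchanged because the autocovariance depends only on the time difference $(t-t_0)-(\tau-t_0)=t-\tau$. Your explicit treatment of the converse via $t_0=0$ and your remark on universality over $t_0$ are just a more detailed spelling-out of the paper's one-line justification.
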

\noindent The equivalence of conditions \eqref{cond1:wss1} and \eqref{cond1:wss} is easy to verify. Since the autocovariance only depends on the time difference, \eqref{cond2:wss1} is identical to \eqref{cond2:wss}.

\begin{table*}[t]
\centering
\footnotesize
\caption{Comparison among WSS and GWSSs. OPE and COV denote operator and covariance, respectively.}\label{table:wss_gwss}
\begin{tabular}{c|A|B||c|A|B}
\hline
\multirow{6}{*}{\textbf{OPE}} & \multicolumn{2}{c||}{WSS} & \multirow6{*}{\textbf{Mean}} & \multicolumn{2}{c}{WSS} \\ \hhline{~|--||~|--}
 & \multicolumn{1}{c|}{Shift} & \multicolumn{1}{c||}{Modulation} &  & \multicolumn{1}{c|}{Shift} & \multicolumn{1}{c}{Modulation} \\ \hhline{~|--||~|--}
 & $T_{t_0}\{x(t)\}=x(t-t_0)$ & $x(t_0)*\delta_t=x(t)$ &  & $\mathbb{E}\left[T_{t_0}\{x(t)\}\right]=\mu_x$ & $\mathbb{E}\left[x(t_0)*\delta_\tau\right]=\mu_x$  \\ \hhline{~|--||~|--}
 & \multicolumn{2}{c||}{GWSS} &  & \multicolumn{2}{c}{GWSS} \\ \hhline{~|--||~|--}
 & \multicolumn{1}{c|}{GWSS$_\text{T}$} & \multicolumn{1}{c||}{GWSS$_\text{M}$} &  & \multicolumn{1}{c|}{GWSS$_\text{T}$} & \multicolumn{1}{c}{GWSS$_\text{M}$} \\ \hhline{~|--||~|--} 
 &  $\mb{T}_{\mathcal{G}}\bm{x}=\mb{U}\exp(j\mb{\Pi})\mb{U}^*\bm{x}$ & $\bm{x}\star\bm{\delta}_n=\mb{M}\text{diag}(\mb{U}^*\bm{\delta}_n)\mb{U}^*\bm{x}$  &  &$\mathbb{E}[(\mb{T}_{\mathcal{G}}\bm{x})_n]=\mu_x$  & $\mathbb{E}[(\bm{x}\star \bm{\delta}_n)_m]=\mu_x$  \\ \hhline{===::===}
\multirow{8}{*}{\textbf{COV}} & \multicolumn{2}{c||}{WSS} & \multirow{8}{*}{\textbf{PSD}} & \multicolumn{2}{c}{WSS} \\ \hhline{~|--||~|--} 
 & \multicolumn{1}{c|}{Shift} & \multicolumn{1}{c||}{Modulation} &  & \multicolumn{1}{c|}{Shift} & \multicolumn{1}{c}{Modulation} \\ \hhline{~|--||~|--}  
 &  $\displaystyle\mathbb{E}\left[T_{t_0} \{\bar{x}(t)\}T_{t_0} \{\bar{x}(\tau)\}^*\right]$ & $\displaystyle\mathbb{E}\left[(\bar{x}(t_0)*\delta_t)(\bar{x}(t_0)*\delta_\tau)^*\right]$  &  & \multicolumn{1}{D}{$\Gamma_x(\omega)=$}&\multicolumn{1}{E}{\hspace{-1em}$\int_{-\infty}^{\infty}\gamma_x(t)e^{-j\omega t}d\omega$}  \\  
 & $=\gamma_x(t-\tau)$ & $=\gamma_x(t-\tau)$ & & \multicolumn{1}{D}{}&\multicolumn{1}{E}{\hspace{-2em}always}  \\ \hhline{~|--||~|--}
 & \multicolumn{2}{c||}{GWSS} &  & \multicolumn{2}{c}{GWSS} \\ \hhline{~|--||~|--}
 & \multicolumn{1}{c|}{GWSS$_\text{T}$} & \multicolumn{1}{c||}{GWSS$_\text{M}$} &  & \multicolumn{1}{c|}{GWSS$_\text{T}$} & \multicolumn{1}{c}{GWSS$_\text{M}$} \\ \hhline{~|--||~|--}  
 & $\mathbb{E}[(\mb{T}_{\mathcal{G}}\bar{\bm{x}})_n(\mb{T}_{\mathcal{G}}\bar{\bm{x}})^*_k]$ & $\mathbb{E}[(\bar{\bm{x}}\star\bm{\delta}_n)_m(\bar{\bm{x}}\star\bm{\delta}_k)^*_m]$ &  & $\widehat{\Gamma}_x(\mb{\Lambda})=\mb{U}^*\mb{\Gamma}_x\mb{U}$ & $\widehat{\Gamma}_x(\mb{\Lambda})=\mb{U}^*\mb{\Gamma}_x\mb{U}$  \\
 & $=[\mb{\Gamma}_x]_{n,k}$ & $=[\mb{\Gamma}_x]_{n,k}$ & & if $\mb{\Lambda}$ is distinct & always\\ \hline
\end{tabular}
\\ We denote by $\delta_t=\delta(t_0-t)$, $\bar{x}(t)=x(t)-\mu_x$, $\bar{\bm{x}}=\bm{x}-\eta_x\bm{1}$, and $\mb{\Pi}=\pi\sqrt{\mb{\Lambda}/\rho_{\mathcal{G}}}$.
\end{table*}

Noting that
\begin{align}
x(\tau)
&=x(t)*\delta(t-\tau),\label{eq:shiftbydelta}
\end{align}
leads to the following corollary.
\begin{corollary}[WSS by Modulation]\label{cor:wss_localize}
Definition \ref{definition:wss} can be expressed equivalently by using $\delta(t)$ as follows:
\begin{enumerate}
\itemequation[cond1:wss2]{}{\mathbb{E}\left[x(t_0)*\delta(t_0-\tau)\right]=\mu_x=\mathrm{const},}
\item $\displaystyle\mathbb{E}\left[(x(t_0)*\delta(t_0-t)-\mu_x)(x(t_0)*\delta(t_0-\tau)-\mu_x)^*\right]$
\nonumitemequation[cond2:wss2]{}{=\gamma_x(t-\tau).}
\end{enumerate}

\end{corollary}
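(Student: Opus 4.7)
The plan is to derive the equivalence from the sifting property of the Dirac delta, which is essentially what equation \eqref{eq:shiftbydelta} already asserts: $x(\tau) = x(t) * \delta(t-\tau)$. Once this identity is in hand, Corollary \ref{cor:wss_localize} reduces to a direct rewriting of the two conditions of Definition \ref{definition:wss}; no new probabilistic content is introduced. The purpose of the reformulation is structural, since it puts WSS into a form that admits a natural graph-domain analog (Definition \ref{definition:gwss2}) in which $\delta(t-\tau)$ is replaced by $\bm{\delta}_k$ and convolution by the modulation operator $\cdot \star \bm{\delta}_k$ of \eqref{eq:modulationoprator}.

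First, I would verify the key deterministic identity pathwise. Treating $t_0$ as the convolution variable and $\tau$ as a parameter,
\[
x(t_0) * \delta(t_0 - \tau) = \int x(s)\,\delta(s-\tau)\,ds = x(\tau),
\]
by the sifting property, and analogously $x(t_0) * \delta(t_0 - t) = x(t)$. Both equalities hold for each realization of the process, so I can take expectation on either side without worrying about interchanging expectation and convolution.

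Second, I would substitute these identities into the corollary. Condition \eqref{cond1:wss2} becomes $\mathbb{E}[x(\tau)] = \mu_x$, which is \eqref{cond1:wss} after relabeling the free time variable. Condition \eqref{cond2:wss2} becomes $\mathbb{E}\bigl[(x(t)-\mu_x)(x(\tau)-\mu_x)^*\bigr] = \gamma_x(t-\tau)$, which is exactly \eqref{cond2:wss}. Since each substitution is an exact equality, the implication runs in both directions, establishing the claimed equivalence.

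The argument is essentially routine; the only subtle point is interpreting the notation $x(t_0) * \delta(t_0 - \tau)$ consistently as convolution in the variable $t_0$ with $\tau$ held as a parameter, and recognizing that this evaluates to $x(\tau)$ rather than to a shifted copy $x(t-\tau)$. I do not anticipate a genuine obstacle here; the real work has been done in setting up the right notation, and the value of the corollary lies in the parallel it creates with the graph-modulation formulation used later in Definition \ref{definition:gwss2}.
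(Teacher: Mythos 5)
Your proposal is correct and matches the paper's own justification, which consists entirely of the identity \eqref{eq:shiftbydelta} stated immediately before the corollary; you simply expand that one-line remark into the explicit substitution showing \eqref{cond1:wss2} and \eqref{cond2:wss2} reduce to \eqref{cond1:wss} and \eqref{cond2:wss}. Your care in reading $x(t_0)*\delta(t_0-\tau)$ as the sifting integral evaluating to $x(\tau)$, rather than as a shift $x(t-\tau)$, is exactly the interpretation the paper intends.
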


\noindent In the paper, we utilize these expressions of WSS for formally defining GWSS.

\section{Graph Wide Sense Stationarity}
\label{sec:stationarity_detail}
In this Appendix, we describe and compare some definitions of GWSS, including ours.
These definitions differ in whether Corollary \ref{cor:wss_shift} or \ref{cor:wss_localize} in Appendix \ref{sec:stationarity} is used for the baseline.
They coincide for time-domain signals, but this is not the case for graph signals,
leading to slightly different definitions of GWSS. 
The definitions are mainly divided according to whether the covariance is diagonalizable by the GFT basis $\mb{U}$. These differences are summarized in Table \ref{table:wss_gwss}.

\subsection{GWSS followed by Corollary \ref{cor:wss_shift}}\label{app:gwss_type1}
First, we show the definition of GWSS\cite{girault_stationary_2015} as a counterpart of Corollary \ref{cor:wss_shift}. 
In the literature of GSP, the graph Laplacian $\mb{L}$ is often referred to as a counterpart of the time-shift operator (translation operator) \cite{segarra_stationary_2017,marques_sampling_2016}. However, $\mb{L}$ changes the signal energy, i.e., $\|\mb{L}\bm{x}\|\neq\|\bm{x}\|$, while time shift does not, i.e., $\|T_\tau\{x[n]\}\|=\|x[n]\|$. 

The first definition of GWSS introduced here is based on a graph-translation operator which preserves the signal energy.

\begin{definition}[Graph Wide Sense Stationary by Translation (GWSS$_\text{T}$) \cite{girault_signal_2015}]\label{definition:gwss1}
Let $\bm{x}$ be a graph signal on a graph $\mathcal{G}$. 
Suppose that a graph-translation operator $\mb{T}_{\mathcal{G}}$ is defined by
\begin{align}
\mb{T}_{\mathcal{G}}\coloneqq \mathrm{exp}\left(j\pi\sqrt{\frac{\mb{L}}{\rho_\mathcal{G}}}\right),\label{eq:giraultgso}
\end{align}
where $\rho_\mathcal{G}=\max_{m\in\mathcal{V}}\sqrt{2d_m(d_m+\overline{d}_m)}$, $\overline{d}_m=\frac{\sum_{n=0}^{N-1}a_{mn}d_n}{d_m}$ and $\rho_\mathcal{G}\geq \lambda_{\max}$ \cite{das_extremal_2011}. 
Then, $\bm{x}$ is a graph wide sense stationary process by translation, if and only if it the following two conditions are satisfied,
\begin{enumerate}
\itemequation[cond1:gwss1]{}{\mathbb{E}\left[(\mb{T}_{\mathcal{G}}\bm{x})_n\right]=\mu_x=\mathrm{const},}
\itemequation[cond2:gwss1]{}{\mathbb{E}\left[(\mb{T}_{\mathcal{G}}\bm{x}-\mu_x\bm{1})_n(\mb{T}_{\mathcal{G}}\bm{x}-\mu_x\bm{1})_k^*\right]=[\mathbf{\Gamma}_x]_{n,k}.}
\end{enumerate}
\end{definition}

\noindent The condition \eqref{cond1:gwss1} implies that $\mathbb{E}[\bm{x}]$ is also constant. To see this, note that \eqref{cond1:gwss1} can be expressed as 
\begin{align}
\mb{T}_\mathcal{G}\mathbb{E}[\bm{x}]&=\mb{U}\exp(j\pi\sqrt{\mb{\Lambda}/\rho_\mathcal{G}})\mb{U}^*\mathbb{E}[\bm{x}].\label{cond1:gwss1_mean}
\end{align}
Since $\lambda_0=0$ and $\bm{u}_0=\bm{1}$ are always satisfied, i.e., $\mb{L}\bm{1}=0\cdot \bm{1}$, \eqref{cond1:gwss1_mean} holds if and only if 
$\mathbb{E}[\bm{x}]= \mu_x \bm{u}_{0}=\mu_x\bm{1}$. It also implies that $\mb{T}_\mathcal{G}\bm{1}=\bm{1}$.

When $\{\lambda_i\}_{i=0,\ldots,N-1}$ are distinct, the condition \eqref{cond2:gwss1} implies that $\mb{\Gamma}_x$ is diagonalizable by $\mb{U}$ because \eqref{cond2:gwss1} can be expressed as 
\begin{align}
\mb{\Gamma}_x&=\mb{T}_\mathcal{G}\mathbb{E}[(\bm{x}-\mu_x\bm{1})(\bm{x}-\mu_x\bm{1})^*]\mb{T}_\mathcal{G}^*&\nonumber\\
&=\mb{U}\exp(j\pi\sqrt{\mb{\Lambda}/\rho_\mathcal{G}})\mb{U}^*\mb{\Gamma}_x\mb{U}\exp(-j\pi\sqrt{\mb{\Lambda}/\rho_\mathcal{G}})\mb{U}^*\nonumber\\
&=\mb{U}(\widehat{\mb{\Gamma}}_x\circ \mb{\Theta})\mb{U}^*,
\label{proof:rescov2}
\end{align}
where $\widehat{\mb{\Gamma}}_x= \mb{U}^*\mb{\Gamma}_x\mb{U}$ and $[\mb{\Theta}]_{i,l}=\exp\{j\pi (\sqrt{\lambda_i/\rho_{\mathcal{G}}}-\sqrt{\lambda_l/\rho_{\mathcal{G}}})\}$. The third equality follows by the relationship $\mathrm{diag}(\bm{a})\mb{X}\mathrm{diag}(\bm{b}^*)=\mb{X}\circ \bm{a}\bm{b}^*$.
Since $[\mb{\Theta}]_{i,l}=1$ for $\lambda_i=\lambda_l$ and $[\mb{\Theta}]_{i,l}\neq 1$ otherwise, the equality holds if and only if $[\widehat{\mb{\Gamma}}_x]_{i,l}=0$ for $\lambda_i\neq\lambda_l$. If the eigenvalues are distinct, then this condition is equivalent that $\mb{\Gamma}_x$ is diagonalizable by $\mb{U}$.

\subsection{GWSS followed by Corollary \ref{cor:wss_localize}}\label{app:gwss_type2}

Next, we study the properties of Definition \ref{definition:gwss2} for GWSS used in our generalized sampling, as a counterpart of Corollary \ref{cor:wss_localize}.
In Definition \ref{definition:gwss2},
the condition in \eqref{cond1:gwss2} is identical to
\begin{align}
\mathbb{E}[\bm{x}\star \bm{\delta}_n]&=\mb{M}\mathrm{diag}(u_{0}[n],u_{1}[n],\ldots)\mb{U}^*\mathbb{E}[\bm{x}].\label{cond1:gwss2_spectrum}
\end{align}
Since $\mb{M}\mathrm{diag}(1,0,\ldots,0)=[\bm{1}\, \bm{0}\,\cdots\,\bm{0}]$, the equality holds if and only if $\mathbb{E}[\bm{x}]=\mu_x \bm{u}_{0}=\mu_x\bm{1}$. Therefore, $\cdot\star\bm{\delta}_n$ does not change the mean of graph signals. 
The condition in \eqref{cond2:gwss2} implies the covariance $\mb{\Gamma}_x$ has to be diagonalizable by $\mb{U}$. We show this fact in the following lemma.

\begin{lemma}\label{lemma:gwssm}
Let $\bm{x}$ and $\mb{\Gamma}_x$ be a stochastic signal on $\mathcal{G}$ and its covariance matrix, respectively, by Definition \ref{definition:gwss2}. Then, $\mb{\Gamma}_x$ is diagonalizable by $\mb{U}$.
\end{lemma}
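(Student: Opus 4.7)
The plan is to reduce the second condition of Definition \ref{definition:gwss2} to a commutativity statement between $\mb{\Gamma}_x$ and a unitary operator whose eigenvectors are the columns of $\mb{U}$ and whose spectrum is simple. The key preliminary observation is that, by the definition $\mb{M}=[e^{j2\pi\cdot 0/N}\bm{1},\,e^{j2\pi\cdot 1/N}\bm{1},\,\dots]$, every column of $\mb{M}$ is a scalar multiple of $\bm{1}$, so $\mb{M}\bm{v}$ is a constant-in-vertex vector for every $\bm{v}$. Consequently the entries $(\bar{\bm{x}}\star\bm{\delta}_k)_m$ (with $\bar{\bm{x}}:=\bm{x}-\mu_x\bm{1}$) do not depend on $m$; the paper already records that this common value equals $(\mb{U}\exp(j\mb{\Omega})\mb{U}^*\bar{\bm{x}})_k$.

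Setting $\mb{F}:=\mb{U}\exp(j\mb{\Omega})\mb{U}^*$, I would then rewrite condition \eqref{cond2:gwss2} as
\begin{align*}
[\mb{\Gamma}_x]_{n,k}
=\mathbb{E}\!\left[(\mb{F}\bar{\bm{x}})_n(\mb{F}\bar{\bm{x}})_k^{*}\right],
\end{align*}
valid for every $(n,k)$ (the arbitrary index $m$ has dropped out). Assembling these entries into a single matrix identity and using $\mathbb{E}[\bar{\bm{x}}\bar{\bm{x}}^{*}]=\mb{\Gamma}_x$ gives $\mb{F}\mb{\Gamma}_x\mb{F}^{*}=\mb{\Gamma}_x$. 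Since $\exp(j\mb{\Omega})$ is diagonal with unit-modulus entries, $\mb{F}$ is unitary; multiplying on the right by $\mb{F}$ therefore yields the commutator identity $\mb{F}\mb{\Gamma}_x=\mb{\Gamma}_x\mb{F}$.

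To conclude, I would invoke the standard fact that a matrix which commutes with an operator having simple spectrum must be diagonalizable in that operator's eigenbasis. The eigenvalues of $\mb{F}$ are $\{e^{j2\pi i/N}\}_{i=0}^{N-1}$, which are the $N$ distinct $N$-th roots of unity, with eigenvectors precisely the columns of $\mb{U}$. Hence any matrix commuting with $\mb{F}$—in particular $\mb{\Gamma}_x$—is diagonalizable by $\mb{U}$, which is the claim. The only delicate point is the very first step: recognizing that $\mb{M}$ has rank one in a nontrivial sense, so that the apparent $m$-dependence in \eqref{cond2:gwss2} is vacuous and the condition can be lifted to a clean matrix identity in $n$ and $k$; once that reduction is in hand, the spectral-commutation argument is routine and requires no assumption on the multiplicity of the eigenvalues of $\mb{L}$.
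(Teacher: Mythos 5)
Your argument is correct and is essentially the paper's proof in different packaging: the paper computes the left-hand side of \eqref{cond2:gwss2} entrywise and arrives at $\mb{U}(\widehat{\mb{\Gamma}}_x\circ\mb{\Xi})\mb{U}^*=\mb{U}\widehat{\mb{\Gamma}}_x\mb{U}^*$ with $[\mb{\Xi}]_{i,l}=e^{-j2\pi(i-l)/N}$, which is exactly your commutation identity $\mb{F}\mb{\Gamma}_x\mb{F}^*=\mb{\Gamma}_x$ written in the $\mb{U}$ basis, and both proofs close by the same observation that the phases $e^{j2\pi i/N}$ are pairwise distinct. Your up-front remark that $\mb{M}$ has constant columns (so the index $m$ is vacuous) and your appeal to the simple-spectrum commutant fact are clean ways of stating what the paper does implicitly, and no step is missing.
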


\begin{proof}\label{proof:gwssm}
The LHS of \eqref{cond2:gwss2} is expressed as
\begin{align}
&\mathbb{E}[(\bm{x}\star \bm{\delta}_n-\mu_x\bm{1})_m(\bm{x}\star \bm{\delta}_k-\mu_x\bm{1})_m^*]\nonumber\\
&=[\mb{M}\mathrm{diag}(u_{0}[n],u_{1}[n],\ldots)\widehat{\mb{\Gamma}}_x\mathrm{diag}(u^*_{0}[k],u^*_{1}[k],\ldots)\mb{M}^*]_{m,m}\nonumber\\
&=\bm{\delta}_m^*\mb{M}\mathrm{diag}(\mb{U}^*\bm{\delta}_n)\widehat{\mb{\Gamma}}_x\mathrm{diag}(\mb{U}^*\bm{\delta}_k)\mb{M}^*\bm{\delta}_m\nonumber\\
&=\bm{\delta}_n^*\mb{U}\mathrm{diag}(\mb{M}^*\bm{\delta}_m)\widehat{\mb{\Gamma}}_x\mathrm{diag}(\mb{M}^*\bm{\delta}_m)\mb{U}^*\bm{\delta}_k\nonumber\\
&=[\mb{U}(\widehat{\mb{\Gamma}}_x\circ \mb{M}^*\bm{\delta}_m\bm{\delta}_m^*\mb{M})\mb{U}^*]_{n,k}\nonumber\\
&=[\mb{U}(\widehat{\mb{\Gamma}}_x\circ \mb{\Xi})\mb{U}^*]_{n,k},\label{proof:rescov1}
\end{align}
where $\widehat{\mb{\Gamma}}_x = \mathbb{E}[\mb{U}^*(\bm{x}-\mu_x\bm{1})(\bm{x}-\mu_x\bm{1})^*\mb{U}]$ and $[\mb{\Xi}]_{i,l}=[\mb{M}^*\bm{\delta}_m\bm{\delta}_m^*\mb{M}]_{i,l}=\exp(-j2\pi (i-l)/N)$. 
Since $[\mb{\Xi}]_{i,l}=1$ for $i=l$ and $[\mb{\Xi}]_{i,l}\neq 1$ otherwise, the condition satisfying \eqref{proof:rescov1}, i.e., $\widehat{\mb{\Gamma}}_x=\widehat{\mb{\Gamma}}_x\circ\mb{\Xi}$, holds if and only if $\widehat{\mb{\Gamma}}_x$ is diagonal. Therefore, $\mb{\Gamma}_x$ is diagonalizable by $\mb{U}$, which completes the proof.
\end{proof}

\subsection{Relationship among GWSS definitions}\label{app:relation_gwss}

We now discuss the relationship among some representative definitions of GWSS from the viewpoint of the PSD. These results are summarized in Table \ref{app:char_psd}.
Existing definitions of GWSS are defined as counterparts of the classical WSS definitions in Corollaries \ref{cor:wss_shift} and \ref{cor:wss_localize}.
In terms of diagonalizability, GWSS$_{\text{M}}$ (the counterpart of Corollary \ref{cor:wss_shift}) is stricter than GWSS$_{\text{T}}$ (the counterpart of Corollary \ref{cor:wss_localize}).
This is because the covariance is always diagonalizable by $\mb{U}$ in GWSS$_{\text{M}}$, while that is not the case with GWSS$_{\text{T}}$ in general.

Next, we compare $\mb{\Theta}$ in \eqref{proof:rescov2} with $\mb{\Xi}$ in \eqref{proof:rescov1}. In \eqref{proof:rescov1}, off-diagonal entries in $\mb{\Xi}$ are not equal to 1. In contrast, 
those in $\mb{\Theta}$ can take the value 1 in the case $\lambda_i=\lambda_l$ for $i\neq l$, i.e., eigenvalues with multiplicity greater than 1. Therefore, GWSS$_\text{T}$ allows the existence of non-zero off-diagonal elements of $\widehat{\mb{\Gamma}}_x$ for some graphs having repeated eigenvalues, while GWSS$_\text{M}$ always yields the diagonal $\widehat{\mb{\Gamma}}_x$ regardless of graphs.
In fact, GWSS$_{\text{M}}$ and GWSS$_{\text{T}}$ coincide with each other if all eigenvalues of $\mb{L}$ are distinct (cf. Lemma \ref{lemma:gwssm}).
Therefore, GWSS$_{\text{T}}$ effectively assumes distinct eigenvalues of $\mb{L}$ in \cite{girault_stationary_2015}.

It is often assumed that $\mb{\Gamma}_x$ is a polynomial in $\mb{L}$ \cite{perraudin_stationary_2017,segarra_stationary_2017}\footnote{In this paper, we simplify a counterpart of time-shift by means of $\mb{L}$ for simplicity. Nevertheless, we can easily extend the proposed GWSS for other graph variation operators, including $\mb{A}$ and $\bm{\mathcal{L}}\coloneqq\mb{I}-\mb{D}^{-1/2}\mb{A}\mb{D}^{-1/2}$.}. This is  sufficient for the diagonalizability of $\mb{\Gamma}_x$.
It is noteworthy that the polynomial assumption is equivalent to $\widehat{\Gamma}_x(\lambda_i)=\widehat{\Gamma}_x(\lambda_l)$ for all $\lambda_i=\lambda_l$\cite{perraudin_stationary_2017}.
This is a special case of GWSS$_\text{M}$, since GWSS$_\text{M}$ allows for  $\widehat{\Gamma}_x(\lambda_i)\neq\widehat{\Gamma}_x(\lambda_l)$ for any $\lambda_i=\lambda_l$.
As a result, GWSS$_\text{M}$ is a good compromise between applicability and a rigorous relationship to the WSS definition.
Therefore, we use GWSS$_\text{M}$ as our GWSS definition.

\begin{table}[!t]
\setlength{\tabcolsep}{0.33em}
\centering
\caption{Diagonalizability of the covariance for GWSS definitions.
If all eigenvalues in $\mb{\Lambda}$ are distinct, they coincide with each other.}\label{app:char_psd}
\begin{tabular}{c|c}
\hline
\begin{tabular}[c]{@{}c@{}}$\bm{\Gamma}_x$ is not necessarily\\  diagonalizable by $\mathbf{U}$\end{tabular} &GWSS$_\text{T}$\cite{girault_stationary_2015}  \\ \hline
  $\bm{\Gamma}_x$ is diagonalizable by $\mathbf{U}$  & GWSS$_\text{M}$ \\\hline
  $\bm{\Gamma}_x$ is  a polynomial in $\mathbf{L}$ & \cite{perraudin_stationary_2017,segarra_stationary_2017} \\ \hline
\end{tabular}
\end{table}

\end{appendices}

\bibliographystyle{ieeetr}
\label{sec:refs}
\bibliography{gen_graph_samp_jurnal_cleaned} 

\end{document}